\newtheorem{thm}{Theorem}[section]
\newtheorem{lem}[thm]{Lemma}
\newtheorem{prop}[thm]{Proposition}
\theoremstyle{definition}
\theoremstyle{definition}
\newtheorem{defn}[thm]{Definition}
\theoremstyle{remark}
\newtheorem{rem}[thm]{Remark}
\numberwithin{equation}{section}
\newcommand{\Rmnum}[1]{\expandafter\@slowromancap\romannumeral #1@}
\newcommand\restr[2]{{% we make the whole thing an ordinary symbol
  \left.\kern-\nulldelimiterspace % automatically resize the bar with \right
  #1 % the function
 % \vphantom{\big|} % pretend it's a little taller at normal size
  \right|_{#2} % this is the delimiter
  }}
\begin{document}

\title{Keynesian chaos revisited: odd period cycles and ergodic properties}
\author{Tomohiro Uchiyama\\
Faculty of International Liberal Arts, Soka University,\\ 
1-236 Tangi-machi, Hachioji-shi, Tokyo 192-8577, Japan\\
\texttt{Email:t.uchiyama2170@gmail.com}}
\date{}
\maketitle 

%\begin{abstract}
%In this paper, we study two standard (Keynesian) dynamic macroeconomic models (one is piecewise linear and the other is nonlinear). Our purpose is twofold: (1)~For each model, we give a complete characterisation for the existence of a topological chaos (of the GDP levels), (2)~Even if a chaos exists, using ergodic theory, we show that it is possible to predict the future GDP levels "on average". This paper gives a new application of a celebrated result in ergodic theory by A. Avila (2014 fields medalist). We believe that our method/strategy in this paper is generic enough to be used to analyse many other (seemingly untractable) chaotic economic models.    
%\end{abstract}

%\noindent \textbf{Keywords:} macroeconomics, dynamics, chaos, odd period cycles, ergodic theory\\
%\noindent JEL classification: E12, E32, C63
\section{Introduction}\label{Introduction}

In this paper, we consider the (somewhat simplified) standard fixed-price macroeconomic dynamics of Mezler-Modigliani-Samuelson type generated by
\begin{equation}\label{dyn1}
Y_{t+1} = C(Y_t,r_t) + \mu I(Y_t,r_t). 
\end{equation}
Here, $Y_t\geq 0$ is the GDP level at time $t\in \mathbb{N}$, $r_t>0$ is the interest rate at time $t$, $C$ is a consumption function, $I$ is an investment function, and $\mu\geq 0$ is a constant that measures the strength of "induced investment" (in the sense of~\cite{DayShafer-Keynesian-Macro}). We assume no government or export/import. 

In a classical paper~\cite{DayShafer-Keynesian-Macro}, it was shown that if $\mu$ is large enough, Equation (\ref{dyn1}) exhibits a chaotic behaviour using the famous "Period three implies chaos" by Li-Yorke~\cite{LiYork-PeriodThree-Month}. This result of Li-Yorke gives a \emph{sufficient condition} for the existence of a \emph{Li-Yorke chaos} (see Definition~\ref{Li-Yorke}) and has been used a lot (possibly overused) in economic literature, see~\cite{BenhabibDay-Erratic-Letters},~\cite{Benhabib-Erratic-JEDC},~\cite{DayShafer-Keynesian-Macro}, and~\cite{NishimuraYano-Prog-soliton} for example. 

In the first part of this paper (Sections 1-4), we strengthen~\cite{DayShafer-Keynesian-Macro} by giving a \emph{necessary and sufficient} condition for the existence of a \emph{topological chaos} (or a \emph{turbulence}), see Definition~\ref{topChaos}, and see Theorems~\ref{main} and~\ref{main2} for precise statements. The point is that a topological chaos (or more precisely an odd period cycle) is more general than a (very special) period three cycle. This paper supports Mitra's attempt in~\cite{Mitra-topChaos-JET} to shift the current (too much) focus on Li-Yorke chaos (and too much reliance on Li-Yorke theorem) in economic literature to topological chaos (or topological entropies). See~\cite[Sec.~1]{Deng-TopChaos-JET} for more on this. Also see~\cite{Uchiyama-growth-ETB},~\cite{Uchiyama-OLG-OR}, and~\cite{Uchiyama-ergodic-arX} for characterisations for the existence of a topological chaos and their applications in different contexts.  

In the second part of this paper (Sections 5-7), we update~\cite{DayShafer-Ergodic-Macro}. In particular, using recent results on ergodic theory~\cite{Lyubich-forty-Modern},~\cite{Shen:2014} and some numerical argument, we show that even if a chaos exists, we can still predict the future GDPs "on average" (in the sense of ergodic theory, see Section 5 for a precise meaning of this). Here, we stress that a celebrated result by A. Avila (2014 fields medalist, see Proposition~\ref{Avila}) and others~\cite{Avila-unimodal-Annals},~\cite{Avila-stochastic-Invent} supports our theoretical/numerical argument for our "nonlinear model", see Section~6 for details. One of our purposes of this paper is to inject this deep mathematics into economic literature (which is not done yet elsewhere). 

Here is the structure of the rest of the paper. In Section~2, we set the stage for our argument. In particular, after giving all the necessary definitions, we explain our two models ("piecewise linear model" and "nonlinear model"). Then, in Section~3, we state our first main results (Theorems~\ref{main} and~\ref{main2}), that are the complete characterisations for the existence of (topological) chaos for our two models. In Section~4, we give proofs of our first main results. After that, in Sections~5 and~6, we investigate ergodic properties for our nonlinear model. In particular, in Section~5, after giving a short overview of (a tiny bit of) ergodic theory, we conduct some numerical calculations and show that we can predict the future GDP levels on average (Theorem~\ref{ergThm1-1}). In Section~6, we conduct a sensitivity analysis showing that how the future GDP levels vary when some parameter in the model changes (Theorem~\ref{finalThm}). In Section~7, we conduct a similar sensitivity analysis for our piecewise linear model (using a totally different mathematical technique) and obtain Theorem~\ref{iterativeThm}. 

All programming files used for numerical calculations and for generating plots in this paper (Jupyter notebooks) are available upon request.

\section{Preliminaries}\label{Preliminaries}
\subsection{Li-Yorke chaos vs topological chaos}\label{LiYorkeTopChaos}

Here, we clarify what we mean by "chaos" since there are several definitions of chaos in the literature, see~\cite{Ruette-book}. Along the way, we give all the necessary definitions used in this paper. The following definitions are taken from~\cite[Def.~5.1]{Ruette-book} and~\cite[Chap.~\Rmnum{2}]{Block-book}. Let $g$ be a continuous map of a closed interval $I$ into itself: 

\begin{defn}\label{Li-Yorke}
We say that $g$ exhibits a \emph{Li-Yorke chaos} if there exists an uncountable \emph{scrambled set} $S\subset I$, that is, for any $x,y\in S$ we have 
\begin{equation*}
\limsup_{n\rightarrow \infty}\mid g^n(x)-g^n(y)\mid > 0 \textup{ and } \liminf_{n\rightarrow \infty}\mid g^n(x)-g^n(y)\mid=0,
\end{equation*}
and for $x\in S$ and $y$ being a periodic point of $g$,
\begin{equation*}
\limsup_{n\rightarrow \infty}\mid g^n(x)-g^n(y)\mid > 0.
\end{equation*}
\end{defn}

\begin{defn}\label{topChaos}
We call $g$ \emph{turbulent} if there exist three points, $x_1$, $x_2$, and $x_3$ in $I$ such that $g(x_2)=g(x_1)=x_1$ and $g(x_3)=x_2$ with either $x_1<x_3<x_2$ or $x_2<x_3<x_1$. Moreover, we call $g$ \emph{(topologically) chaotic} if some iterate of $g$ is turbulent.  
\end{defn}
It is known that a map $g$ is topologically chaotic if and only if $g$ has a periodic point whose period is not a power of $2$, see~\cite[Chap.~\Rmnum{2}]{Block-book}. (So, in particular, if $g$ has an odd period cycle, then $g$ is topologically chaotic.) Note that this implies that a map $g$ is topologically chaotic if and only if the topological entropy of $g$ is positive, see~\cite[Chap.~\Rmnum{8}]{Block-book}. See~\cite{Ruette-book} for more characterisations and (subtle) mutual relations of various kinds of chaos (and of entropies). It is well-known that if $g$ has an odd period cycle, $g$ exhibits a Li-Yorke chaos, see~\cite[Fig.~0.1]{Ruette-book}.

\subsection{On the choice of our models}\label{OnChoiceModels}
Two models in this paper (as given below) are standard but very specific. Here are two main reasons why we have picked such specific models (rather than more generic models as in~\cite{DayShafer-Keynesian-Macro}). 

First, as we show in Sections~3 and~4, that obtaining a necessary and sufficient condition (for the existence of a topological chaos) is much more involved than obtaining a mere sufficient condition (for the existence of a Li-Yorke chaos as done in~\cite{DayShafer-Keynesian-Macro}). So if we use generic models, the result would look complicated and somewhat vague. Therefore, we find it better to use standard but specific models (with sharp results) to exhibit our method. In this paper, we have used two simplest possible standard models to obtain interesting (chaotic and ergodic) results. A similar analysis can be done with more complicated (and possibly more realistic) models. 

Second, our purpose of this paper is for applications (not for the theory itself): for example, by the concrete (algebraic) characterisations (for the existence of topological chaos) in Theorems~\ref{main} and~\ref{main2}, we were able to conduct sensitivity analyses, that are, how each parameter in the models affects our results. Also, for the second part of the paper (ergodic part), we needed some very specific properties of the functions defining the dynamics (to apply some delicate ergodic theory). Using very generic models as in~\cite{DayShafer-Keynesian-Macro} (although they are beautiful on their own), these sorts of detailed analyses (for applications) were not possible. 

Finally, we point out the reason why we analyse two different models in this single paper. This is because we would like to show a sharp contrast between these two models in the second part of the paper, that is, each model requires a distinct technique from ergodic theory to establish its future predictability. See~Sections~5, 6 and~7 for details. 

\subsection{Our models}\label{OurModels}
Let $M_1=k Y$ be the transactions demand for money where $k$ is a positive constant. We write $M_2$ for the liquidity preference and we assume that $M_2=\frac{\lambda}{r}$ where $\lambda$ is a positive constant and $r>0$. Given a fixed money supply $M$, the money market clearance condition gives 
\begin{equation}\label{moneyMarket}
M = M_1 + M_2 = k Y + \lambda/r.
\end{equation} 
In this paper, we normalise $M=1$ to simplify the exposition. Now, from Equation (\ref{moneyMarket}), we obtain $r=\frac{\lambda}{1-k Y}$ (the LM curve). Since we assumed $r>0$ (and $Y\geq 0$ as usual), the last equation forces $0 \leq Y < \frac{1}{k}$ (and thus $r\geq\lambda$). 

Substituting $r=\frac{\lambda}{1-k Y}$ into Equation (\ref{dyn1}) and further assume $C(Y,r)=\beta Y$ for some $0<\beta<1$, we obtain the main equation we consider in this paper:
\begin{equation}\label{dyn2}
Y_{t+1} = \beta Y_t + \mu I\left(Y_t, \frac{\lambda}{1-k Y_t}\right) \textup{ where } 0\leq Y_t < 1/k.
\end{equation}
 
From Equation (\ref{dyn2}), we see that our model becomes piecewise linear or nonlinear if the corresponding investment function becomes piecewise linear or nonlinear respectively. First, in Section~\ref{OPCNonlinear}, we let $I(Y,r) = \frac{\alpha Y}{r}$ for some $0<\alpha<1$ (higher GDP levels induce more investments). Then, setting $\delta:= \frac{\mu\alpha}{\lambda}$ in Equation (\ref{dyn2}), we obtain the equation for our first model ("nonlinear model"):
\begin{equation}\label{dyn3}
Y_{t+1} =f(Y_t):= \beta Y_t  + \delta Y_t (1-k Y_t) \textup{ where } 0\leq Y_t < 1/k.
\end{equation}
%A typical graph of Equation (\ref{dyn2}) is of upside-down U shape.

%\begin{figure}[h!]
%	\begin{center}
%    	\includegraphics[scale=0.5]{}
%	\end{center}
%    \caption{The graph of $f$ with $\beta=0.1$, $\delta=3.7$, and $k=1.1$}\label{fig1}
%\end{figure}

Second, in Section~\ref{OPCPiecewise}, following~\cite{DayShafer-Keynesian-Macro}, we let 
\begin{equation}\label{DayShaferInvest}
I(Y, r) =
\begin{cases}
	\frac{r''-r}{r} & (0\leq r \leq r'')\\
         0 & (r'' < r).
\end{cases} 
\end{equation}
This investment function assumes two things: 1.~There exists an interest rate $r''$ such that if the interest rate is more than $r''$, then the investment level becomes zero, 2.~The GDP level has no direct effect on investment (as opposed to our "nonlinear model"). Setting $r=r''$ in $r=\frac{\lambda}{1-k Y}$ (the LM curve) and solving for $Y$, we obtain $Y=\frac{r''-\lambda}{r'' k}$. Now set $Y_0:=\frac{r''-\lambda}{r'' k}$. Substituting Equation (\ref{DayShaferInvest}) into Equation (\ref{dyn2}), after some simplification, we obtain the equation for our second model ("piecewise linear model"):
\begin{equation}\label{DayShaferDyn}
Y_{t+1} = h(Y_t) =
\begin{cases}
	\left(\beta - \frac{\mu k}{1-k Y_0}\right) Y_t + \frac{\mu}{1-k Y_0} - \mu & (0 \leq Y_t \leq Y_0)\\
	\beta Y_t & (Y_0 < Y_t < \frac{1}{k}).
\end{cases}
\end{equation} 
It is easy to check that $h$ is continuous. In the rest of the paper, we assume $0<Y_0$ (that is equivalent to $r''>\lambda$) to avoid a degenerate uninteresting (no chaos) case.  

\section{First main results}\label{FirstMainResults}
\subsection{Odd period cycles in the nonlinear model}\label{OPCNonlinear}

Before stating first main results (Theorems~\ref{main} and~\ref{main2}), we need a bit more preparation. First, we recall the following mathematical result characterising the existence of a topological chaos for a unimodal interval map~\cite[Thms.~2 and~3]{Deng-TopChaos-JET}, see Proposition~\ref{ChaosThm} below. Our Theorem~\ref{main} is a (direct but highly non-trivial as seen in Section~\ref{SecProofMain}) consequence (or a special case) of Proposition~\ref{ChaosThm}. Let $\mathfrak{G}$ be the set of continuous maps from a closed interval $[a, b]$ to itself so that an arbitrary element $g\in \mathfrak{G}$ satisfies the following two properties:
\begin{enumerate}
\item{there exists $m\in (a,b)$ with the map $g$ strictly increasing on $[a,m]$ and strictly decreasing on $[m,b]$ ($g$ has the unique maximum at $m$.)}
\item{$g(a)\geq a$, $g(b) < b$, and $g(x)>x$ for all $x\in (a,m]$.}
\end{enumerate}
For $g\in \mathfrak{G}$, let $\Pi:=\{x\in [m,b]\mid g(x)\in [m,b] \textup{ and } g^2(x)=x\}$. Now we are ready to state:
\begin{prop}\label{ChaosThm}
Let $g\in \mathfrak{G}$. The map $g$ has an odd-period cycle if and only if $g^2(m) < m$ and $g^3(m) < \max\{x\in \Pi\}$ and the second iterate $g^2$ is turbulent if and only if $g^2(m) < m$ and $g^3(m) \leq \min\{x\in \Pi\}$. 
\end{prop}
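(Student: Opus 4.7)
My plan is to reduce the problem to the geometry of the forward orbit of the critical point, $c_n := g^n(m)$, and of the graph of $F := g^2$ restricted to the right lobe $[m, c_1]$, where $c_1 = g(m)$ is the global maximum. First I would set up the structure of $\Pi$: hypothesis~(2) forces a unique fixed point $q$ of $g$ in $(m, c_1)$, and since $g$ is strictly decreasing on $[m, c_1]$, every element of $\Pi$ is either $q$ itself or belongs to a genuine $g$-$2$-cycle $\{p, p'\}$ with $m < p < q < p' < c_1$; in the latter case $\min\Pi = p$ and $\max\Pi = p'$, otherwise $\Pi = \{q\}$. The idea is then to combine the two-interval ``horseshoe'' characterisation of turbulence with Sharkovskii's theorem and a standard intermediate-value extraction of period-$3$ cycles.

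For the \emph{sufficiency} direction of~(b), I would assume $c_2 < m$ and $c_3 \leq p := \min\Pi$, and construct the horseshoe explicitly. The inequality $c_2 < m$ together with the intermediate value theorem applied to $g|_{[m, c_1]}$ produces a unique $m^\star \in (m, c_1)$ with $g(m^\star) = m$. A direct composition argument shows that $F$ is increasing on $[m, m^\star]$ from $F(m) = c_2$ to $F(m^\star) = c_1$ and decreasing on $[m^\star, c_1]$ from $c_1$ to $F(c_1) = c_3$. Setting $J_1 := [p, m^\star]$ and $J_2 := [m^\star, c_1]$, the fixed-point relation $F(p) = p$ together with monotonicity gives $F(J_1) = [p, c_1] \supseteq J_1 \cup J_2$, while the hypothesis $c_3 \leq p$ gives $F(J_2) = [c_3, c_1] \supseteq J_1 \cup J_2$. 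This is the horseshoe witnessing turbulence of $F$. The sufficiency in~(a) would then follow from a weaker horseshoe for $F$ produced by the strict inequality $c_3 < \max\Pi$, followed by an intermediate-value argument on $g^3 - \mathrm{id}$ extracting $y$ with $g^3(y) = y$ but $g(y) \neq y$; Sharkovskii's theorem then supplies all odd periods.

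For the \emph{necessity} direction, I would first dispatch the case $c_2 \geq m$: then $[m, c_1]$ is forward-invariant under $g$, which acts on it as a decreasing homeomorphism, so its periodic points all have period at most~$2$, ruling out both odd cycles of $g$ and turbulence of $g^2$. Otherwise $c_2 < m$, and if $c_3 > \max\Pi$ (respectively $c_3 > \min\Pi$), then the forward orbit of $m$ is confined above the period-$2$ locus, and renormalisation of $g^2$ onto the invariant subintervals separated by elements of $\Pi$ produces a lower-complexity factor that contradicts respectively the existence of an odd cycle of $g$ or the turbulence of $g^2$.

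The hard part will be the borderline regime of part~(a) in which $\min\Pi \leq c_3 < \max\Pi$, where $g^2$ is already turbulent but an odd cycle of $g$ must still be produced. My plan is to compare the kneading invariant of $m$ with the period-$3$ kneading sequence in the Milnor--Thurston parity order: the strict inequality $c_3 < \max\Pi$ is precisely what pushes the kneading sequence of $m$ past the period-$3$ threshold, yielding the cycle. A secondary delicate point is the boundary case $c_3 = \min\Pi$ of part~(b), where the horseshoe degenerates at its left endpoint; this is accessible by a limiting argument using continuity of topological entropy, but it is where the combinatorial bookkeeping becomes most intricate.
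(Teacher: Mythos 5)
A preliminary remark: the paper does not prove this proposition at all --- it is quoted from Deng et al.\ (cited as Thms.~2 and~3 of that reference) --- so your plan can only be judged on its own terms, and on those terms it has two genuine problems beyond the parts that do work (your structural analysis of $\Pi$, the case $g^2(m)\ge m$, and the explicit $g^2$-horseshoe under $g^3(m)\le\min\Pi$ are all fine, modulo quoting the standard equivalence between the two-interval covering condition and the paper's three-point definition of turbulence). The first problem is your sufficiency argument for the odd-cycle half: you propose to extract a point $y$ with $g^3(y)=y$, $g(y)\neq y$ and then invoke Sharkovskii. The hypotheses $g^2(m)<m$, $g^3(m)<\max\Pi$ simply do not imply the existence of a period-$3$ orbit: the quadratic map $g(x)=rx(1-x)$ on $[0,1]$ with $r=3.74$ lies in $\mathfrak{G}$, satisfies $g^2(1/2)\approx 0.23<1/2$ and $g^3(1/2)\approx 0.66<q=\max\Pi\approx 0.73$, and has a period-$5$ orbit but no period-$3$ point whatsoever (period $3$ first appears at $r=1+2\sqrt2$). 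So no intermediate-value argument can produce your $y$; Sharkovskii runs the wrong way here, and any correct proof must produce an odd period that may be large (e.g.\ an odd-length loop of covering intervals), not period $3$. The same confusion infects your ``period-$3$ kneading threshold'' plan for the borderline regime.

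The second problem is worse: the borderline regime $\min\Pi\le g^3(m)<\max\Pi$, which you rightly identify as the hard part, is precisely where the statement as printed fails, so no argument can close it. Already the classical fact that an odd cycle of period $>1$ forces $g^2$ to be turbulent (Block--Coppel, via the \v{S}tefan cycle of least odd period) is incompatible with the printed inequalities whenever $\min\Pi<g^3(m)<\max\Pi$ is realizable, and it is realizable: take $[a,b]=[0,10]$, $m=5$, $g$ piecewise linear through $(0,1),(3,7),(5,10)$ on the left lobe and through $(5,10),(6,8),(7,6.9),(8,6),(10,3)$ on the right lobe. Then $g\in\mathfrak{G}$, $\Pi=\{6,\,q,\,8\}$ with $q\approx 6.95$, $g^2(5)=3<5$ and $g^3(5)=7\in(\min\Pi,\max\Pi)$. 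Every periodic orbit of this $g$ either lies in the invariant interval $[6,8]$, where $g$ is strictly decreasing and hence carries only periods $1$ and $2$, or stays in $[3,6)\cup(8,10]$ and strictly alternates between these two disjoint pieces, hence has even period; so $g$ has no odd cycle although $g^3(m)<\max\Pi$. At the same time $J=[8,26/3]$ and $K=[26/3,86/9]$ satisfy $g^2(J)=g^2(K)=[8,10]\supseteq J\cup K$, so $g^2$ is turbulent although $g^3(m)>\min\Pi$: both biconditionals of the statement fail. (Note that your own horseshoe goes through verbatim with $\min\Pi$ replaced by $\max\Pi$, which is why turbulence holds here; your description of the middle regime as one where ``$g^2$ is already turbulent'' tacitly assumes this corrected version rather than the printed one.) The fix is to prove the statement with the roles of $\min\Pi$ and $\max\Pi$ interchanged --- odd cycles governed by the $\min$, turbulence of $g^2$ by the $\max$ --- or to restrict to the case $\Pi=\{z\}$, which is all the paper actually uses (Lemmas~\ref{thirdLem} and~\ref{thirdLemDS}); also be aware that for merely continuous unimodal maps turbulence can be created by a horseshoe far away from the critical orbit, so your necessity strategy of ``confining the forward orbit of $m$'' cannot by itself rule it out.
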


We keep the notation in Equation (\ref{dyn3}) such as $\beta$, $\delta$, and $k$. Let $E:=\left[0,\frac{1}{k}\right]$. Using Proposition~\ref{ChaosThm}, we obtain (the first part of) first main results. 
\begin{thm}\label{main}
Suppose that $\max\{2-\beta,\beta\} < \delta \leq 2-\beta+2\sqrt{1-\beta}$. Then the map $f$ in Equation (\ref{dyn3}) is a map from $E$ to itself and has the following properties:
\begin{enumerate}
\item{$f\in \mathfrak{G}$.}
\item{$f$ has an odd period cycle if and only if $3.68-\beta<\delta$ ($3.68$ is an approximation, see Section~\ref{SubProofMain} for details).}
\item{The second iterate $f^2$ is turbulent if and only if $3.68-\beta\leq\delta$.} 
\end{enumerate}
\end{thm}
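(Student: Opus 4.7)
The plan is to apply Proposition~\ref{ChaosThm} to $f$ on $E$, after first verifying $f\in\mathfrak{G}$, and then to translate the two criteria of that proposition into explicit inequalities in $\beta$, $\delta$, and $k$.

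First I would verify membership in $\mathfrak{G}$. From $f'(Y)=(\beta+\delta)-2\delta k Y$, the unique critical point is $m=(\beta+\delta)/(2\delta k)$, which lies in $(0,1/k)$ precisely because $\delta>\beta$; hence $f$ is strictly monotone on each half of $E$. The endpoint conditions $f(0)=0\geq 0$ and $f(1/k)=\beta/k<1/k$ are immediate from $\beta<1$. The inequality $f(x)>x$ on $(0,m]$ rearranges to $\delta k x<\beta+\delta-1$, which holds throughout $(0,m]$ exactly when $\delta>2-\beta$ (given). The upper bound $\delta\leq 2-\beta+2\sqrt{1-\beta}$ is the larger quadratic root of $(\beta+\delta)^2\leq 4\delta$, equivalent to $f(m)\leq 1/k$, which makes $f$ a self-map of $E$.

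Next I would describe $\Pi$ and handle the cubic condition. Write $R:=\beta+\delta$. The nontrivial fixed point $p^*=(R-1)/(\delta k)$ lies in $[m,1/k]$ since $R>2$, so $p^*\in\Pi$. Dividing $f^2(Y)-Y$ by $f(Y)-Y$ yields a quadratic whose discriminant is $(R-3)^2$, producing period-$2$ points
\[
u_\pm=\frac{R+1\pm\sqrt{(R+1)(R-3)}}{2\delta k}
\]
(real for $R\geq 3$). A direct comparison gives $u_-\geq m$ iff $R\leq 1+\sqrt{5}$; since $f(u_+)=u_-$, the point $u_+$ belongs to $\Pi$ only when $R\leq 1+\sqrt{5}$. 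Hence $\Pi=\{p^*\}$ whenever $R>1+\sqrt{5}$, and then $\max\Pi=\min\Pi=p^*$. Substituting $f(m)=R^2/(4\delta k)$ into $f^2(m)<m$ reduces it to $R^3-4R^2+8>0$, whose roots are $2$ and $1\pm\sqrt{5}$, giving $f^2(m)<m\Longleftrightarrow R>1+\sqrt{5}$ on the hypothesis range.

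Finally I would extract the ``$3.68$'' threshold. By the above, the criteria of Proposition~\ref{ChaosThm} reduce to $f^3(m)<p^*$ (for an odd cycle) or $f^3(m)\leq p^*$ (for turbulence of $f^2$), each together with $f^2(m)<m$. A direct computation yields $f^3(m)=R^4(4-R)(R^3-4R^2+16)/(256\delta k)$; clearing denominators, the inequality $f^3(m)<p^*$ becomes a degree-$8$ polynomial inequality that factors cleanly as
\[
(R-2)^4(R+2)(R^3-2R^2-4R-8)>0.
\]
On the range $R>2$ this is equivalent to $R^3-2R^2-4R-8>0$, whose unique real root is $R_c\approx 3.679$; since $R_c>1+\sqrt{5}$, the cubic condition of the previous step is automatically implied. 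Therefore $f$ has an odd period cycle iff $R>R_c$ (i.e., $\delta>3.68-\beta$), and $f^2$ is turbulent iff $R\geq R_c$ (i.e., $\delta\geq 3.68-\beta$). The main obstacle is spotting the degree-$8$ factorization that isolates the governing cubic $R^3-2R^2-4R-8$; once found, this cubic has negative discriminant (the \emph{casus irreducibilis}), so $R_c$ admits no clean radical expression and must be reported as $\approx 3.68$ numerically. A secondary delicate point is the case analysis ensuring $\Pi$ collapses to $\{p^*\}$ precisely in the regime $R>1+\sqrt{5}$, which is what makes the final comparison with $p^*$ (rather than with the more complicated $u_+$) possible.
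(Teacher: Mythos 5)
Your proposal is correct and follows essentially the same route as the paper: verify $f\in\mathfrak{G}$, reduce Proposition~\ref{ChaosThm} to $f^2(m)<m$ (equivalent, with $R:=\beta+\delta$, to $R>1+\sqrt{5}$), show $\Pi=\{z\}$ by locating the genuine period-two points, and factor $z-f^3(m)$ as $(R-2)^4(R+2)(R^3-2R^2-4R-8)/(256\delta k)$, which is exactly the paper's degree-eight factorization rewritten in the variable $R$, with the same governing cubic and root $\approx 3.68-\beta$. The only slip is cosmetic: the discriminant of the period-two quadratic is $(R+1)(R-3)$, not $(R-3)^2$, but your displayed roots $u_\pm$ already use the correct value, so nothing downstream is affected.
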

Now, a few comments are in order: 1.~The upper bound for $\delta$ in Theorem~\ref{main}, that is $\delta \leq 2-\beta+2\sqrt{1-\beta}$, is there just to keep $0\leq Y<\frac{1}{k}$. So the upper bound is not really interesting. The real meat is in the lower bound. Roughly speaking, if $\delta$ is sufficiently large (that means $\mu\alpha$ is large or $\lambda$ is small), then a chaos exists. This is because $\delta$ controls the extent of nonlinearity in Equation (\ref{dyn3}). We find it surprising that our result is independent of $k$. Note that if one wants to find how each parameter value is sensitive to our result, it is easy: just take the partial derivative in each inequality with respect to each parameter. Remember that $\delta= \frac{\mu \alpha}{\lambda}$. So we can compute the sensitivity of our result with respect to $\mu, \alpha, \lambda$, and $\beta$ (not just with respect to $\beta$ and $\delta$).  

\subsection{Odd period cycles in the piecewise linear model}\label{OPCPiecewise}
This subsection is similar to the last subsection, so we are brief. Recall~\cite[Cor.~3]{Deng-TopChaos-JET} (this is a counterpart of Proposition~\ref{ChaosThm}, here we deal with a unimodal function with the unique minimum rather than maximum). Let $\tilde{\mathfrak{G}}$ be the set of continuous maps from $[a, b]$ to itself so that an arbitrary element $g\in \tilde{\mathfrak{G}}$ satisfies:
\begin{enumerate}
\item{there exists $m\in (a,b)$ with the map $g$ strictly decreasing on $[a,m]$ and strictly increasing on $[m,b]$.}
\item{$g(a)>a$, $g(b)\leq b$, and $g(x)<x$ for all $x\in[m,b)$.}
\end{enumerate}
For $g\in \tilde{\mathfrak{G}}$, let $\tilde{\Pi}:=\{x\in [a,m]\mid g(x)\in [a,m] \textup{ and } g^2(x)=x\}$. Now we state~\cite[Cor.~3]{Deng-TopChaos-JET}:
\begin{prop}\label{ChaosThm2}
Let $g\in \tilde{\mathfrak{G}}$. The map $g$ has an odd-period cycle if and only if $g^2(m) > m$ and $g^3(m) > \max\{x\in \tilde{\Pi}\}$ and the second iterate $g^2$ is turbulent if and only if $g^2(m) > m$ and $g^3(m) \geq \min\{x\in \tilde{\Pi}\}$. 
\end{prop}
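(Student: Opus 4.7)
My plan is to derive Proposition~\ref{ChaosThm2} from Proposition~\ref{ChaosThm} by a reflection (order-reversal) argument: every element of $\tilde{\mathfrak{G}}$ is the conjugate, under the mirror symmetry of $[a,b]$, of an element of $\mathfrak{G}$, so the characterisation in Proposition~\ref{ChaosThm} transfers verbatim once I keep track of the way the reflection flips inequalities and swaps $\max$ with $\min$.

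First I would introduce the order-reversing involution $\phi\colon [a,b]\to [a,b]$ given by $\phi(y):=a+b-y$, and for $g\in\tilde{\mathfrak{G}}$ set $\hat{g}:=\phi\circ g\circ\phi$. A routine check shows $\hat{g}\in\mathfrak{G}$: its monotonicity intervals are swapped, so $\hat{g}$ is strictly increasing on $[a,\hat{m}]$ and strictly decreasing on $[\hat{m},b]$ where $\hat{m}:=\phi(m)$; the condition $g(a)>a$ becomes $\hat{g}(b)<b$; $g(b)\le b$ becomes $\hat{g}(a)\ge a$; and $g(x)<x$ on $[m,b)$ becomes $\hat{g}(x)>x$ on $(a,\hat{m}]$. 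All of these follow immediately from the order-reversing property of $\phi$ and the fact that $\phi$ is an involution.

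Next I would exploit that $\phi$ conjugates $g$ to $\hat{g}$: one has $\hat{g}^n=\phi\circ g^n\circ\phi$, hence $\hat{g}^n(\hat{m})=\phi(g^n(m))$ for every $n$. Topological conjugation preserves the period spectrum and commutes with forming iterates, so $g$ has an odd-period cycle iff $\hat{g}$ does, and $g^2$ is turbulent iff $\hat{g}^2$ is. Unwinding the definitions of $\Pi$ and $\tilde{\Pi}$, and using that $\phi$ sends $[m,b]$ to $[a,\hat{m}]$ and vice versa, also yields $\Pi_{\hat{g}}=\phi(\tilde{\Pi})$. Applying Proposition~\ref{ChaosThm} to $\hat{g}$ and substituting these identities will turn the conditions on $\hat{g}^2(\hat{m})$, $\hat{g}^3(\hat{m})$, $\max\Pi_{\hat{g}}$ and $\min\Pi_{\hat{g}}$ into the corresponding conditions on $g^2(m)$, $g^3(m)$, and the extrema of $\tilde{\Pi}$.

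The only delicate step is careful bookkeeping under the reflection: because $\phi$ reverses the order on $[a,b]$, each ``$<$'' becomes ``$>$'' and each ``$\le$'' becomes ``$\ge$'', and moreover $\max\Pi_{\hat{g}}=\phi(\min\tilde{\Pi})$ while $\min\Pi_{\hat{g}}=\phi(\max\tilde{\Pi})$. I expect this to be the only genuine source of errors; once the translation is performed carefully and the two cases (odd-period cycle and turbulence of the second iterate) are read off separately from Proposition~\ref{ChaosThm}, no further dynamical content is required and Proposition~\ref{ChaosThm2} follows.
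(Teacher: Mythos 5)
Your reflection setup is fine as far as it goes: $\phi(y)=a+b-y$ does conjugate any $g\in\tilde{\mathfrak{G}}$ to $\hat{g}=\phi\circ g\circ\phi\in\mathfrak{G}$ with turning point $\hat{m}=\phi(m)$, iterates satisfy $\hat{g}^n(\hat{m})=\phi(g^n(m))$, odd-period cycles and turbulence of the second iterate are preserved (the definition of turbulence allows either ordering of $x_1,x_3,x_2$, so an order-reversing conjugacy is harmless), and indeed $\Pi_{\hat{g}}=\phi(\tilde{\Pi})$. (For comparison, the paper supplies no proof at all here --- Proposition~\ref{ChaosThm2} is quoted from Deng et al.\ as a companion to Proposition~\ref{ChaosThm} --- so deriving it by reflection is a reasonable route in spirit.)

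The genuine gap is in your last step. You correctly record that $\max\Pi_{\hat{g}}=\phi(\min\tilde{\Pi})$ and $\min\Pi_{\hat{g}}=\phi(\max\tilde{\Pi})$, but you then assert that Proposition~\ref{ChaosThm2} ``follows''. Substituting your identities into Proposition~\ref{ChaosThm} applied to $\hat{g}$ gives: odd-period cycle iff $g^2(m)>m$ and $g^3(m)>\min\{x\in\tilde{\Pi}\}$, and $g^2$ turbulent iff $g^2(m)>m$ and $g^3(m)\geq\max\{x\in\tilde{\Pi}\}$. That is \emph{not} the statement you are asked to prove, in which $\max\tilde{\Pi}$ appears in the odd-cycle clause and $\min\tilde{\Pi}$ in the turbulence clause; the two versions coincide only when $\tilde{\Pi}$ is a singleton (as it happens to be in the paper's application, Lemma~\ref{thirdLemDS}, which is why nothing breaks downstream). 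To close the argument for a general $g\in\tilde{\mathfrak{G}}$ you would need the additional, nontrivial claim that under $g^2(m)>m$ the point $g^3(m)$ cannot lie strictly between $\min\tilde{\Pi}$ and $\max\tilde{\Pi}$, and you neither prove nor even flag this. In fact the mismatch signals a real issue with the printed statements themselves: they cannot both be exact mirror images of each other, and taking Proposition~\ref{ChaosThm} at face value together with the standard Block--Coppel fact that an odd period greater than one forces $g^2$ to be turbulent leads to a contradiction for maps whose $\Pi$ contains a $2$-cycle. So rather than asserting the proposition follows, you should either restrict to the singleton case actually used in the paper, prove the missing intermediate claim, or point out that one of the two quoted propositions must have $\max$ and $\min$ transposed.
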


We keep $E=\left[0, \frac{1}{k}\right]$ from the last subsection. Also, we keep the same notation from Equation~(\ref{DayShaferDyn}) such as $\beta$, $\mu$, $k$, and $Y_0$. Using Proposition~\ref{ChaosThm2}, we obtain
\begin{thm}\label{main2}
Suppose that $\frac{\beta(1-k Y_0)}{k}<\mu \leq \frac{1-k Y_0}{k^2 Y_0}$. Then the map $h$ in Equation (\ref{DayShaferDyn}) is a map from $E$ to itself and has the following properties:
\begin{enumerate}
\item{$h \in \tilde{\mathfrak{G}}$.}
\item{$h$ has an odd period cycle if and only if $\frac{(1+\beta)(1-k Y_0)}{k} < \mu$ and \\$\frac{(1-k Y_0)(1+2 \beta^2+\sqrt{4\beta^2+1})}{2\beta k} < \mu$.}
\item{The second iterate $h^2$ is turbulent if and only if $\frac{(1+\beta)(1-k Y_0)}{k} < \mu$ and \\$\frac{(1-k Y_0)(1+2 \beta^2+\sqrt{4\beta^2+1})}{2\beta k} \leq \mu$.} 
\end{enumerate}
\end{thm}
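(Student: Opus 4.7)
The plan is to verify the hypotheses of Proposition~\ref{ChaosThm2} for $h$ and translate the abstract conditions $h^2(m)>m$ and $h^3(m)>\max\tilde\Pi$ (respectively $h^3(m)\geq\min\tilde\Pi$) into explicit inequalities in $\mu$. For item~(1), the slope of $h$ on $[0,Y_0]$ equals $\beta-\mu k/(1-kY_0)$, which is strictly negative precisely because $\mu>\beta(1-kY_0)/k$; on $[Y_0,1/k]$ the slope is $\beta\in(0,1)$. Hence $h$ is strictly decreasing on $[0,Y_0]$ and strictly increasing on $[Y_0,1/k]$, with unique minimum $m=Y_0$. A direct computation gives $h(0)=\mu k Y_0/(1-kY_0)>0$, and $h(0)\leq 1/k$ is equivalent to the stated upper bound $\mu\leq(1-kY_0)/(k^2Y_0)$; together with $h(Y_0)=\beta Y_0\geq 0$, $h(1/k)=\beta/k<1/k$, and $h(x)=\beta x<x$ for $x\in[Y_0,1/k)$, this shows $h\colon E\to E$ and $h\in\tilde{\mathfrak{G}}$.

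For items~(2) and~(3), I would first compute $h^2(Y_0)=\beta^2 Y_0+\mu k Y_0(1-\beta)/(1-kY_0)$, using $h(Y_0)=\beta Y_0<Y_0$ and the decreasing branch. The condition $h^2(Y_0)>Y_0$ simplifies (after dividing by the positive factor $(1-\beta)Y_0$) to exactly $\mu>(1+\beta)(1-kY_0)/k$, which is the first half of each stated condition. Under this inequality, $h^2(Y_0)$ lies in the increasing branch, so $h^3(Y_0)=\beta\cdot h^2(Y_0)$. To identify $\tilde\Pi$, observe that $h$ is affine on $[0,Y_0]$, $h(x)=ax+b$ with $a=\beta-\mu k/(1-kY_0)$ and $b=\mu k Y_0/(1-kY_0)$, so any $x\in\tilde\Pi$ satisfies $h^2(x)=a^2 x+b(a+1)=x$, which factors as $(a+1)[(a-1)x+b]=0$. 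In the regime $\mu>(1+\beta)(1-kY_0)/k$ we have $a<-1$, and so the unique solution is the fixed point $p=b/(1-a)$, which one checks lies in $(0,Y_0)$. Hence $\tilde\Pi=\{p\}$ is a singleton, and $\max\tilde\Pi=\min\tilde\Pi=p$.

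Finally, to convert $h^3(Y_0)>p$ (respectively $\geq p$) into the stated bound, I would set $s:=\mu k/(1-kY_0)$, clear positive denominators, and cancel the common factor $(1-\beta)Y_0$, reducing the inequality to the quadratic
\begin{equation*}
\beta s^2-(2\beta^2+1)s+\beta^3>0,
\end{equation*}
with discriminant $4\beta^2+1$ and roots $s_\pm=(2\beta^2+1\pm\sqrt{4\beta^2+1})/(2\beta)$. A short calculation shows $0<s_-<1+\beta<s_+$ for every $\beta\in(0,1)$, so in the regime $s>1+\beta$ the quadratic inequality holds iff $s>s_+$; translating back yields $\mu>(1-kY_0)(1+2\beta^2+\sqrt{4\beta^2+1})/(2\beta k)$, giving item~(2). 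Item~(3) follows by replacing each strict inequality with its weak analogue. The main obstacle is the algebraic reduction: the expanded form of $h^3(Y_0)>p$ is a priori a cubic in $s$ whose middle coefficient is $\beta^3+\beta(1-\beta)^2-1$, and it is only after noticing the clean factorisation $2\beta^3-2\beta^2+\beta-1=(\beta-1)(2\beta^2+1)$ that the cubic collapses into the quadratic above and the explicit bound in the theorem emerges.
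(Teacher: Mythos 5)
Your proposal is correct and follows essentially the same route as the paper: verify $h\in\tilde{\mathfrak{G}}$, reduce $h^2(Y_0)>Y_0$ to $\mu>(1+\beta)(1-kY_0)/k$, show $\tilde\Pi$ is the singleton fixed point of the affine branch, and convert $h^3(Y_0)>\tilde z$ into a quadratic inequality whose larger root gives the stated bound via Proposition~\ref{ChaosThm2}. If anything, your argument is slightly more careful than the paper's at two points: you state the correct direction of the inequality in the $h^2(Y_0)>Y_0$ step (the paper's Lemma~\ref{secondLemDS} has the inequality reversed, evidently a typo), and you explicitly rule out the smaller root of the quadratic by checking $s_-<1+\beta<s_+$, where the paper only appeals to $\mu>0$.
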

Note that in Theorem~\ref{main2}, it is not possible to simplify (combine) the conditions $\frac{(1+\beta)(1-k Y_0)}{k} < \mu$ and $\frac{(1-k Y_0)(1+2 \beta^2+\sqrt{4\beta^2+1})}{2\beta k} < \mu$ since which lower bound for $\mu$ is larger depends on parameter values (such as $\beta$, $k$, and $Y_0$). Here, the upper bound for $\mu$ is there just to keep $0\leq Y \leq 1/k$, so it is not really interesting. The important part is the lower bound. Roughly speaking, if $\mu$ is large enough (where $\mu$ controls the slope of $h$ for $0\leq Y \leq Y_0$), $h$ exhibits a chaos (since if the slope is large, $h$ gets small/large very quick). By the concrete algebraic form of the characterisation, we can conduct the sensitivity analysis with respect to each parameter in the model just by taking the partial derivatives. Finally, note that the size of the slope of $h$ becomes important to establish the ergodic property in the second part of this paper, see Section 7 for details.  

\begin{rem}\label{iterativeExample}
Since the conditions for $\mu$ in Theorem~\ref{main2} look complicated, one might suspect that there might not be any combination of the parameter values $\mu, \beta, k, Y_0$ satisfying all the conditions to generate a chaos. This is not the case: take $\beta=0.6$, $k=1.1$, $Y_0=0.2\; (<1/k\approx 0.91)$. Then
\begin{enumerate}
\item{$\frac{\beta(1-k Y_0)}{k}<\mu \leq \frac{1-k Y_0}{k^2 Y_0}$ becomes $0.42 < \mu \leq 3.22$.}
\item{$\frac{(1+\beta)(1-k Y_0)}{k} < \mu$ becomes $1.13 < \mu$.}
\item{$\frac{(1-k Y_0)(1+2 \beta^2+\sqrt{4\beta^2+1})}{2\beta k} < \mu$ becomes $1.93<\mu$.}
\end{enumerate}
So, one can take $\mu=3$ for example. We investigate a chaotic (and ergodic) behaviour of this particular example in Section 7. Note that it is true that for some combination of parameter values $\beta$, $k$, and $Y_0$, there is no $\mu$ value satisfying all the conditions to generate a chaos. This means that we must choose parameter values carefully to generate a chaos. This kind of detailed analysis was not done (and cannot be done with their general model) in~\cite{DayShafer-Keynesian-Macro} and~\cite{DayShafer-Ergodic-Macro}.
\end{rem}

%We regard this paper as an updated version of~\cite{DayShafer-Keynesian-Macro} and~\cite{DayShafer-Ergodic-Macro} using recent results on topological chaos~\cite{Deng-TopChaos-JET} and on ergodic theory~\cite{Lyubich-forty-Modern},~\cite{Shen:2014},~\cite{Avila-unimodal-Annals},~\cite{Avila-stochastic-Invent}. \\

\section{Proofs of first main results}\label{SecProofMain}
\subsection{Proof of Theorem~\ref{main}}\label{SubProofMain}
In the following proof, most results follow from direct (but fairly complicated) algebraic calculations. We give some sketches of our manipulations while pointing some important steps out rather than writing all the details. All calculations can be checked by a computer algebra system, say, Magma~\cite{magma}, Python~\cite{10.5555/1593511}, etc.

First, we want $f\in \mathfrak{G}$. We have $f'(Y)=\beta + \delta(1-k Y)-\delta k Y$. Solving $f'(Y)=0$, we obtain $Y=\frac{\beta+\delta}{2\delta k}$. Write $s:=\frac{\beta+\delta}{2\delta k}$ to ease the notation. Since we want $0<s<1/k$, this forces $\beta<\delta$. It is clear that $f$ is continuous and strictly increasing on $[0,s]$ and strictly decreasing on $[s,1/k]$. Thus, $f$ is unimodal and has the unique maximum at $Y=s$, and its maximum value is $f(s)=\frac{(\beta+\delta)^2}{4\delta k}$. Since we want to make $f$ a map from $E=[0,1/k]$ to itself, we need $0\leq f(s) \leq 1/k$. Solving this inequality for $\delta$, we obtain $2-\beta-2\sqrt{1-\beta} \leq \delta \leq 2-\beta+2\sqrt{1-\beta}$. Note that this lower bound is not binding since we have $\beta<\delta$ already and $2-\beta-2\sqrt{1-\beta}\leq \beta$ is equivalent to $\beta(1-\beta)\geq 0$, that is trivially true. So far, we need
\begin{equation}\label{equ1}
\beta < \delta \leq 2-\beta+2\sqrt{1-\beta}.
\end{equation}

Now, to force $f\in \mathfrak{G}$, we further need: (1)~$f(0)\geq 0$, (2)~$f(1/k)<1/k$, and (3)~$f(Y)>Y$ for all $Y\in (0,s]$. We see that (1) is clear and that (2) is equivalent to $\beta<1$, that is obviously true. Finally, since $f$ is strictly concave and $f(0)=0$, (3) is equivalent to $f(s)>s$. By an easy caluculation, we have that $f(s)>s$ is equivalent to $\delta > 2-\beta$. Now, combining the last inequality and (\ref{equ1}) we obtain
\begin{lem}\label{firstLem}
If $\max\{2-\beta,\beta\} < \delta \leq 2-\beta+2\sqrt{1-\beta}$, then $f\in \mathfrak{G}$.
\end{lem}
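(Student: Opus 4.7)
The plan is to verify each of the two defining conditions of $\mathfrak{G}$ in turn for the quadratic $f(Y) = \beta Y + \delta Y(1-kY)$ on $E = [0, 1/k]$. Since $f$ is a concave quadratic with $f(0) = 0$, solving $f'(Y) = \beta + \delta - 2\delta k Y = 0$ produces a unique critical point $s = (\beta+\delta)/(2\delta k)$, which is therefore the unique maximum; the requirement $s \in (0, 1/k)$ reduces algebraically to $\delta > \beta$, giving at once the strict monotonicity on $[0, s]$ and $[s, 1/k]$ demanded in item~(1) of the definition of $\mathfrak{G}$.

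For the self-map property, since $f(0) = 0$ and $f(1/k) = \beta/k$, ensuring $f(E) \subseteq E$ amounts to $f(s) \leq 1/k$, where $f(s) = (\beta+\delta)^2/(4\delta k)$. I would multiply through by $4\delta k$ and rearrange to the quadratic constraint $\delta^2 - 2(2-\beta)\delta + \beta^2 \leq 0$, whose roots are $2-\beta \pm 2\sqrt{1-\beta}$; this yields the two-sided bound $2-\beta - 2\sqrt{1-\beta} \leq \delta \leq 2-\beta + 2\sqrt{1-\beta}$. A short check ($2-\beta - 2\sqrt{1-\beta} \leq \beta$ is equivalent to $\beta(1-\beta) \geq 0$) shows the lower root is automatically subsumed by the already-imposed $\delta > \beta$, so only the upper bound of the hypothesis is binding.

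For item~(2) of $\mathfrak{G}$, the condition $f(0) \geq 0$ is immediate, and $f(1/k) = \beta/k < 1/k$ follows from $\beta < 1$ (guaranteed by the reality of $\sqrt{1-\beta}$). For $f(Y) > Y$ on $(0, s]$, I would exploit strict concavity: the function $Y \mapsto f(Y) - (f(s)/s)\,Y$ vanishes at $0$ and at $s$ and is concave, hence strictly positive on $(0, s)$, so $f(Y) > (f(s)/s)\,Y$ there; combining with the endpoint equality at $s$, the full condition on $(0,s]$ collapses to the single inequality $f(s) > s$. Substituting the formulas and simplifying, $f(s) > s$ is equivalent to $\delta > 2-\beta$, which is precisely the remaining clause of $\max\{2-\beta, \beta\} < \delta$.

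The only mildly delicate step is the quadratic manipulation showing that $f(s) \leq 1/k$ opens the symmetric window $[\,2-\beta - 2\sqrt{1-\beta},\; 2-\beta + 2\sqrt{1-\beta}\,]$ for $\delta$ and the verification that its lower endpoint is dominated by $\beta$, so that the lower bound on $\delta$ in the hypothesis comes entirely from the self-map and $f(s)>s$ conditions. Everything else is routine single-variable calculus on a concave quadratic, and the resulting inequalities assemble exactly into the stated hypothesis $\max\{2-\beta, \beta\} < \delta \leq 2-\beta + 2\sqrt{1-\beta}$.
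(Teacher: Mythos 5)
Your proposal is correct and follows essentially the same route as the paper: locate the critical point $s=(\beta+\delta)/(2\delta k)$, get $\delta>\beta$ from $s<1/k$, reduce the self-map condition $f(s)\leq 1/k$ to the quadratic window $2-\beta-2\sqrt{1-\beta}\leq\delta\leq 2-\beta+2\sqrt{1-\beta}$ with the lower root absorbed by $\delta>\beta$, and collapse $f(Y)>Y$ on $(0,s]$ to $f(s)>s$, i.e.\ $\delta>2-\beta$, via strict concavity and $f(0)=0$. The only difference is cosmetic: your chord comparison with the line through $(0,0)$ and $(s,f(s))$ spells out the concavity step that the paper states without detail.
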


In the following, we assume $\max\{2-\beta,\beta\} < \delta \leq 2-\beta+2\sqrt{1-\beta}$. Next, we show that 
\begin{lem}\label{secondLem}
$f^2(s) < s$ if and only if $1-\beta+\sqrt{5}<\delta$.
\end{lem}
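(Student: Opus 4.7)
The plan is to translate the inequality $f^2(s) < s$ into a polynomial inequality in the single variable $u := \beta + \delta$, and then reduce this to analysing the sign of a cubic.

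First I would compute $f^2(s)$ explicitly. Since $f(s) = \frac{(\beta+\delta)^2}{4\delta k}$, setting $M := f(s)$ and using $f(Y) = Y(\beta + \delta - \delta k Y)$ gives
\begin{equation*}
f^2(s) = M\bigl(\beta + \delta - \delta k M\bigr) = \frac{(\beta+\delta)^2}{4\delta k}\Bigl(\beta+\delta - \frac{(\beta+\delta)^2}{4}\Bigr),
\end{equation*}
where the simplification uses the clean cancellation $\delta k M = (\beta+\delta)^2/4$. Comparing with $s = \frac{\beta+\delta}{2\delta k}$ and multiplying both sides of $f^2(s) < s$ by $\frac{4\delta k}{\beta+\delta} > 0$ (note $\beta + \delta > 0$), the inequality becomes
\begin{equation*}
(\beta+\delta)\Bigl(\beta+\delta - \tfrac{(\beta+\delta)^2}{4}\Bigr) < 2.
\end{equation*}

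Next I would substitute $u = \beta+\delta$ and clear denominators to obtain the equivalent cubic inequality $u^3 - 4u^2 + 8 > 0$. The key observation is that $u = 2$ is a root of $p(u) := u^3 - 4u^2 + 8$, so $p$ factors as $p(u) = (u-2)(u^2 - 2u - 4) = (u-2)(u - 1 - \sqrt{5})(u - 1 + \sqrt{5})$. Now I invoke the standing hypothesis $\delta > 2 - \beta$, which forces $u > 2$; in particular the first and third factors are strictly positive. Therefore $p(u) > 0$ is equivalent to $u > 1 + \sqrt{5}$, i.e. $\delta > 1 - \beta + \sqrt{5}$, which is exactly the claimed condition.

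The only conceptual step (rather than pure computation) is recognising that the cubic $u^3 - 4u^2 + 8$ has the convenient root $u = 2$, which makes the factorisation transparent; once this is noted, the sign analysis is immediate from the assumption $u > 2$ already in force. All of the intermediate algebra is routine and can be verified with any computer algebra system.
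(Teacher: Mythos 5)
Your proof is correct and follows essentially the same route as the paper: compute $s-f^2(s)$ directly, factor, and use the standing hypothesis $\delta>2-\beta$ to discard the lower sign branch (the paper's factorisation $\frac{(\delta+\beta)(\delta+\beta-2)\left(\delta^2+2\beta\delta-2\delta+\beta^2-2\beta-4\right)}{16\delta k}$ is exactly your $\frac{u(u-2)(u^2-2u-4)}{16\delta k}$ with $u=\beta+\delta$). The only difference is presentational: your early substitution $u=\beta+\delta$ turns the paper's CAS-assisted quartic root-finding in $\delta$ into a hand-checkable cubic with the obvious root $u=2$, which is a nice tidying but not a different argument.
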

\begin{proof}
This follows from a (slightly messy) direct computation. We have 
\begin{alignat}{2}
s - f^2(s) &= s - f(f(s)) \nonumber\\
               &= s - f\left(\frac{(\beta+\delta)^2}{4\delta k}\right)\nonumber\\
               &= \frac{(\delta+\beta)(\delta+\beta-2)\left(\delta^2+\beta^2-2\beta+\delta(2\beta-2)-4\right)}{16\delta k} \label{last}
\end{alignat}
We want (\ref{last}) to be positive. Using Python, we obtain four real roots of (\ref{last}), those are $-\beta$, $1-\beta-\sqrt{5}$, $2-\beta$, and $1-\beta+\sqrt{5}$. It is easy to see that the first two roots are negative and the last two are positive. Since (\ref{last}) is a fourth degree polynomial in $\delta$ and we have assumed $\delta>0$, the necessary and sufficient condition for (\ref{last}) to be positive is $0<\delta<2-\beta$ or $1-\beta+\sqrt{5}<\delta$. Note that the first inequality cannot hold since we have already assumed that $\max\{2-\beta,\beta\} < \delta \leq 2-\beta+2\sqrt{1-\beta}$. This finishes the proof.  
\end{proof}

Now, solving $f(Y)=Y$, we obtain the fixed points of $f$, that are $Y=0, \frac{\delta+\beta-1}{\delta k}$. Write $z:=\frac{\delta+\beta-1}{\delta k}$ to ease the notation. Note that the assumption $\max\{2-\beta,\beta\} < \delta \leq 2-\beta+2\sqrt{1-\beta}$ forces $\beta+\delta>2$, that implies $s<z<1/k$. Recall from Introduction that we have $\Pi=\{Y\in [s,1/k]\mid f(Y)\in [s,1/k] \textup{ and } f^2(Y)=Y\}$. Remember that we have assumed $1-\beta+\sqrt{5}<\delta$ (from Lemma~\ref{firstLem}). Now, we show that 
\begin{lem}\label{thirdLem}
$\Pi = \{z\}$ (the unique non-trivial fixed point of $f$).
\end{lem}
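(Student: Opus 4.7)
The plan is to verify that $z\in\Pi$ and then to rule out every other candidate by analysing the period-two orbits of $f$.

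Membership of $z$ should be immediate: since $f(z)=z$ we automatically have $f^2(z)=z$, and the two remaining requirements $z\leq 1/k$ and $s\leq z$ reduce to $\beta\leq 1$ and $\beta+\delta\geq 2$, both of which are already in force under the standing hypothesis. So $z\in\Pi$.

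Next I would suppose $Y\in\Pi\setminus\{z\}$. Then $Y$ is a root of $f^2(Y)-Y$ different from the fixed points $0$ and $z$ of $f$, so $Y$ must lie on a genuine period-two orbit $\{p_1,p_2\}$ with $p_1<p_2$. A standard unimodal-map argument, using $f(Y)>Y$ on $(0,z)$ and $f(Y)<Y$ on $(z,1/k)$, forces $p_1<z<p_2$. Since membership in $\Pi$ demands that \emph{both} the point and its image lie in $[s,1/k]$, whether we take $Y=p_1$ or $Y=p_2$ we end up needing $p_1\geq s$.

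The only real content is therefore to show $p_1<s$ under the assumption $\delta>1-\beta+\sqrt{5}$ already inherited from Lemma~\ref{secondLem}. My preferred route is the affine conjugacy $W=\frac{\delta k}{\beta+\delta}Y$, which carries $f$ to the standard logistic map $g_r(W)=rW(1-W)$ with $r=\beta+\delta$, sends the critical point $s$ to $W=1/2$, and sends the period-two orbit to the classical pair $W_{\pm}=\frac{r+1\pm\sqrt{(r+1)(r-3)}}{2r}$. A one-line manipulation gives $W_-<1/2\iff r^2-2r-4>0\iff r>1+\sqrt{5}$, which is exactly $\delta>1-\beta+\sqrt{5}$. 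Hence $p_1<s$, so neither $p_1$ nor $p_2$ belongs to $\Pi$; together with $0<s$, which excludes the trivial fixed point, this yields $\Pi=\{z\}$. The main obstacle is purely bookkeeping: the logistic conjugation is chosen precisely to avoid unpacking the quartic $f^2(Y)-Y$ directly, reducing the whole problem to one short quadratic inequality in $r=\beta+\delta$.
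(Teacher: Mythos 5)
Your proof is correct, and it reaches the paper's conclusion by the same logical skeleton (confirm $z\in\Pi$, then exclude the genuine period-two orbit by showing its smaller point $p_1$ lies strictly to the left of the critical point $s$, which simultaneously disqualifies $p_2$ because $f(p_2)=p_1\notin[s,1/k]$). The difference is purely in how the period-two points are obtained: the paper solves the quartic $f^2(Y)=Y$ directly (by computer algebra), writes the two non-fixed roots as $s+\frac{1\pm\sqrt{(\delta+\beta-1)^2-4}}{2\delta k}$, and checks $1-\sqrt{(\delta+\beta-1)^2-4}<0$, i.e.\ $(\delta+\beta-1)^2>5$; you instead conjugate $f$ affinely to the logistic map $g_r$ with $r=\beta+\delta$ and quote the classical period-two formula, arriving at $W_-<\tfrac12\iff r^2-2r-4>0$, which is literally the same inequality $(\beta+\delta-1)^2>5$ and is equivalent to the standing assumption $\delta>1-\beta+\sqrt5$ from Lemma~\ref{secondLem}. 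What your route buys is a hand-checkable derivation that avoids the machine-assisted quartic factorization, and you also make explicit two steps the paper leaves implicit: why ruling out $p_1\geq s$ also removes $p_2$ from $\Pi$ (via the condition that the image must lie in $[s,1/k]$), and why the trivial fixed point $0$ is excluded ($0<s$). One could quibble that you should note $r=\beta+\delta>1+\sqrt5>3$ guarantees the period-two orbit is real (so the formula for $W_\pm$ applies), but since the case of no real period-two orbit makes the exclusion vacuous, this is harmless.
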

\begin{proof}
First, it is clear that $z\in \Pi$. Now, solving $f^2(Y)=Y$ (this calculation is slightly messy), we obtain
$Y=0, z, \frac{\delta+\beta+1 \pm\sqrt{\delta^2+2\beta\delta-2\delta+\beta^2-2\beta-3}}{2\delta k}$. Looking closely at the last expression, we have
\begin{equation*}
\frac{\delta+\beta+1 \pm\sqrt{\delta^2+2\beta\delta-2\delta+\beta^2-2\beta-3}}{2\delta k} = s + \frac{1 \pm\sqrt{\delta^2+2\beta\delta-2\delta+\beta^2-2\beta-3}}{2\delta k}.\end{equation*}
So, if we show that $1 -\sqrt{\delta^2+2\beta\delta-2\delta+\beta^2-2\beta-3}<0$, then we are done. A simple calculation shows that the last inequality is equivalent to $(\delta+\beta-1)^2>5$. The last inequality holds since our assumption $1-\beta+\sqrt{5}<\delta$ implies $\delta+\beta-1 > \sqrt{5}$. 
\end{proof}
Finally we show that
\begin{lem}\label{forthLem}
$f^3(s) < z$ if and only if $3.68 - \beta < \delta$. (Likewise $f^3(s) \leq z$ if and only if $3.68 - \beta \leq \delta$.)
\end{lem}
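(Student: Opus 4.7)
The plan is to compute $z - f^3(s)$ in closed form, factor its numerator, and read off the sign under the standing hypotheses. Starting from $f(s) = (\beta+\delta)^2/(4\delta k)$ (derived before Lemma~\ref{firstLem}), two further substitutions into $f(Y) = \beta Y + \delta Y(1-kY)$ give explicit rational expressions for $f^2(s)$ and $f^3(s)$. Writing $u := \beta+\delta$ and collecting terms, one finds after simplification that
\[
  z - f^3(s) \;=\; \frac{N(u)}{256\,\delta k},
\]
where $N$ is a degree-$8$ polynomial in $u$ with integer coefficients. That $N$ depends only on $u$ (with $k$ contributing merely a positive scale factor in the denominator) is not accidental: the substitution $x = kY$ followed by $w = \delta x/u$ conjugates $f$ to the logistic family $g_u(w) = u w(1-w)$, under which our inequality becomes $g_u^3(1/2) < 1 - 1/u$, a condition in the single parameter $u$.

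The main algebraic step is to factor $N$. The crucial observation is that at $u = 2$ the critical point $s$ coincides with the non-trivial fixed point $z$, so all forward iterates of $s$ collapse onto $z$ and the inequality degenerates identically; this forces a high-multiplicity zero of $N$ at $u = 2$. Repeated synthetic division (or a single Magma/Python call, as the paper allows) yields
\[
  N(u) \;=\; (u-2)^4 (u+2)\,R(u), \qquad R(u) := u^3 - 2u^2 - 4u - 8,
\]
where the factor $(u+2)$ is spurious (its root lies outside our parameter region). The cubic $R$ has discriminant $-2816 < 0$ and hence a unique real root $u^* \approx 3.6786$, which can be bracketed by bisection on $[3,4]$ from $R(3) = -11 < 0 < 8 = R(4)$.

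The sign analysis is then immediate. Under the standing hypothesis combined with Lemma~\ref{secondLem}, $u = \beta+\delta > 1 + \sqrt{5} > 2$, so $(u-2)^4$, $(u+2)$, $\delta$, and $k$ are all positive; hence the sign of $z - f^3(s)$ equals that of $R(u)$. Since $R$ is strictly increasing past $u^*$, we obtain $z - f^3(s) > 0 \iff u > u^* \iff \delta > u^* - \beta \approx 3.68 - \beta$, which is the strict form of the lemma; the weak version with equality follows by continuity of $\delta \mapsto z - f^3(s)$.

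The principal obstacle is recognising and extracting the $(u-2)^4$ factor of $N$: without that insight the degree-$8$ polynomial looks intractable, and the fourfold multiplicity at $u = 2$ is not visible from the raw expansion without either the superstability interpretation at $u = 2$ or a brute-force computer algebra check. Once that factor is removed the residual cubic is routine. A secondary subtlety is that $u^*$ is irrational — it is the unique real root of an irreducible cubic over $\mathbb{Q}$ — so the "$3.68$" in the statement is genuinely a numerical approximation, and a fully rigorous treatment of the boundary requires bracketing $u^*$ with sufficient precision, e.g.\ $R(3.678) < 0 < R(3.679)$.
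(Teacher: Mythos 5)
Your proposal is correct and takes essentially the same route as the paper: compute $z - f^3(s)$ explicitly, factor the numerator as $(u-2)^4(u+2)(u^3-2u^2-4u-8)$ with $u=\beta+\delta$ (the paper's cubic $\delta^3+3\beta\delta^2-2\delta^2+\cdots-8$ is exactly this polynomial in $u$), and reduce the lemma to locating the unique real root $u^*\approx 3.68$ of that cubic. The only differences are cosmetic — you use the discriminant and bisection where the paper uses $j'(\delta)$ and a closed-form radical — and your logistic-conjugacy remark is a nice bonus that explains why the condition depends only on $\beta+\delta$ and not on $k$.
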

\begin{proof}
We just give a proof for the strict inequality case. (The weak inequality case is similar.) By a direct calculation (using Python), we see that $f^3(s)<z$ is equivalent to
\begin{equation*}
\frac{1}{256\delta k} (\delta+\beta-2)^4(\delta+\beta+2)(\delta^3+3\beta\delta^2-2\delta^2+3\beta^2\delta-4\beta\delta-4\delta+\beta^3-2\beta^2-4\beta-8) > 0.
\end{equation*}
It is clear that this inequality is equivalent to 
\begin{equation}\label{lastExpression}
\delta^3+3\beta\delta^2-2\delta^2+3\beta^2\delta-4\beta\delta-4\delta+\beta^3-2\beta^2-4\beta-8>0.
\end{equation}
Let $j(\delta);=\delta^3+3\beta\delta^2-2\delta^2+3\beta^2\delta-4\beta\delta-4\delta+\beta^3-2\beta^2-4\beta-8$. Then we have $j'(\delta)=3\delta^2+6\beta\delta-4\delta+3\beta^2-4\beta-4$. Solving $j'(\delta)=0$, we obtain $\delta=2-\beta, -\beta-2/3$. Since $j$ is a third degree polynomial in $\delta$ and the coefficient of $\delta^3$ is positive, $j$ takes a local maximum at $\delta=-\beta-2/3<0$ and a local minimum at $\delta=2-\beta>0$. Now we compute $g(-\beta-2/3)=-16<0$. So, we see that $g$ takes only one real root. Solving (\ref{lastExpression}) directly using Python, we obtain 
\begin{equation*}
\delta > -\beta+\frac{2}{3}-\frac{(-152+24\sqrt{33})^{1/3}}{3}-\frac{16}{3(-152+24\sqrt{33})^{1/3}}.
\end{equation*}  
Rounding the last expression up to two decimal places, we obtain $\delta>3.68-\beta$. 
\end{proof}

Note that $3.68>1+\sqrt{5}\approx{3.24}$. Now, thanks to Proposition~\ref{ChaosThm}, combining Lemmas~\ref{firstLem},~\ref{secondLem},~\ref{thirdLem}, and~\ref{forthLem}, we obtain Theorem~\ref{main}.

\subsection{Proof of Theorem~\ref{main2}}\label{SubProofMain2}
The following proof is similar to that of the last subsection, so we just give a skeleton. First, we want $h\in \tilde{\mathfrak{G}}$. It is easy to check that $h$ is continuous at $Y_0$ (the turning point), so $h$ is continuous on $[0,1/k)$. Since we want to make $h$ a function on a closed interval (to use Proposition~\ref{ChaosThm2}), we extend $h$ to $[0,1/k]$ (where $h(1/k)=\beta/k$). Technically speaking, this forces to introduce $r=+\infty$ via $r=\frac{\lambda}{1-k Y}$, but this does not really affect the following argument. 

Next, to make $h$ unimodal, we need $\beta-\frac{\mu k}{1-k Y_0}<0$ (the slope of $h$ on $[0,Y_0]$) since the slope of $h$ on $[Y_0, 1/k]$ is strictly positive (that is $\beta$). Thus we have 
\begin{equation}\label{Vshape}
\frac{\beta(1-k Y_0)}{k}< \mu.
\end{equation}  

Now, to make $h$ a function from $E:=[0,1/k]$ to itself, we further need: (1)~$h(0)\leq 1/k$, (2)~$0\leq h(Y_0) \leq 1/k$, and (3)~$h(1/k)\leq 1/k$. (2) and (3) are clear, and (1) simplifies to
\begin{equation}\label{onE}
\mu \leq \frac{1-k Y_0}{k^2 Y_0}.
\end{equation}

We assume (\ref{Vshape}) and (\ref{onE}) in the following. Now, to make $h\in \tilde{\mathfrak{G}}$, we need: (1)~$h(0)>0$ (this is clear) and (2)~$h(Y)<Y$ for all $Y\in [Y_0,1/k)$. We see that (2) is easy since $h(Y_0)=\beta Y_0 < Y_0$ (since $0<\beta<1$) and $h(Y)=\beta Y$ on $(Y_0,1/k]$. Note that we have
\begin{equation*}
\frac{1-k Y_0}{k^2 Y_0} - \frac{\beta(1-k Y_0)}{k} = \frac{(1-k\beta Y_0)(1-k Y_0)}{k^2 Y_0} > 0.
\end{equation*}
So, combining (\ref{Vshape}) and (\ref{onE}), we obtain
\begin{lem}\label{firstLemDS}
If $\frac{\beta(1-k Y_0)}{k}< \mu \leq \frac{1-k Y_0}{k^2 Y_0}$, then $h\in \tilde{\mathfrak{G}}$.
\end{lem}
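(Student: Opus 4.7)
The plan is to verify the defining properties of $\tilde{\mathfrak{G}}$ one at a time for the map $h$, taking $a=0$, $b=1/k$, $m=Y_0$. Continuity on $[0,1/k)$ is immediate from the piecewise formula once I check that the two branches agree at $Y_0$, which they do: both evaluate to $\beta Y_0$. Extending by setting $h(1/k):=\beta/k$ preserves continuity on the closed interval $E=[0,1/k]$.

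For unimodality with the unique minimum at $Y_0$, the right branch has slope $\beta>0$, so it is strictly increasing on $[Y_0,1/k]$. For the left branch to be strictly decreasing on $[0,Y_0]$ I need the slope $\beta-\mu k/(1-kY_0)$ to be negative, which rearranges to $\mu>\beta(1-kY_0)/k$, i.e.\ condition~(\ref{Vshape}).

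For $h(E)\subseteq E$, I use the shape just established: the extremum of $h$ attained in the interior is the minimum $h(Y_0)=\beta Y_0\geq 0$, so I only need to check the endpoints. At the right endpoint $h(1/k)=\beta/k<1/k$, since $\beta<1$. At the left endpoint a direct computation gives $h(0)=\mu k Y_0/(1-kY_0)$, and requiring $h(0)\leq 1/k$ is equivalent to $\mu\leq(1-kY_0)/(k^2Y_0)$, which is exactly condition~(\ref{onE}). The remaining structural requirements of $\tilde{\mathfrak{G}}$ are then immediate: $h(0)>0$ because $\mu,Y_0>0$, and $h(Y)=\beta Y<Y$ for all $Y\in[Y_0,1/k)$ because $0<\beta<1$ (the break point value $h(Y_0)=\beta Y_0<Y_0$ is included in this bound).

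The only point worth flagging is non-emptiness of the admissible interval for $\mu$, which follows from $(1-kY_0)/(k^2Y_0)-\beta(1-kY_0)/k=(1-k\beta Y_0)(1-kY_0)/(k^2Y_0)>0$, using $kY_0<1$ and $\beta<1$. I do not anticipate a genuine obstacle: the proof is a careful translation of each clause in the definition of $\tilde{\mathfrak{G}}$ into an inequality on $\mu$, and the only computation that requires any care is $h(0)=\mu k Y_0/(1-kY_0)$, from which the upper bound $\mu\leq(1-kY_0)/(k^2Y_0)$ falls out cleanly.
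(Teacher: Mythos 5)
Your proposal is correct and follows essentially the same route as the paper: verify each clause of the definition of $\tilde{\mathfrak{G}}$ (continuity at $Y_0$ and the extension $h(1/k)=\beta/k$, negative left slope giving $\mu>\beta(1-kY_0)/k$, $h(0)\leq 1/k$ giving $\mu\leq(1-kY_0)/(k^2Y_0)$, plus $h(0)>0$ and $h(Y)=\beta Y<Y$ on $[Y_0,1/k)$), together with the same nonemptiness check $(1-k\beta Y_0)(1-kY_0)/(k^2Y_0)>0$. Your explicit computation $h(0)=\mu kY_0/(1-kY_0)$ just makes the paper's "(1) simplifies to" step concrete; no gaps.
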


Next, we show that 
\begin{lem}\label{secondLemDS}
$h^2(Y_0) > Y_0$ if and only if $\frac{(1+\beta)(1-k Y_0)}{k}>\mu$.
\end{lem}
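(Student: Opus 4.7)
The proof plan is direct computation exploiting the piecewise-linear form of $h$; the idea is simply to evaluate $h^2(Y_0)$ explicitly and then compare it to $Y_0$.

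First I will compute $h(Y_0)$. Writing the upper branch of Equation~(\ref{DayShaferDyn}) as $h(Y)=aY+b$ with slope $a=\beta-\frac{\mu k}{1-kY_0}$ and intercept $b=\frac{\mu k Y_0}{1-kY_0}$ (after simplifying $\frac{\mu}{1-kY_0}-\mu$), substitution of $Y=Y_0$ causes the two $\mu$-dependent terms to cancel and yields $h(Y_0)=\beta Y_0$. This also reconfirms the continuity of $h$ at the turning point.

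Next I will compute $h^2(Y_0)=h(\beta Y_0)$. Because $0<\beta<1$ and $Y_0>0$, the image $\beta Y_0$ lies in $(0,Y_0)$, so the same linear branch applies and I may iterate $h(Y)=aY+b$ one more time. After collecting terms, a short computation should give
$$h^2(Y_0)=\beta^2 Y_0+\frac{\mu k Y_0(1-\beta)}{1-kY_0}.$$

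Finally I will rearrange the inequality $h^2(Y_0)>Y_0$. Dividing by $Y_0>0$, writing $1-\beta^2=(1-\beta)(1+\beta)$, and then dividing by $1-\beta>0$ and by $1-kY_0>0$ (both strictly positive under the standing hypothesis of Lemma~\ref{firstLemDS}), the inequality collapses to a clean algebraic comparison between $\mu$ and $\frac{(1+\beta)(1-kY_0)}{k}$, which is the stated criterion.

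There is no real obstacle: the proof is pure bookkeeping. The two things that require attention are (i) verifying that $\beta Y_0$ lies in the first piece of $h$ so that the linear formula can be iterated twice without switching branches, and (ii) keeping track of the signs of $1-kY_0$ and $1-\beta$ when clearing denominators, both of which are strictly positive by assumption so no inequality reverses along the way.
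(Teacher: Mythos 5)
Your computation is correct and follows essentially the same route as the paper's proof: evaluate the first branch at $Y_0$ to get $h(Y_0)=\beta Y_0$, note $\beta Y_0\in(0,Y_0]$ so the same branch applies again, and rearrange $h^2(Y_0)-Y_0$ using $1-\beta^2=(1-\beta)(1+\beta)$ with the sign conditions $1-\beta>0$, $1-kY_0>0$. The one thing you should make explicit is the direction of the final inequality: your algebra yields $h^2(Y_0)>Y_0$ if and only if $\mu>\frac{(1+\beta)(1-kY_0)}{k}$, which is the \emph{reverse} of the inequality as printed in the lemma statement; that printed direction is a slip in the paper (Theorem~\ref{main2} and the hypothesis $g^2(m)>m$ of Proposition~\ref{ChaosThm2} both require $\frac{(1+\beta)(1-kY_0)}{k}<\mu$), so writing only ``which is the stated criterion'' hides the fact that your (correct) conclusion has $\mu$ on the larger side.
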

\begin{proof}
We have 
\begin{equation*} 
h^2(Y_0)-Y_0 = h(\beta Y_0)-Y_0 = \frac{Y_0 (1-\beta)(1-Y_0 k)(Y_0 \beta k + Y_0 k - \beta + k\mu -1)}{1-Y_0 k}.
\end{equation*}
The last expression is positive if and only if $Y_0 \beta k + Y_0 k - \beta + k\mu -1>0$ since all the other terms are positive. Now a simple calculation shows that the last inequality is equivalent to $\frac{(1+\beta)(1-k Y_0)}{k}>\mu$. 
\end{proof}

For the rest of the paper, we assume $\frac{(1+\beta)(1-k Y_0)}{k}>\mu$. It is clear from the shape of the graph of $h$ (V-shape with $\beta<1$) that if $h$ has a fixed point on $[0,1/k]$, it must be on $[0,Y_0]$. Solving $Y=h(Y)$ on $[0,Y_0]$, we obtain $Y= \frac{\mu Y_0 k}{Y_0\beta k -Y_0 k-\beta+k\mu +1}$ (the unique fixed point of $h$). We write $\tilde z$ for this fixed point. Now, we show
\begin{lem}\label{thirdLemDS}
$\tilde{\Pi}=\{ \tilde{z} \}$. 
\end{lem}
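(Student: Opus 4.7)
The plan is to use the fact that membership in $\tilde{\Pi}$ forces both $Y$ and $h(Y)$ to lie in $[0,Y_0]$, so that $h$ acts on such points only through its (affine) first branch. Concretely, writing $a:=\beta-\frac{\mu k}{1-kY_0}$ and $b:=\frac{\mu}{1-kY_0}-\mu$, so that $h(Y)=aY+b$ on $[0,Y_0]$, the equation $h^2(Y)=Y$ for $Y\in\tilde{\Pi}$ collapses to a single linear equation rather than a piecewise one, which is the observation that makes uniqueness tractable.

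First I would verify that $\tilde{z}\in\tilde{\Pi}$: since $\tilde{z}$ is the unique fixed point of $h$ on $[0,1/k]$ and lies in $[0,Y_0]$ (as computed immediately before the lemma), we have $h(\tilde{z})=\tilde{z}\in[0,Y_0]$ and $h^2(\tilde{z})=\tilde{z}$, so $\tilde{z}\in\tilde{\Pi}$. For the reverse inclusion, take an arbitrary $Y\in\tilde{\Pi}$. Applying the affine formula for $h$ twice yields $h^2(Y)=a(aY+b)+b=a^2Y+(a+1)b$, and so the equation $h^2(Y)=Y$ rearranges as
\begin{equation*}
(a+1)\bigl[(a-1)Y+b\bigr]=0.
\end{equation*}

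The main obstacle is to rule out the factor $a+1=0$, since otherwise every point of the restricted set would be a period-two point and $\tilde{\Pi}$ could be infinite. This is where the standing assumption inherited from Lemma~\ref{secondLemDS} enters: the inequality between $\mu$ and $\frac{(1+\beta)(1-kY_0)}{k}$ is strict, so $\frac{\mu k}{1-kY_0}\neq 1+\beta$, hence $a\neq -1$. Dividing through by $a+1$ leaves $(a-1)Y+b=0$, i.e.\ $Y=\frac{b}{1-a}$. A short substitution (using $b=\frac{\mu kY_0}{1-kY_0}$ and $1-a=\frac{(1-\beta)(1-kY_0)+\mu k}{1-kY_0}$) recovers exactly the fixed-point formula $\tilde{z}=\frac{\mu Y_0 k}{Y_0\beta k - Y_0 k - \beta + k\mu + 1}$ displayed just before the lemma, and so $Y=\tilde{z}$. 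This proves $\tilde{\Pi}=\{\tilde{z}\}$.
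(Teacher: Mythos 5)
Your proof is correct and takes essentially the same route as the paper: by the definition of $\tilde{\Pi}$ one may restrict $h$ to its affine branch on $[0,Y_0]$ and solve the resulting linear equation $h^2(Y)=Y$, which yields the fixed point $\tilde{z}$. The only difference is that you explicitly rule out the degenerate slope case $a=-1$ via the strict inequality relating $\mu$ and $\frac{(1+\beta)(1-kY_0)}{k}$, a point the paper leaves implicit.
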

\begin{proof}
Recall that $\tilde{\Pi}=\{ Y \in [0, Y_0]: h(x)\in [0, Y_0] \textup{ and } h^2(Y)=Y \}$. So, we only need to consider $h$ on $[0, Y_0]$, thus we can set $h(Y)= \left(\beta - \frac{\mu k}{1-k Y_0}\right) Y + \frac{\mu}{1-k Y_0} - \mu$. Solving $h^2(Y)=Y$, we obtain $Y=\tilde{z}$.  
\end{proof}

Finally, we show that 
\begin{lem}\label{forthLemDS}
$h^3(Y_0) > \tilde{z}$ if and only if $\frac{(1-k Y_0)(1+2 \beta^2+\sqrt{4\beta^2+1})}{2\beta k} < \mu$ (Likewise, $h^3(Y_0) \geq \tilde{z}$ if and only if $\frac{(1-k Y_0)(1+2 \beta^2+\sqrt{4\beta^2+1})}{2\beta k} \leq \mu$).
\end{lem}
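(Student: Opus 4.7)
The plan is to reduce the inequality $h^3(Y_0)>\tilde z$ to a quadratic inequality in $\mu$, solve it explicitly, and then use the standing hypothesis (from Lemma~\ref{secondLemDS}) to discard the spurious lower-branch solution.

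First, I would locate $h^2(Y_0)$ on the correct branch of the V-shaped map. Since $h(Y_0)=\beta Y_0<Y_0$ sits on the left (decreasing) branch, a direct substitution gives
\begin{equation*}
h^2(Y_0)=\beta^2 Y_0+\frac{\mu k Y_0(1-\beta)}{1-kY_0}.
\end{equation*}
The standing hypothesis $\mu>\frac{(1+\beta)(1-kY_0)}{k}$ together with Lemma~\ref{secondLemDS} guarantees $h^2(Y_0)>Y_0$, so $h^2(Y_0)$ lies on the right (increasing) branch and $h^3(Y_0)=\beta\cdot h^2(Y_0)$, a clean closed form in $\beta,\mu,k,Y_0$. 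Using the simplified expression $\tilde z=\frac{\mu k Y_0}{(1-\beta)(1-kY_0)+\mu k}$ coming out of Lemma~\ref{thirdLemDS}, I would then form $h^3(Y_0)-\tilde z$, multiply through by the (manifestly positive) common denominator $(1-kY_0)[(1-\beta)(1-kY_0)+\mu k]$, and divide out the positive factor $Y_0(1-\beta)$.

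After the dust settles, introducing the abbreviations $a:=1-kY_0>0$ and $b:=\mu k>0$, the inequality $h^3(Y_0)>\tilde z$ becomes equivalent to the quadratic inequality
\begin{equation*}
\beta b^2 - a(2\beta^2+1)b + \beta^3 a^2 > 0.
\end{equation*}
The discriminant of this quadratic in $b$ is $a^2(4\beta^2+1)$, so its roots are $b_{\pm}=\frac{a(2\beta^2+1\pm\sqrt{4\beta^2+1})}{2\beta}$. Since the leading coefficient $\beta$ is positive, the quadratic is positive precisely when $b<b_-$ or $b>b_+$.

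Finally, I would rule out the $b<b_-$ branch by invoking the standing hypothesis in the form $b>(1+\beta)a$: it suffices to verify $b_-\leq (1+\beta)a$, which after multiplication by $2\beta$ reduces to $\sqrt{4\beta^2+1}\geq 1-2\beta$, a trivial inequality. Therefore only $b>b_+$ can hold, and substituting $a=1-kY_0$, $b=\mu k$ yields the claimed bound $\mu>\frac{(1-kY_0)(1+2\beta^2+\sqrt{4\beta^2+1})}{2\beta k}$. The weak inequality case is identical after replacing strict by non-strict inequalities throughout. I expect the main obstacle to be purely computational: specifically, one has to spot the factorisation $\beta(\beta^2+(1-\beta)^2)-1=(\beta-1)(2\beta^2+1)$ during the cross-multiplication, which is exactly what makes the $(1-\beta)$ factor cancel and leaves the tractable quadratic above.
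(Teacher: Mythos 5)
Your proof is correct and follows essentially the same route as the paper: use the standing assumption $h^2(Y_0)>Y_0$ to write $h^3(Y_0)=\beta h^2(Y_0)$ in closed form, subtract $\tilde z$, and reduce to the quadratic $\beta b^2-(2\beta^2+1)ab+\beta^3a^2>0$ in $b=\mu k$, $a=1-kY_0$ (exactly the bracketed factor in the paper's Equation~(\ref{lastEqnDS})), whose larger root yields the stated bound. If anything, your handling of the smaller root is more careful than the paper's, which dismisses it merely by ``considering that $\mu>0$'' even though that root is positive; your argument via the standing hypothesis $\mu>\frac{(1+\beta)(1-kY_0)}{k}$ combined with the trivial inequality $\sqrt{4\beta^2+1}\geq 1-2\beta$ is the correct way to exclude that branch.
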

\begin{proof}
Again, we consider the strict inequality case only. Since we assume that $h^2(Y_0)>Y_0$ (Lemma~\ref{secondLemDS}), we have $h^3(Y_0)=h(h^2(Y_0))=\beta h^2(Y_0)=h(\beta Y_0)$. Now a (slightly messy) calculation gives
\begin{equation}\label{lastEqnDS}
h(\beta Y_0)-\tilde{z} = \frac{Y_0(1-\beta)\left(\beta k^2 \mu^2 +\mu(2Y_0\beta^2 k^2 + Y_0 k^2 -2\beta^2 k - k) + \beta^3 + {Y_0}^2\beta^3 k^2 -2 Y_0 \beta^3 k\right)}{(1-Y_0 k)\left((1-\beta)(1-Y_0 k) + k\mu\right)}.
\end{equation}
Now, we see that the last expression of (\ref{lastEqnDS}) is positive if and only if 
\begin{equation*}
\beta k^2 \mu^2 +\mu(2Y_0\beta^2 k^2 + Y_0 k^2 -2\beta^2 k - k) + \beta^3 + {Y_0}^2\beta^3 k^2 -2 Y_0 \beta^3 k>0.
\end{equation*} 
Solving the last inequality for $\mu$ (and considering that $\mu>0$), we obtain $\frac{(1-k Y_0)(1+2 \beta^2+\sqrt{4\beta^2+1})}{2\beta k} < \mu$.
\end{proof}

Now, thanks to Proposition~\ref{ChaosThm2}, combining Lemmas~\ref{firstLemDS},~\ref{secondLemDS},~\ref{thirdLemDS}, and~\ref{forthLemDS}, we obtain Theorem~\ref{main2}.
\section{Ergodic properties for macrodynamics}\label{SecErgodic}
Our (numerical/theoretical) argument in the rest of the paper use (a tiny bit of) ergodic theory. Here, we give a quick review of ergodic theory. If the reader is familiar with ergodic theory, skip Subsection~\ref{background}. Our basic references for ergodic theory are classical~\cite{ColletEckmann-dynamics-book},~\cite{Day-dynamics-book}, and~\cite{MeloStrien-dynamics-book}. Note that our strategy (philosophy) in the following stems from~\cite{Lyubich-forty-Modern} and~\cite{Shen:2014} (these are quite readable expository articles on recent developments of unimodal dynamics). We stress that a deep result by Avila et.al.~(Proposition~\ref{Avila}) theoretically supports our argument. Note that our argument in Sections~\ref{SecErgodic} and~\ref{SecErgodicSensitivity} is very similar to (actually it is almost a replicate of) that in our unpublished note~\cite[Secs.~5 and~6]{Uchiyama-ergodic-arX}, but we include it here to make the paper self-contained. (Still the same argument applies here anyway.)

\subsection{Background from ergodic theory}\label{background}
Let $I$ be a compact interval on $\mathbb R$.  Let $\mathcal{B}$ denote the Borel $\sigma$-algebra of $I$. Let $\zeta: \mathcal{B}\rightarrow [0,\infty]$ be a measure on $I$. A measurable map $g: I\rightarrow I$ is called \emph{ergodic with respect to $\zeta$} if whenever $g^{-1}(A)=A$ for $A\in \mathcal{B}$, then either $\zeta(A)=0$ or $\zeta(I\backslash A)=0$. We say that a measure $\zeta$ is \emph{g-invariant} if $\zeta(g^{-1}(A))=\zeta(A)$ for any $A\in \mathcal{B}$. We write $\lambda$ for the Lebesgue measure on $I$. A measure $\zeta$ is called \emph{absolutely continuous with respect to $\lambda$} if whenever $\lambda(A)=0$ for $A\in \mathcal{B}$ then $\zeta(A)=0$. We write an "acim" for a measure that is absolutely continuous with respect to $\lambda$ and $g$-invariant (if $g$ is clear from the context, we just say "invariant"). Remember that if $\zeta$ is an acim then there exists a ($\lambda$-integrable) density function (Radon-Nikodym derivative) $\xi: I \rightarrow \mathbb{R}$ with $d\zeta = \xi d\lambda$, see~\cite[\Rmnum{2}.8]{ColletEckmann-dynamics-book}. It is well-known that if $g$ is ergodic with respect to an acim $\zeta$, then we can give an estimate of $\xi$ (hence an estimate of $\zeta$) using some iterates of $g$, see~\cite[8.5.2 and 8.5.3]{Day-dynamics-book} (we use this argument below in Theorem~\ref{ergThm1-1}).  

Here, we recall (a special case of) the famous Birkhoff's ergodic theorem~\cite[Thm.~8.2]{Day-dynamics-book}, which states that under certain conditions the time average of $g$ is equal to its space average:
\begin{prop}\label{birkhoff}
If $g$ has an acim $\zeta$, $g$ is ergodic with respect to $\zeta$, and $g$ is $\zeta$-integrable, then
\begin{equation}\label{ergodicSumEqn}
\lim_{n\rightarrow \infty}\frac{1}{n}\sum_{k=0}^{n-1}g^{k}(x)=\int_I g\; d\zeta \text{ for $\zeta$-almost all $x$.}
\end{equation}
%Moreover, under the same conditions, there exists a density function (Radon-Nikodym derivative) $\xi: I \rightarrow \mathbb{R}$ with $d\zeta = \xi d\mu$.  
\end{prop}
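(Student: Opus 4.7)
The plan is to deduce this from the full Birkhoff pointwise ergodic theorem by specialising to the identity function $\phi(y)=y$ on $I$. Note that $\phi$ is $\zeta$-integrable because $I$ is compact and $\zeta$ (being an acim on a bounded interval) is finite, so $\int_I |y|\,d\zeta<\infty$. Moreover, $g$-invariance gives $\int_I g\,d\zeta = \int_I \phi\circ g\,d\zeta = \int_I \phi\,d(g_*\zeta) = \int_I \phi\,d\zeta$, so the RHS of (\ref{ergodicSumEqn}) coincides with $\int_I \phi\,d\zeta$, while the LHS is exactly $\tfrac1n\sum_{k=0}^{n-1}\phi(g^k(x))$. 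It therefore suffices to prove, for any $\zeta$-integrable $f:I\to\mathbb{R}$, that $\tfrac1n\sum_{k=0}^{n-1} f(g^k(x)) \to \int_I f\,d\zeta$ for $\zeta$-almost every $x$.

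First I would establish the \emph{maximal ergodic theorem}: writing $S_n f(x)=\sum_{k=0}^{n-1} f(g^k(x))$ and $E=\{x\in I:\sup_{n\geq 1} S_n f(x)>0\}$, one has $\int_E f\,d\zeta\geq 0$. The standard proof is a combinatorial ``sunrise'' argument: for each $x\in E$ partition the integers $\geq 0$ into maximal blocks on which the partial sums of $f\circ g^k$ stay positive, then aggregate the block contributions using $g$-invariance of $\zeta$ to absorb the shifts. I expect this to be the main obstacle of the proof, as it is the only genuinely clever step; everything afterwards is bookkeeping around this single inequality.

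Next I would deduce almost-everywhere convergence of the ergodic averages. For each pair of rationals $\alpha<\beta$ set
\[
A_{\alpha,\beta}=\Bigl\{x\in I : \liminf_{n\to\infty}\tfrac{1}{n}S_n f(x)<\alpha<\beta<\limsup_{n\to\infty}\tfrac{1}{n}S_n f(x)\Bigr\}.
\]
This set is $g$-invariant modulo null sets. Applying the maximal theorem on $A_{\alpha,\beta}$ first to $(f-\beta)\mathbf{1}_{A_{\alpha,\beta}}$ and then to $(\alpha-f)\mathbf{1}_{A_{\alpha,\beta}}$ yields the squeeze $\beta\,\zeta(A_{\alpha,\beta})\leq \int_{A_{\alpha,\beta}} f\,d\zeta \leq \alpha\,\zeta(A_{\alpha,\beta})$, which forces $\zeta(A_{\alpha,\beta})=0$. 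Taking the countable union over all rational pairs $\alpha<\beta$ gives $\tfrac1n S_n f\to \tilde f$ pointwise $\zeta$-a.e.\ for some measurable $\tilde f$.

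Finally I would identify $\tilde f$. The limit satisfies $\tilde f\circ g=\tilde f$ $\zeta$-a.e., so ergodicity of $g$ forces $\tilde f$ to be $\zeta$-almost-everywhere equal to a constant $c$. Since $g$-invariance yields $\int_I \tfrac1n S_n f\,d\zeta=\int_I f\,d\zeta$ for every $n$, passing to the limit via dominated convergence (after the standard truncation of $f$, whose uniform $L^1$-control comes from the maximal inequality) gives $c=\int_I f\,d\zeta$. Specialising $f=\phi$ and combining with the first paragraph recovers the stated formula (\ref{ergodicSumEqn}).
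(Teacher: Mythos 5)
Your proposal is correct, but it does something the paper never attempts: the paper offers no proof of Proposition~\ref{birkhoff} at all, simply recalling it as a special case of the classical Birkhoff pointwise ergodic theorem with a citation to Day's book (Thm.~8.2 there). What you have written is essentially the standard textbook proof of that classical theorem, specialised via the observation that the ergodic average in (\ref{ergodicSumEqn}) is the Birkhoff average of the identity observable $\phi(y)=y$, whose integral agrees with $\int_I g\,d\zeta$ by $g$-invariance; your chain maximal ergodic theorem $\rightarrow$ squeeze on the sets $A_{\alpha,\beta}$ $\rightarrow$ identification of the a.e.\ constant limit by ergodicity is the canonical route (Garsia/Riesz), and it is sound. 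What your approach buys is self-containedness and a clear record of exactly which hypotheses are used (invariance for the shift-absorption in the maximal inequality, ergodicity only at the very last step to make $\tilde f$ constant); what the paper's citation buys is brevity, which is appropriate since the result is classical and the paper only uses it as a black box. Two small points worth tightening if you keep your version: (i) as stated, the limit is $\frac{1}{\zeta(I)}\int_I g\,d\zeta$, so you need the acim normalised to a probability measure ($\zeta(I)=1$) to get the formula literally as displayed — the paper's statement tacitly assumes this convention too, so say it explicitly; (ii) in the special case you actually need, $\phi$ is bounded on the compact interval $I$, so the truncation and uniform $L^1$-control you invoke for the final dominated-convergence step are unnecessary — the averages are uniformly bounded and the exchange of limit and integral is immediate, and likewise the a priori finiteness of $\tilde f$ and the invariance relation $\tilde f\circ g=\tilde f$ (via $\frac1n f(g^n x)\to 0$) are trivial here.
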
 
 
In particular, Proposition~\ref{birkhoff} says that (if the conditions are met) the time average of $g$ converges to a constant (for almost all $x$), in other words, we can "predict" the future on average. We call the integral on the right-hand side (or sometimes the sum inside the limit on the left-hand side) of Equation (\ref{ergodicSumEqn}), that is $\int_I g\; d\zeta$ (or sometimes $\frac{1}{n}\sum_{k=0}^{n-1}g^{k}(x)$), the \emph{ergodic sum} of $g$ (which we are referring to would be clear from the context). In the following, we try to apply Proposition~\ref{birkhoff} to our macrodynamics defined by $f$ (or $h$) with the condition $f\in \mathfrak{G}$ (or $h\in \tilde{\mathfrak{G}}$ respectively). To do that, we need to show that $f$ (or $h$) has an absolutely continuous invariant measure $\zeta$, $f$ (or $h$) is ergodic with respect to $\zeta$, and $f$ (or $h$) is $\zeta$-integrable. (The last condition is clear since $f$ (or $h$) is continuous on a compact interval.)  

\begin{rem}
As we said in Introduction, we need to use two quite different mathematical techniques to apply Proposition~\ref{birkhoff} to $f$ and $h$ respectively. For $f$ ("nonlinear model"), we need some very delicate combination of argument from the theory of $S$-unimodal maps as in Subsections~\ref{SubSunimodal}. For $h$ ("piecewise linear model"), a classical result of Lasota and Yorke~\cite{LasotaYorke-expansive-Trans} for iteratively expansive maps gives a big shortcut, see Section~7 for details.  
\end{rem}

\subsection{$S$-unimodal maps}\label{SubSunimodal}
In general, it is pretty difficult to prove the existence of an acim for a measurable transformation $g$ except for some special cases such as when $g$ is "expansive" or "iteratively expansive", see~\cite[Chap.~5]{MeloStrien-dynamics-book},~\cite[Sec.~4]{Shen:2014}, and~\cite{Lyubich-forty-Modern} for an overview of this problem. Recall that $g$ is called \emph{expansive} if $g$ is piecewise $C^2$ and $|g'(x)|>1$ for $\lambda$-almost all $x$. A typical example of an expansive map is a well-studied "tent map"~\cite[8.5.4 and 8.5.5]{Day-dynamics-book}. Further recall that a slightly more general "iteratively expansive" map, that is a piecewise $C^2$ map with $|(g^n(x))'|>1$ for some positive integer $n\geq 1$ for $\lambda$-almost all $x$. In Section~7, we consider a numerical example where our function $h$ is iteratively expansive. (This is actually an easy case.) See~\cite{SatoYano-ergo-AIP} and~\cite{SatoYano-chaos-IJET} for applications of an acim for an iteratively expansive map in economics. 

In the following, we consider the function $f$ (nonlinear model) first. Note that $f$ is not (even iteratively) expansive since it has a critical point $s$. (So this is a hard case.) It turns out that to establish the existence of an acim for $f$, we need some deep analytical results to investigate the counter play between the contraction of $f$ near the critical point $s$ and the expansion of $f$ at $f(s)$ (that is far from $s$). Here, the main tool we use is the theory of $S$-unimodal maps (due to Singer~\cite{Singer-unimodal-SIAM}), whose ergodic properties are well studied, see~\cite[Part \Rmnum{2}]{ColletEckmann-dynamics-book},~\cite[Chap.~5]{MeloStrien-dynamics-book},~\cite{Avila-unimodal-Annals},~\cite{Avila-stochastic-Invent} for example. (Our function $f$ is actually $S$-unimodal as we will show below.) Let $g$ be a measurable transformation defined on a compact interval $I=[a,b]$ of $\mathbb{R}$.
\begin{defn}
A function $g$ is called \emph{$S$-unimodal} if the following conditions are satisfied:
\begin{enumerate}
\item{$g$ is $C^3$.}
\item{$g$ is unimodal with the unique critical point $x=c$ in $(a,b)$ and $g'(x)\neq 0$ except when $x=c$.}
\item{The Schwarzian derivative of $g$, that is, $Sg(x)=\frac{g'''(x)}{g'(x)}-\frac{3}{2}\left(\frac{g''(x)}{g'(x)}\right)^2$ is negative except at $x=c$.}
\end{enumerate}  
Moreover the critical point $x=c$ is called \emph{non-flat} and of order $l$ if there are positive constants $O_1$, $O_2$ with 
\begin{equation*}
O_1|x-c|^{l-1} \leq |g'(x)| \leq O_2|x-c|^{l-1}.
\end{equation*}
\end{defn}

Note that a well-known "logistic map" ($g(x)=r x(1-x)$ for $r\in (0,4]$) is $S$-unimodal. The first key result in this section is
\begin{prop}\label{ergodProp}
If $g$ is $S$-unimodal with a non-flat critical point and without an attracting periodic orbit, then $g$ is ergodic with respect to any absolutely continuous measure $\zeta$. 
\end{prop}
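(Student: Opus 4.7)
The plan is to prove ergodicity by combining the distortion theory available for $S$-unimodal maps with a pullback argument applied to density points of a hypothetical nontrivial invariant set. The basic analytic input is the Koebe principle: for maps with negative Schwarzian derivative and a non-flat critical point, compositions $g^n$ have uniformly bounded distortion on intervals whose forward iterates avoid a fixed neighbourhood of the critical point $c$. I would quote this from \cite[Chap.~5]{MeloStrien-dynamics-book} rather than rederive it. A second input is the absence of wandering intervals for $S$-unimodal maps with non-flat critical points (Martens-de Melo-van Strien), which together with the hypothesis that $g$ has no attracting periodic orbit is enough to conclude that the dynamics admits a single metric attractor $\Lambda$ on which a.e.\ orbit accumulates; in particular for $\lambda$-a.e.\ $x$ the $\omega$-limit set $\omega(x)=\Lambda$, and a generic orbit visits every open subinterval of $\Lambda$.

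With these structural facts in hand, I would argue by contradiction. Suppose $A\in \mathcal{B}$ satisfies $g^{-1}(A)=A$ with $0<\zeta(A)<\zeta(I)$. Absolute continuity of $\zeta$ with respect to $\lambda$ forces $\lambda(A)>0$ and $\lambda(I\setminus A)>0$, so the Lebesgue density theorem provides density points $x_0\in A$ and $y_0\in I\setminus A$, both with $\omega$-limit set $\Lambda$. Fix a small "nice" interval $V\subset \Lambda$. Using the orbit of $x_0$ together with the Koebe principle, I would construct a sequence of intervals $J_n\ni x_0$ and integers $k_n\to\infty$ such that $g^{k_n}(J_n)=V$ and the distortion of $g^{k_n}|_{J_n}$ is bounded uniformly in $n$. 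Since $g^{-1}(A)=A$ and $x_0$ is a density point of $A$, the fraction of $J_n$ lying in $A$ tends to $1$; bounded distortion then transfers this estimate to $V$, so the fraction of $V$ lying in $A$ is at least $1-\varepsilon$ for any $\varepsilon>0$. Applying the same argument at $y_0$ yields the same lower bound for $I\setminus A$ on $V$, a contradiction. Hence every invariant set has $\zeta(A)=0$ or $\zeta(I\setminus A)=0$, which is ergodicity.

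The main obstacle, as is typical in this circle of ideas, will be the construction of the pullback chain with controlled distortion: one must arrange that the intervals $J_n$ do not shrink onto $c$ too quickly and that the preimage chain $J_n\to g(J_n)\to\cdots\to V$ stays out of a fixed neighbourhood of the critical point a definite fraction of the time. This is precisely where the negative Schwarzian hypothesis and the non-flatness of $c$ are used: non-flatness controls the local behaviour of $g$ near $c$ and rules out pathological pullbacks, while negative Schwarzian yields the Koebe distortion bound along the entire chain. For the present proposition I would invoke these technical results from \cite[Part~\Rmnum{2}]{ColletEckmann-dynamics-book} and \cite[Chap.~5]{MeloStrien-dynamics-book} rather than reprove them; the hypothesis of no attracting periodic orbit enters only through the structural statement that there is a single attractor $\Lambda$ supporting the typical dynamics, after which the pullback/density-point argument gives ergodicity in one stroke.
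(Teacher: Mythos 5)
Your closing step---pass from $\zeta$ to $\lambda$ by absolute continuity and then contradict ergodicity with respect to Lebesgue measure---is exactly the paper's proof: the paper simply quotes the fact that an $S$-unimodal map with non-flat critical point and no attracting periodic orbit is ergodic with respect to $\lambda$ (\cite[Thm.~1.2]{MeloStrien-dynamics-book}, i.e.\ the Blokh--Lyubich type theorem) and transfers the dichotomy $\lambda(A)=0$ or $\lambda(I\setminus A)=0$ to $\zeta$ in one line. Everything preceding that in your proposal is an attempt to reprove the quoted theorem, and that is where the gaps are.

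Two concrete problems. First, the structural claim you lean on---that $\omega(x)=\Lambda$ for $\lambda$-a.e.\ $x$ \emph{and} that a generic orbit visits every open subinterval of $\Lambda$, so that one may ``fix a small nice interval $V\subset\Lambda$''---presupposes that the metric attractor is a finite union of intervals. Under the hypotheses of the proposition $\Lambda$ can be a Cantor set: this is the case for every infinitely renormalizable map (solenoidal attractor), and for non-flat critical points of higher order even for non-renormalizable maps with a wild (absorbing) Cantor attractor. In those cases there is no interval $V\subset\Lambda$, the pullback-onto-$V$ argument does not apply as stated, and ergodicity must be obtained by a different route; your sketch silently excludes these cases. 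Second, the step you defer---producing, for a.e.\ $x_0$, arbitrarily small intervals $J_n\ni x_0$ mapped by $g^{k_n}$ onto a fixed $V$ (or onto intervals of definite size) with uniformly bounded distortion---is not a routine Koebe application but is essentially the whole content of the ergodicity theorem itself; moreover the no-attracting-orbit hypothesis is used precisely in those return/expansion estimates, not only through the existence of a single attractor as you assert. If you are prepared to invoke results of that strength from \cite{ColletEckmann-dynamics-book} and \cite{MeloStrien-dynamics-book}, you should invoke Lebesgue ergodicity directly, which collapses your argument to the paper's two-line proof.
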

\begin{proof}
The following proof is in~\cite{Uchiyama-ergodic-arX}, but, we reproduce it here since the proof is short. Let $g$ be $S$-unimodal with a non-flat critical point and without an attracting periodic orbit. Suppose that $g^{-1}(A)=A$ for some $A\in\mathcal{B}$. Then we have $\lambda(g^{-1}(A))=\lambda(A)$. From~\cite[Thm.~1.2]{MeloStrien-dynamics-book} we know that $g$ is ergodic with respect to $\lambda$, so we obtain that $\lambda(A)=0$ or $\lambda(I\backslash A)=0$. This yields $\zeta(A)=0$ or $\zeta(I\backslash A)=0$ since $\zeta$ is absolutely continuous.  
\end{proof}

To end this subsection, we prove that
\begin{lem}\label{S-unimodalLem}
The function $f \in \mathfrak{G}$ is $S$-unimodal and the unique critical point $s \left(=\frac{\beta+\delta}{2\delta k}\right)$ of $f$ is non-flat and of order $2$.
\end{lem}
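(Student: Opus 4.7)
The plan is to verify the three defining conditions for $S$-unimodality directly, and then to compute $f'$ explicitly near $s$ to read off the order of the critical point. Since $f(Y) = \beta Y + \delta Y(1 - kY) = (\beta+\delta)Y - \delta k Y^2$ is a quadratic polynomial, condition (1) of the $S$-unimodal definition ($C^3$ smoothness) is immediate, and condition (2) (unimodality with unique critical point at $s$ and $f'(x) \neq 0$ otherwise) was already established in the course of proving Lemma~\ref{firstLem}, since $f'(Y) = \beta + \delta - 2\delta k Y$ has the unique root $Y = s$.

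The main item is condition (3), the negativity of the Schwarzian. The plan is to exploit the fact that $f$ is quadratic, so $f'''(x) \equiv 0$, and hence
\begin{equation*}
Sf(x) = \frac{f'''(x)}{f'(x)} - \frac{3}{2}\left(\frac{f''(x)}{f'(x)}\right)^2 = -\frac{3}{2}\left(\frac{f''(x)}{f'(x)}\right)^2.
\end{equation*}
Computing $f''(x) = -2\delta k$, we see that $f''$ is a nonzero constant, so for every $x \neq s$ (where $f'(x) \neq 0$), the fraction $f''(x)/f'(x)$ is a well-defined nonzero real number, whence $Sf(x) < 0$.

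For the non-flatness and the order claim, the plan is simply to rewrite $f'$ in terms of $x - s$. Since $s = \frac{\beta + \delta}{2\delta k}$, we have
\begin{equation*}
f'(x) = \beta + \delta - 2\delta k x = -2\delta k (x - s),
\end{equation*}
so $|f'(x)| = 2\delta k \, |x - s|^{1}$. Taking $O_1 = O_2 = 2\delta k > 0$ and $l - 1 = 1$, i.e.\ $l = 2$, gives the non-flatness condition with order exactly $2$, completing the proof. No real obstacle is expected: every step reduces to a one-line computation using that $f$ is a quadratic with a known critical point.
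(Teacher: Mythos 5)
Your proof is correct and follows essentially the same route as the paper: verify $C^3$ smoothness and unimodality directly, use $f'''\equiv 0$ together with $f''=-2\delta k\neq 0$ to get $Sf<0$ off the critical point, and bound $|f'|$ near $s$ to get non-flatness of order $2$. The only (harmless) difference is that your exact identity $f'(Y)=-2\delta k(Y-s)$ lets you take $O_1=O_2=2\delta k$, whereas the paper uses the looser constants $O_1=\delta k$ and $O_2=3\delta k$.
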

\begin{proof}
First, it is clear that $f$ is $C^3$ and unimodal with the unique critical point $Y=s$ in $(0,1/k)$ and $f'(Y)\neq 0$ except at the critical point. Second, since $f'''(Y)=0$ and $f''(x)\neq 0$ everywhere, we have $Sf(Y)<0$ except at the critical point. Third, since we have $|f'(Y)|=|2\delta k Y -\beta-\delta|$ (note that the turning point of $|f'(Y)|$ is at $Y=s$), if we take $O_1=\delta k$ and $O_2=3\delta k$, we can bound $|f'(Y)|$ from below and above by $O_1|Y-s|$ and $O_2|Y-s|$ as required. 
\end{proof}

\subsection{Our strategy and the existence of an acim}\label{SubOurStrategy}
The second key result in this section is

\begin{prop}\cite[Thm.~\Rmnum{2}.4.1]{ColletEckmann-dynamics-book}\label{criticalOrbit}
If $g$ is $S$-unimodal, then every stable periodic orbit attracts at least one of $a$, $b$, or $c$ (i.e.~the endpoints of $I$ or the critical point of $g$). 
\end{prop}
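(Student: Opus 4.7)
The plan is to invoke Singer's classical argument, whose engine is the minimum principle for maps with negative Schwarzian derivative. Let $\mathcal{O}=\{p_0,\ldots,p_{n-1}\}$ be a stable periodic orbit of period $n$ and set $F:=g^n$. The chain rule
\begin{equation*}
S(\varphi\circ\psi)(x)=(S\varphi)(\psi(x))\cdot(\psi'(x))^2+S\psi(x)
\end{equation*}
together with $Sg<0$ off $c$ gives, by induction, $SF<0$ everywhere except on the critical set of $F$, which consists of the preimages $g^{-i}(c)$ for $0\le i<n$.

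Next I would single out the immediate basin. Let $J=(\alpha,\beta)$ be the connected component of the immediate basin of attraction of $\mathcal{O}$ under $F$ that contains $p_0$; it is an open interval with $F(J)\subset J$ and $F^k(x)\to p_0$ for every $x\in J$. Arguing by contradiction, assume that none of $a$, $b$, $c$ is attracted to $\mathcal{O}$. Then $[\alpha,\beta]\subset(a,b)$, the interval contains no critical point of $F$, and hence $F$ is a $C^3$ diffeomorphism with $SF<0$ on a neighbourhood of $[\alpha,\beta]$.

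Since $F$ is a homeomorphism near $[\alpha,\beta]$ and $J$ is a component of the immediate basin, forward invariance forces $\{F(\alpha),F(\beta)\}\subset\{\alpha,\beta\}$. Replacing $F$ by $F^2$ if necessary (using the chain rule once more to keep $SF^2<0$), one may assume $F(\alpha)=\alpha$ and $F(\beta)=\beta$. Both boundary fixed points then satisfy $|F'(\alpha)|,|F'(\beta)|\ge 1$: if say $|F'(\alpha)|<1$, then $\alpha$ would itself be an attracting fixed point, absorbing a one-sided neighbourhood meeting $J$ and contradicting the convergence to $p_0$ there. So $F$ has three fixed points $\alpha,p_0,\beta$ on $[\alpha,\beta]$ with multipliers of modulus at least $1$ at the endpoints and at most $1$ at $p_0$. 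Applying Rolle's theorem to the $C^1$ function $F(x)-x$ then yields two interior points at which $F'=1$, forcing $|F'|$ to attain a positive local minimum at $p_0$ in the interior of $(\alpha,\beta)$. This contradicts the \emph{minimum principle}: on any interval where $SF<0$ and $F'\neq 0$, the function $|F'|$ cannot attain a positive interior local minimum, as one sees from the identity $SF=(\log|F'|)''-\frac{1}{2}((\log|F'|)')^2$ by inspecting the sign of $(\log|F'|)''$ at such a minimum.

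The main obstacle I anticipate is the boundary case analysis: ensuring that $\alpha$ and $\beta$ really are fixed (after passing to $F^2$), and that they are not themselves attracting. The rest of the argument is essentially a packaging of two robust tools, namely the Schwarzian chain rule and the minimum principle, together with Rolle's theorem applied to the shifted map $F-\textup{id}$.
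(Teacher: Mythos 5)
The paper does not prove this proposition at all: it is imported verbatim from Collet--Eckmann (Thm.~II.4.1), i.e.\ it is Singer's theorem, so your proposal is being measured against the standard textbook proof rather than anything in the paper --- and what you wrote is essentially that standard proof (Schwarzian chain rule, immediate basin component, boundary multipliers $\ge 1$, minimum principle), and it is sound in outline. Three refinements are worth noting. First, the claim ``$[\alpha,\beta]\subset(a,b)$'' does not actually follow from assuming that none of $a,b,c$ is attracted: the basin component is open, so one can have $\alpha=a$ while $a$ itself is not attracted. Fortunately you never need this inclusion; what you need is $F(\{\alpha,\beta\})\subset\{\alpha,\beta\}$, which holds for an endpoint of $J$ interior to $I$ by maximality of the component, and for an endpoint of $J$ equal to $a$ or $b$ because $F(\alpha)\in J$ would force that endpoint of $I$ to be attracted, contradicting your standing hypothesis. (Relatedly, ``no critical point of $F$ in $[\alpha,\beta]$'' is immediate only for the open interval $J$; at the endpoints it is your multiplier estimate $|F'(\alpha)|,|F'(\beta)|\ge 1$ that rules out criticality, so the order of the steps should reflect that.) Second, the Rolle detour is unnecessary and slightly misstated: the interior minimum of $|F'|$ need not sit at $p_0$; it is cleaner to apply the strict minimum principle directly on $[\alpha,\beta]$ to get $|F'(p_0)|>\min\{|F'(\alpha)|,|F'(\beta)|\}\ge 1$, contradicting $|F'(p_0)|\le 1$. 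Third, in Collet--Eckmann ``stable'' includes orbits with multiplier of modulus one, whose immediate basin may be one-sided, so the component containing $p_0$ need not be an open interval around $p_0$; that case needs the small extra argument (pass to $F^2$ so the multiplier is $+1$ and run the minimum principle on the one-sided interval $[p_0,\beta]$ to contradict attraction from that side). None of these affects the core idea, which is exactly the argument behind the cited result.
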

Proposition~\ref{criticalOrbit} means that all "visible" orbits (in numerical experiments) are orbits containing $a$, $b$, or $c$ only (in the long run). We consider that only these visible orbits are meaningful in economics (or in real life) since it is widely believed that every economic modelling is some sort of an approximation of real economic activities and contains inevitable errors. Here is the third key result for this section: 
\begin{prop}\cite[Cor.~\Rmnum{2}.4.2]{ColletEckmann-dynamics-book}\label{noStableOrbit}
If $g$ is $S$-unimodal, then $g$ has at most one stable periodic orbit, plus possibly a stable fixed point. If the critical point $c$ is not attracted to a stable periodic orbit, then $g$ has no stable periodic orbit.   
\end{prop}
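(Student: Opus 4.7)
The plan is to invoke Proposition~\ref{criticalOrbit} (every stable periodic orbit of an $S$-unimodal map attracts at least one of the endpoints $a$, $b$, or the unique critical point $c$) as the workhorse, and then squeeze the \emph{a priori} count of ``at most three stable periodic orbits'' down to the claimed ``at most one stable periodic orbit plus possibly one stable fixed point'' by exploiting unimodality and the boundary behaviour of $g$.

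First I would set up the pigeonhole step: since the basins of distinct stable periodic orbits are pairwise disjoint and each stable orbit must contain one of $\{a,b,c\}$ in its basin, there are a priori at most three stable periodic orbits. The critical point $c$ is the only ``generic'' witness — it can serve for an attracting cycle of any period — and since distinct orbits have disjoint basins, $c$ contributes at most one stable periodic orbit. This accounts for the single ``stable periodic orbit'' allowed in the proposition.

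Second, I would show that an endpoint can witness an attracting periodic orbit only when that orbit is a stable fixed point at that endpoint. The key observation is that the immediate basin of a stable $n$-cycle is a union of $n$ open intervals, each containing exactly one periodic point of the cycle; so if an endpoint $e$ lies in such an immediate basin, then $e$ itself lies on the cycle. Under the $\mathfrak{G}$-type boundary conditions ($g(a)\geq a$, $g(b)<b$, unimodal with peak in the interior), being on a cycle through an endpoint forces the cycle to have length one: $b$ cannot be periodic because $g(b)<b$ and $g$ is monotone near $b$, so an inward-mapped orbit cannot return to $b$; while $a$ can be on a cycle only as a fixed point. Thus the endpoint witnesses contribute at most one extra attractor, and it must be a stable fixed point, which together with the ``at most one'' from $c$ yields the first half of the proposition.

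For the second clause, suppose $c$ is not attracted to any stable periodic orbit. Then $c$ lies in no attracting basin, so by Proposition~\ref{criticalOrbit} every stable periodic orbit must have an endpoint as witness. By the previous step each such orbit reduces to a stable fixed point, so there is no stable periodic orbit of period $\geq 2$, as claimed. The main obstacle I anticipate is the rigorous ruling out of ``endpoint-witnessed'' higher-period cycles: showing that no periodic point of period $\geq 2$ can coincide with an endpoint of $[a,b]$ under the unimodal monotonicity constraints requires careful case analysis, in particular distinguishing the sub-cases $g(a)=a$ versus $g(a)>a$, and handling one-sided immediate-basin components at endpoints together with possible neutrally attracting (parabolic) cycles. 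Once that analytic book-keeping is in place, the combinatorial tally ``one attractor through $c$ plus at most one fixed point at an endpoint'' falls out immediately.
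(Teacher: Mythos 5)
This proposition is not proved in the paper at all: it is quoted verbatim from Collet--Eckmann \cite[Cor.~II.4.2]{ColletEckmann-dynamics-book}, so your attempt has to be judged on its own merits, and it has a genuine gap at the step that does all the work. You claim that if an endpoint $e$ of $[a,b]$ lies in the immediate basin of a stable $n$-cycle, then $e$ itself lies on the cycle. That inference is false: an immediate-basin component containing a periodic point $p$ is an interval around $p$, and it may perfectly well reach the boundary of $[a,b]$, in which case the endpoint belongs to the basin without being periodic. For instance, a map in $\mathfrak{G}$ with $g(a)>a$ can have the orbit of $a$ converge to an interior attracting cycle of any period, so your conclusion that ``endpoint witnesses contribute at most one extra attractor, and it must be a stable fixed point'' does not follow, and with it both the count in the first clause and the derivation of the second clause collapse.

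A second, related problem is that you import the $\mathfrak{G}$-type boundary conditions ($g(a)\geq a$, $g(b)<b$) into a statement about general $S$-unimodal maps; these conditions are not part of the definition of $S$-unimodality, and, as noted above, they are in any case too weak to force an endpoint-witnessed attractor to be a fixed point. The actual mechanism behind Collet--Eckmann's corollary is their normalisation of unimodal maps, under which both endpoints are mapped to the fixed left endpoint; then the orbit of an endpoint is eventually constant, so any stable periodic orbit attracting an endpoint must be that boundary fixed point, and Singer's theorem (your Proposition~\ref{criticalOrbit}, in its immediate-basin form) leaves at most one further stable orbit, namely the one attracting $c$. If you want a self-contained proof, replace your ``endpoint in basin $\Rightarrow$ endpoint periodic'' step by an argument about where the forward orbit of the endpoint can converge under the relevant boundary normalisation (and treat neutrally attracting cycles, which Singer's theorem also covers, explicitly); your pigeonhole step via disjointness of basins and the uniqueness of the critical witness is fine as it stands.
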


In this paper, we interpret our numerical calculations based on  Propositions~\ref{criticalOrbit} and~\ref{noStableOrbit}. In particular, we look at the orbit starting from the critical point $s$, that is $\{s, f(s), f^2(s),\cdots\}$ (we call this orbit the "critical orbit"). If the critical orbit seems to eventually converge to a periodic orbit, we conclude that we can see the future: the average GDP level in the long run will be the average GDP level in this attracting periodic orbit. Note that in this case, $f$ is neither ergodic nor has an acim since most $f^n(s)$ accumulate around this attracting periodic orbit, but we do not care (since we can still predict the future). Otherwise, we compute (or give an estimate for) the following \emph{Lyapunov exponent} at the critical point $Y=s$ since the existence of a positive Lyapunov exponent at the critical point implies that the critical orbit is repelling and also is a strong indication for the existence of a chaos (hence the existence of an acim, see (CE1) in Proposition~\ref{sufficientACIM} below):

\begin{defn}
$\lim_{n\rightarrow \infty}\frac{1}{n}\sum_{i=1}^{n}\ln{|Dg^{n}(g(c))|}$ is called the \emph{Lyapunov exponent} of $g$ at $c$ (if the limit exists).
\end{defn}

If the Lyapunov exponent (at $c$) is positive, we test one of the following well-known sufficient conditions (within some numerical bound) to confirm the existence of an acim. 

\begin{prop}\label{sufficientACIM}
Suppose that $g$ is $S$-unimodal, $g$ has no attracting periodic orbit, and the critical point $c$ is non-flat. Then
$g$ has a unique acim $\zeta$ and $g$ is ergodic with respect to $\zeta$ if one of the following conditions is satisfied:
\begin{enumerate}
\item{Collet-Eckmann conditions (together with some regularity conditions)~\cite[Sec.~1]{ColletEckmann-Liapunov-ETDS}:
\begin{alignat*}{2}
&\textup{(CE1)}\;\;\liminf_{n\rightarrow\infty}\frac{1}{n}\ln\left| Dg^n(g(c)) \right| > 0, \\
&\textup{(CE2)}\;\;\liminf_{n\rightarrow\infty}\frac{1}{n}\inf_{Y\in g^{-n}(c)}\ln\left| Dg^n(Y)) \right| > 0. 
\end{alignat*} 
} 
\item{Misiurewicz condition~\cite[Thms.~6.2 and~6.3]{Misiurewicz-acim-IHES}: The $\omega$-limit set of $c$ does not contain $c$, that is, 
\begin{equation*}
(MC)\;\; c\notin\bigcap_{n\geq 0}\overline{\{g^i(c): i\geq n\}}. 
\end{equation*}
}
\item{Nowicki-Van Strien summation condition~\cite[Main Thm.]{Strien-SC-Invent}: if $l$ is the order of $c$, then
\begin{alignat*}{2}
&(SC)\;\; \sum_{n=1}^{\infty}|Dg^n(g(c))|^{-1/l} < \infty.
\end{alignat*}
}
\end{enumerate}
\end{prop}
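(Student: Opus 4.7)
The plan is to split the proposition into three independent implications, one for each of (CE1)--(CE2), (MC), and (SC), and to further factor each into an existence step for the acim $\zeta$ and a separate ergodicity/uniqueness step. Since $S$-unimodality, non-flatness of the critical point, and the absence of attracting periodic orbits are all standing hypotheses of the proposition, Proposition~\ref{ergodProp} applies as soon as an acim has been produced, so I would treat ergodicity of $g$ with respect to $\zeta$ as immediate and spend the bulk of the argument on existence.

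For (CE1)--(CE2), the plan is to appeal directly to the Collet--Eckmann construction in~\cite{ColletEckmann-Liapunov-ETDS}. The intuition is that exponential expansion along the critical orbit (CE1) prevents the invariant density from concentrating near the critical value, while exponential expansion at preimages of $c$ (CE2), combined with non-flatness, yields uniform bounded distortion on pull-backs; a standard induced-map / Markov-tower argument then produces $\zeta$. For (MC), the condition $c\notin\omega(c)$ forces the postcritical set to be bounded away from $c$, so the restriction of $g$ to the complement of a small neighborhood of $c$ is genuinely expanding, and Misiurewicz's theorem~\cite{Misiurewicz-acim-IHES} then yields a finite Markov partition together with the folklore-theorem construction of $\zeta$. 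For (SC), the weakest of the three conditions, I would use the Nowicki--Van Strien argument~\cite{Strien-SC-Invent}: the summability along the critical orbit, with exponent $1/l$ calibrated by the order $l$ of the critical point, controls the distribution of first-return times to a suitable reference interval well enough to construct $\zeta$ by first-return induction.

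Uniqueness of $\zeta$ would then follow from ergodicity together with standard support arguments for $S$-unimodal maps without attracting periodic orbits: two distinct acims would produce disjoint ergodic components, contradicting Proposition~\ref{ergodProp} applied to each. In particular, the passage from "an acim exists" to "a \emph{unique} acim exists" is essentially free once ergodicity is in hand.

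The main obstacle is not any single calculation but the bookkeeping needed to check that our hypotheses (together with the unspecified "regularity conditions" flagged under (CE1)--(CE2)) feed cleanly into each of the three cited theorems, which are stated with somewhat different normalizations and side conditions across the three source papers. The substantive work in later sections will lie not in the proof of the proposition itself but in \emph{verifying} --- numerically or otherwise --- one of (CE1)--(CE2), (MC), or (SC) for the concrete map $f$ of our nonlinear model.
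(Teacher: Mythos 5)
Your proposal takes essentially the same route as the paper: the paper also obtains existence of the acim by direct appeal to the cited Collet--Eckmann, Misiurewicz, and Nowicki--Van Strien theorems under each respective condition, and then simply notes that ergodicity (and with it uniqueness) follows from Proposition~\ref{ergodProp} under the standing hypotheses. Your additional remarks on the uniqueness argument and on the intuition behind each cited construction are consistent with, and slightly more explicit than, what the paper records.
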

Note that the ergodic property in Proposition~\ref{sufficientACIM} follows from Proposition~\ref{ergodProp}. Now, if one of the conditions in Proposition~\ref{sufficientACIM} is satisfied (within some numerical limitation), we conclude that we can predict the future by Proposition~\ref{birkhoff}. We must admit that our argument in this section is not rigorous (we hope to make it rigorous in the future), but we believe that we have provided enough (numerical/theoretical) evidence to support it. We stress that it is very hard to prove the existence of an acim for any non-expansive function $g$ (even for an $S$-unimodal $g$) by a rigorous analytic argument. There are only a few known examples of such, see a famous $g(x)=4x(1-x)$ example due to Ulam and Neumann~\cite{Ulam-ACIM-BullAMS}, also see~\cite[Sec.~7 Examples]{Misiurewicz-acim-IHES} for more examples. 

In this paper, we test Condition 3 (SC) in Proposition~\ref{sufficientACIM} since the sum in (SC) is easy to compute (estimate) numerically. See~\cite[4.2]{Shen:2014} for a comparison of these three sufficient conditions for the existence of an acim, also see~\cite[Chap.~\Rmnum{5}, Sec.~4]{MeloStrien-dynamics-book} for more on (SC). We found that (CE2) was hard to compute since the set $g^{-n}(c)$ can be very large for a large $n$. Also, the $\omega$-limit set of $c$ was difficult to compute for us although (MC) is theoretically beautiful. (For a numerical computation, it is not clear where to set the numerical bound to estimate the $\omega$-limit set.)

\begin{rem}
Roughly speaking, all three condtions (CE1), (MC), and (SC) are basically testing the same thing: they (more or less) guarantee that the critical orbit does not accumulate around the critical point $c$. (So, on the critical orbit, the expansion far from the critical point wins against the contraction near the critical point.)
\end{rem}

\subsection{$\delta=3.7$ case}\label{SubDelta3.7}
For the rest of the paper, to simplify the exposition and to obtain sharp numerical results, we fix $(\beta, k, \delta)=(0.1,1.1,3.7)$  (in the next Section, we let $\delta$ vary). In this case, the condition for $f\in\mathfrak{G}$, that is, $\max\{2-\beta,\beta\}<\delta<2-\beta+2\sqrt{1-\beta}$ becomes $1.9<\delta < 3.8$ and it is clearly satisfied. Also, from Theorem~\ref{main}, we see that the dynamics exhibit a Li-Yorke chaos (since we have $3.68-\beta<\delta$). In this case, we have $E=[0,1/k]\approx[0,0.91]$. Here is our main result in this section. (We can predict the future even if $f$ is chaotic.)

\begin{thm}\label{ergThm1-1}
There exists an acim $\zeta$ (whose estimate is as in Figure~\ref{fig8}) for $f$. Moreover, we have $\lim_{n\rightarrow \infty}\frac{1}{n}\sum_{k=0}^{n-1}f^{k}(p)\approx 0.6$ for $\zeta$-almost all $Y\in E$.
\end{thm}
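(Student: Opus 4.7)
The plan is to verify the hypotheses of Proposition~\ref{sufficientACIM} for $f$ with the fixed parameters $(\beta,k,\delta)=(0.1,1.1,3.7)$, and then apply Birkhoff's ergodic theorem (Proposition~\ref{birkhoff}). Lemma~\ref{S-unimodalLem} already tells me that $f$ is $S$-unimodal with a non-flat critical point $s=\frac{\beta+\delta}{2\delta k}$ of order $2$, so what remains is to rule out attracting periodic orbits and to check one of the conditions (CE1), (MC), (SC). Guided by the strategy outlined in Subsection~\ref{SubOurStrategy}, I would use the critical orbit as the main diagnostic.

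First, I would numerically iterate the critical orbit $\{s,f(s),f^2(s),\dots\}$ and inspect it for convergence to a periodic cycle. If it fails to settle, I would estimate the Lyapunov exponent $\lim_{n\to\infty}\frac{1}{n}\sum_{i=1}^{n}\ln|Df^{n}(f(s))|$ and show numerically that it is bounded away from zero. Combined with Proposition~\ref{criticalOrbit} and Proposition~\ref{noStableOrbit}, a positive Lyapunov exponent at the critical point rules out all stable periodic orbits (including a stable fixed point), because every such orbit would have to attract $0$, $1/k$, or $s$, and the two endpoints are easily verified by direct computation not to be attracted to a stable cycle.

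Next, I would verify the Nowicki--Van Strien summation condition (SC) numerically. Since $l=2$, this amounts to computing the partial sums $\sum_{n=1}^{N}|Df^n(f(s))|^{-1/2}$ for increasing $N$ and checking that they remain bounded (in practice, flatten out) within the numerical window. Once (SC) is accepted, Proposition~\ref{sufficientACIM} yields a unique acim $\zeta$ for $f$, together with ergodicity of $f$ with respect to $\zeta$. Since $f$ is continuous on the compact interval $E$, it is trivially $\zeta$-integrable, so all hypotheses of Proposition~\ref{birkhoff} are satisfied, giving
\begin{equation*}
\lim_{n\to\infty}\frac{1}{n}\sum_{k=0}^{n-1} f^{k}(Y) \;=\; \int_E f\,d\zeta \quad \text{for $\zeta$-almost every } Y\in E.
\end{equation*}
To produce the value $\approx 0.6$ and the density estimate in Figure~\ref{fig8}, I would pick a $\zeta$-typical seed $p\in E$ (a generic non-periodic point), compute the empirical time average $\frac{1}{n}\sum_{k=0}^{n-1} f^{k}(p)$ for large $n$ and observe its stabilisation near $0.6$, and approximate the Radon--Nikodym derivative $\xi$ by a histogram of $\{f^k(p)\}_{k=0}^{n-1}$ (invoking the estimate of $\xi$ from~\cite[8.5.2--8.5.3]{Day-dynamics-book}).

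The main obstacle, as the author candidly flags in Subsection~\ref{SubOurStrategy}, is that the verification of (SC) and of the no-stable-orbit condition is numerical rather than fully analytic: one can only sum finitely many terms and iterate for finitely many steps, so strictly speaking the existence of the acim is established up to numerical confidence rather than by a closed-form proof. The remaining steps (Birkhoff's theorem and the histogram estimate of $\xi$) are then routine consequences of the general ergodic machinery recalled in Subsection~\ref{background}.
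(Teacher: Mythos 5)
Your proposal follows essentially the same route as the paper's own argument: inspect the critical orbit, numerically confirm a positive Lyapunov exponent to exclude attracting cycles (via Propositions~\ref{criticalOrbit} and~\ref{noStableOrbit}), verify (SC) numerically with $l=2$, invoke Proposition~\ref{sufficientACIM} for the acim and ergodicity, and then apply Proposition~\ref{birkhoff} together with empirical time averages and a histogram estimate of the density to obtain the value $\approx 0.6$ and Figure~\ref{fig8}. The only cosmetic differences are that the paper uses the critical point $s$ itself as the seed (and cross-checks with many initial values) rather than a generic seed, and your extra endpoint check is unnecessary given the second sentence of Proposition~\ref{noStableOrbit}; your candid acknowledgement of the numerical (non-rigorous) character of the (SC) verification matches the paper's own caveat.
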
 

Now, we start looking at the model closely. First, following our strategy as in the last subsection, we consider the critical orbit of $f$. We have that the critical point of $f$ is $s=\frac{\beta+\delta}{2\delta k}=0.46683$. Using the first $100$ iterates of $f^n(s)$, we obtain Figure~\ref{fig5} that shows a chaotic behaviour of the iterates of $f$. 
\begin{figure}[h!]
	\begin{center}
    	\includegraphics[scale=0.55]{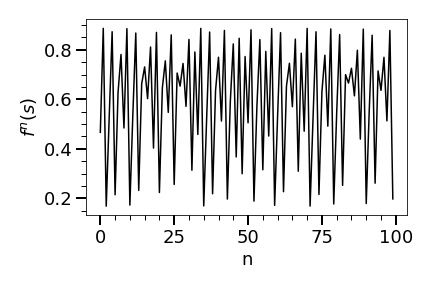}
	\end{center}
    \caption{The first $100$ iterates of $f^n(s)$ look chaotic}\label{fig5}
\end{figure}
In particular, it seems that $f$ has no attracting periodic orbit. To convince the reader that this is really the case, we give an estimate for the Lyapunov exponent (using the first 10000 terms of $f^n(s)$). Figure~\ref{fig6} shows that the first 1000 terms are enough to estimate the Lyapunov exponent (but we used the first 10000 terms to be safe). We obtain
\begin{lem}
$\lim_{n\rightarrow \infty}\frac{1}{n}\ln{|Df^{n}(f(s))|}\approx \frac{1}{10000}\ln{|Df^{10000}(f(s))|}\approx 0.43>0$.
\end{lem}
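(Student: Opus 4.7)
The claim is essentially a numerical statement, so the plan is to package a controlled finite computation together with a justification that the running average of $\frac{1}{n}\ln|Df^n(f(s))|$ has in fact stabilised by $n=10000$. The first move is to avoid forming the huge product $Df^n(f(s))$ directly. Applying the chain rule gives
\begin{equation*}
\ln|Df^n(f(s))| \;=\; \sum_{i=1}^{n} \ln\bigl|f'\bigl(f^{i}(s)\bigr)\bigr|,
\end{equation*}
so I would compute the orbit $(f^{i}(s))_{i\geq 1}$ iteratively and accumulate the logarithmic sum $L_n := \sum_{i=1}^n \ln|f'(f^i(s))|$ in double precision. Since $f'(Y)=\beta+\delta-2\delta k Y$, each $f'(f^i(s))$ is a cheap arithmetic expression. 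Underflow/overflow is then a non-issue, and the Lyapunov estimate at step $n$ is simply $L_n/n$.

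Next I would exhibit that $L_n/n$ has converged well before $n=10000$. The qualitative justification is that, under the hypotheses we expect to verify in the sequel (Proposition~\ref{sufficientACIM}), $f$ admits an $f$-invariant ergodic acim $\zeta$, and then Birkhoff's theorem (Proposition~\ref{birkhoff}) applied to the observable $Y\mapsto \ln|f'(Y)|$ yields
\begin{equation*}
\lim_{n\rightarrow\infty} \tfrac{1}{n} L_n \;=\; \int_{E} \ln|f'|\,d\zeta
\end{equation*}
for $\zeta$-almost every starting point, and Proposition~\ref{criticalOrbit} assures us that the critical orbit is a representative point to sample. Plotting $L_n/n$ against $n$ (Figure~\ref{fig6}) shows that the trace flattens around $0.43$ well before $n=1000$, and the extra factor of $10$ iterates is there purely as a safety margin.

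Finally, I would read off the numerical value $L_{10000}/10000 \approx 0.43$ and note that it is bounded away from $0$ by more than any conceivable floating-point drift over $10{,}000$ iterates of a uniformly Lipschitz map on $[0,1/k]$. The main obstacle is not the convergence itself but the standard objection in chaotic dynamics: after many iterations the computed $f^i(s)$ is no longer the true orbit of $s$, because errors are amplified at rate $e^{0.43 i}$. I would dismiss this worry by appealing to the usual shadowing philosophy for hyperbolic-on-average systems: the computed pseudo-orbit is traced by a true orbit of a nearby initial condition, and for a Birkhoff-type time average of a Hölder observable this only perturbs the limit by a vanishing amount. For a paper-level claim of the approximate value $0.43$, this combination of the explicit sum $L_{10000}/10000$ with the near-constancy visible in Figure~\ref{fig6} is the proof.
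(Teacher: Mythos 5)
Your proposal matches the paper's own treatment: the lemma is established purely numerically by iterating the critical orbit, accumulating the chain-rule sum $\sum_{i}\ln|f'(f^i(s))|$, checking in Figure~\ref{fig6} that the running average stabilises well before $n=1000$, and reading off the value $\approx 0.43$ at $n=10000$. Your additional appeals to Birkhoff's theorem for the observable $\ln|f'|$ and to shadowing are extra heuristic cushioning the paper does not include (and note a mild circularity, since the paper uses this positive exponent as evidence \emph{toward} the acim rather than assuming one), but the core argument is the same.
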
  
%\begin{rem}
%Let $c_0=s, c_1=f(c_0), c_2=f(c_1), \cdots, c_{n}=f(c_{n-1})$. Note that to compute the Lyapunov exponent we have used the chain rule, that is, $Df^n(f(s))=f'(c_1)\times f'(c_2) \times f'(c_3) \times \cdots \times f'(c_{n})$ (so it is not hard to compute). 
%\end{rem}

\begin{figure}[h!]
	\begin{center}
    	\includegraphics[scale=0.55]{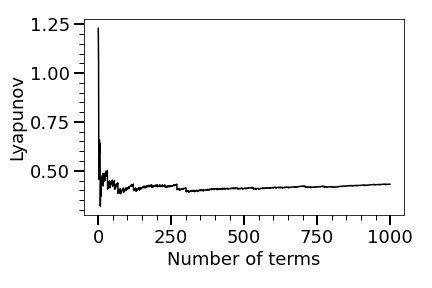}
	\end{center}
    \caption{Convergence of the Lyapunov exponent}\label{fig6}
\end{figure}

Next, we check the Nowicki-Van Strien summation condition (SC). Figure~\ref{fig7} shows that the sum in (SC) stabilises if we use the first 100 terms. To be safe, we use 1000 terms to estimate the infinite sum in (SC). We obtain 

\begin{figure}[h!]
	\begin{center}
    	\includegraphics[scale=0.55]{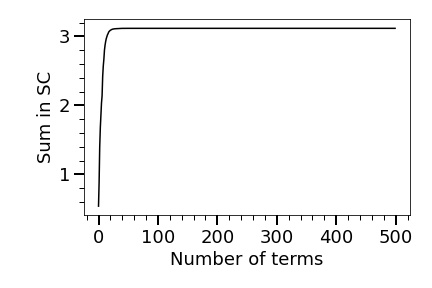}
	\end{center}
    \caption{Convergence of the sum in SC}\label{fig7}
\end{figure}

\begin{lem}\label{SCLem}
$\sum_{n=1}^{\infty}|Df^n(f(s))|^{-1/l}\approx \sum_{n=1}^{1000}|Df^n(f(s))|^{-1/2}\approx 3.11628419346255$.
\end{lem}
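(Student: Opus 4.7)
The plan is to reduce the claim to a finite numerical computation by unfolding the iterated derivative via the chain rule, and then to justify the truncation at $N=1000$ using the positive Lyapunov exponent already obtained. First I would write
\begin{equation*}
Df^n(f(s)) \;=\; \prod_{i=0}^{n-1} f'\bigl(f^{i}(f(s))\bigr) \;=\; \prod_{i=1}^{n} f'\bigl(f^{i}(s)\bigr),
\end{equation*}
where $f'(Y) = \beta + \delta - 2\delta k Y$. This recasts the $n$-th summand as the reciprocal square root of a product built from iterates of the critical orbit $\{f^i(s)\}_{i\geq 1}$, computable in linear time by the recursion $a_{n+1} = a_n \cdot f'(f^n(s))$ with $a_1 = f'(f(s))$, from which $|a_n|^{-1/2}$ and the partial sums $S_N = \sum_{n=1}^N |a_n|^{-1/2}$ follow immediately. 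Evaluating $S_{1000}$ with $(\beta,k,\delta)=(0.1,1.1,3.7)$ yields the displayed value $\approx 3.11628419346255$.

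Next I would argue that $S_{1000}$ is within any reasonable numerical tolerance of the full sum. The Lyapunov exponent estimate $\chi \approx 0.43 > 0$ from the preceding lemma gives $|Df^n(f(s))| \sim e^{\chi n}$ on average, so the summand decays like $e^{-\chi n / 2} \approx e^{-0.215\, n}$. The tail past $N=1000$ is therefore of the order $e^{-215}/(1 - e^{-0.215})$, which is vastly below the displayed precision. The graphical stabilisation seen in Figure~\ref{fig7} at $n \approx 100$ corroborates this rough tail estimate and motivates the safety margin of choosing $N=1000$.

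The main obstacle, which I would acknowledge rather than resolve, is that the Lyapunov bound is only an asymptotic average: individual terms $|f'(f^i(s))|$ can be arbitrarily small when $f^i(s)$ happens to come close to the critical point $s$, producing transient dips in $|Df^n(f(s))|$ and corresponding spikes in $|Df^n(f(s))|^{-1/2}$. A fully rigorous bound on the tail would therefore require quantitative control on the recurrence of the critical orbit to $s$ — essentially a Collet--Eckmann type estimate along the critical orbit. In keeping with the \emph{numerical/theoretical} stance made explicit in Subsection~\ref{SubOurStrategy}, I would instead verify empirically from the computed sequence $\{|a_n|^{-1/2}\}_{n \leq 1000}$ that no such spike occurs in the range examined, so that the monotone partial sums $S_N$ have indeed settled to the displayed eight-digit value. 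Combining the explicit evaluation of $S_{1000}$ with this tail control then yields the stated approximation.
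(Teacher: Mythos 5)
Your proposal is correct and takes essentially the same route as the paper: Lemma~\ref{SCLem} is established there purely numerically, by computing the partial sums of $|Df^n(f(s))|^{-1/2}$ along the critical orbit, observing stabilisation by $n\approx 100$ (Figure~\ref{fig7}), and double-checking the value with $100000$ terms. Your chain-rule unfolding and the Lyapunov-exponent tail heuristic (with the honest caveat about close returns of the critical orbit to $s$) make the truncation rationale slightly more explicit than the paper's brute-force re-computation, but the underlying computation and the admittedly non-rigorous numerical stance are the same.
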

\begin{rem}
Since Lemma~\ref{SCLem} is crucial for our argument, we have double-checked (SC) with 100000 terms obtaining  $\sum_{n=1}^{100000}|Df^n(f(s))|^{-1/l}\approx 3.11628419346255$. (This is the same number as in Lemma~\ref{SCLem}!.)
\end{rem}
Now, we conclude that $f$ has a unique acim $\zeta$ and $f$ is ergodic with respect to $\zeta$ by Proposition~\ref{sufficientACIM}. Next, in Figure~\ref{fig8} using $f^n(s)$ with $n$ from $1000$ to $10000$ (after removing the effect of a transient period) we obatin an estimate of the density function (Radon-Nikodym derivative) $\xi: E \rightarrow \mathbb{R}$ with $d\zeta = \xi d\lambda$. Note that although we give an estimate of $\zeta$ in Figure~\ref{fig8}, it is hard to give an explicit formula for $\zeta$ (thus it is hard to express $\zeta$ in a concrete way). 

\begin{figure}[h!]
	\begin{center}
    	\includegraphics[scale=0.55]{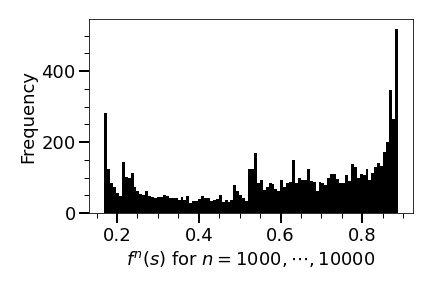}
	\end{center}
    \caption{Estimate of the density function}\label{fig8}
\end{figure}

Next, we directly compute the ergodic sum of $f^n(Y)$ using the initial $Y=s$ and the first $100000$ terms of $f^n(s)$. We get
\begin{lem}
 $\lim_{n\rightarrow \infty}\frac{1}{n}\sum_{k=0}^{n-1}f^{k}(s)\approx \frac{1}{100000}\sum_{k=0}^{99999}f^{k}(s)\approx 0.6$. 
 \end{lem}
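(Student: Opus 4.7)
The plan is to deduce this from Birkhoff's ergodic theorem (Proposition~\ref{birkhoff}), using the ingredients assembled earlier in the section, and then to justify the numerical approximation with $n = 100000$. Specifically, by Lemma~\ref{SCLem} the Nowicki--Van Strien summation condition (SC) holds (within the numerical bound), and together with Lemma~\ref{S-unimodalLem} (so $f$ is $S$-unimodal with a non-flat critical point of order $2$) and the absence of an attracting periodic orbit (supported by the positive Lyapunov exponent estimate), Proposition~\ref{sufficientACIM}(3) yields a unique acim $\zeta$ for $f$ with respect to which $f$ is ergodic. Since $f$ is continuous on the compact interval $E$, it is $\zeta$-integrable, so Proposition~\ref{birkhoff} applies and gives
\begin{equation*}
\lim_{n\to\infty}\frac{1}{n}\sum_{k=0}^{n-1}f^{k}(Y) \;=\; \int_E f\,d\zeta \quad \text{for $\zeta$-almost all $Y\in E$.}
\end{equation*}

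Next I would argue that the initial point $Y=s$ may be used as the sample trajectory. Since $s$ is the critical point and since by Proposition~\ref{sufficientACIM} $f$ has no attracting periodic orbit, the orbit of $s$ is not attracted to any cycle; combined with Propositions~\ref{criticalOrbit} and~\ref{noStableOrbit} this rules out the obvious bad behaviours, and the convergence diagnostics in Figures~\ref{fig6} and~\ref{fig7}, together with the density estimate in Figure~\ref{fig8}, give numerical evidence that the orbit $\{f^n(s)\}$ equidistributes with respect to $\zeta$. Under this (numerical) hypothesis, the ergodic average along the critical orbit converges to $\int_E f\,d\zeta$, and so the truncation at $n=100000$ provides a numerical estimate of this integral. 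Direct computation of $\tfrac{1}{100000}\sum_{k=0}^{99999} f^k(s)$ in Python then yields the value $\approx 0.6$, and, as with Lemma~\ref{SCLem}, the estimate can be double-checked by increasing the number of terms (e.g.\ to $10^6$) to confirm that the partial averages have stabilised.

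The main obstacle is not the invocation of Birkhoff's theorem itself but the justification that the critical point $s$ lies in the $\zeta$-full-measure set on which the Birkhoff averages converge to $\int_E f\,d\zeta$. Rigorously, Birkhoff gives convergence only for $\zeta$-a.e.\ $Y$, and the point $s$ is distinguished (it is the critical point), so in principle it could lie in the null exceptional set. In line with the philosophy laid out in Subsection~\ref{SubOurStrategy}, I would only claim numerical/empirical support: the partial averages are monotone-ish and stabilise at $0.6$ already well before $n=10^5$, and this is consistent with the estimated invariant density in Figure~\ref{fig8} (whose mean can be recomputed directly from $\xi$ as a cross-check). A fully rigorous treatment would require showing that the orbit of $s$ is $\zeta$-generic, which for general $S$-unimodal maps satisfying (SC) is known for Lebesgue-typical parameters by the Avila--Lyubich--de Melo--Moreira results cited in the introduction (Proposition~\ref{Avila}), but proving it for this specific parameter value is beyond the numerical scope of the paper and is explicitly flagged as future work.
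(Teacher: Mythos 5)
Your proposal matches the paper's approach: the lemma is established by directly computing the truncated average $\frac{1}{100000}\sum_{k=0}^{99999}f^{k}(s)$ and checking numerically (Figure~\ref{fig9}) that the partial sums stabilise near $0.6$, with the limit interpreted via the acim/Birkhoff framework already set up in the section. Your extra caveat about whether the critical point $s$ is $\zeta$-generic is exactly the non-rigorous step the paper itself acknowledges under its stated numerical philosophy, so there is no substantive difference.
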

Note that Figure~\ref{fig9} shows that the ergodic sum converges if we use more than $2000$ terms to estimate it (we use $100000$ terms to be safe).  
\begin{figure}[h!]
	\begin{center}
    	\includegraphics[scale=0.55]{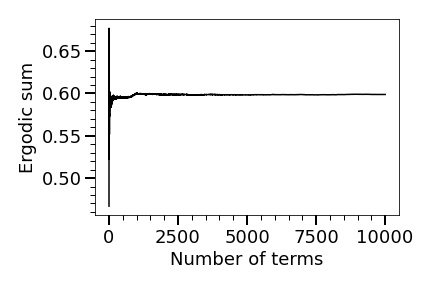}
	\end{center}
    \caption{Convergence of the ergodic sum of $f^n(s)$}\label{fig9}
\end{figure}
To end this section, we compute the ergodic sums using various initial values. (These numbers should be the same for almost all initial values by Proposition~\ref{birkhoff}.) Figure~\ref{fig10} shows the result where the initial $Y$ is taken from $0.01, 0.02, \cdots, 0.91$. By Figure~\ref{fig10} (where each ergodic sum is estimated using the first $5000$ terms of $f^n(s)$), we conclude that Theorem~\ref{ergThm1-1} holds and the ergodic sums of $f^n(Y)$ converge to somewhere around $0.6$ for $\zeta$-almost all $Y\in E$. 

\begin{figure}[h!]
	\begin{center}
    	\includegraphics[scale=0.55]{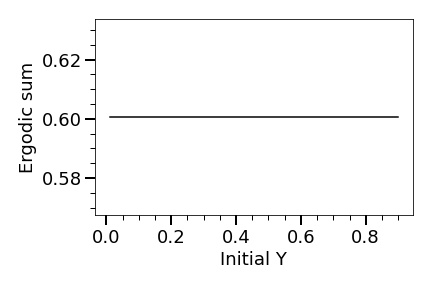}
	\end{center}
    \caption{Ergodic sums of $f^n(Y)$ using various initial $Y$}\label{fig10}
\end{figure}

%\vspace{10cm}

%\begin{figure}[h!]
%	\begin{center}
%    	\scalebox{.6}{\input{ergodicAverage.pgf}}
%	\end{center}
%    \caption{Ergodic average for each $\lambda$}\label{fig5}
%\end{figure}

%As before let $E$ be the closed interval $[f(s),f^2(s)+s]$ ($s$ is the critical point of $f$). Let $\mathcal{B}$ denote the Borel $\sigma$-algebra of $E$. We say that the map $f$ is ergodic 

%In particular, by some combinations of numerical calculations and theoretical argument, we give a strong evidence to believe the following: 
%\begin{con}
%The \emph{ergodic average} of our price dynamics is as in the Figure~\ref{}. Thus we can predict the future on average (for almost all initial $p$). 
%\end{con}  

\section{Ergodic properties: a sensitivity analysis}\label{SecErgodicSensitivity}
\subsection{Chaos is not too bad}\label{SubChaosNotBad}
In this section, we still keep $(\beta, k)=(0.1,1.1)$, but we let $\delta$ vary. Here, we conduct a sensitivity analysis, that is, how the ergodic sums vary when $\delta$ changes for $1.9<\delta\leq 3.8$. We note that the (unique) critical point of $f$, that is $s=\frac{\beta+\delta}{2\delta k}$, is dependent of $\delta$. Also note that from Subsection~2.3, we know that there exists an odd period (hence a Li-Yorke chaos) for $3.58<\delta\leq 3.8$.

Now, using the same strategy as in the last section, we investigate how the critical orbit $\{s, f(s), f^2(s),\cdots\}$ behaves for each $\delta$. We obtain the bifurcation diagram (a summary of the critical orbits) of $f$ in Figure~\ref{fig11}. We add Figure~\ref{CloseUp}, that is a close-up view of Figure~\ref{fig11} around the "chaotic region". We (roughly) see that: (1)~for $\delta<2.9$, $f^n(s)$ converges to the unique attracting fixed point, (2)~for $2.9<\lambda<3.35$, $f^n(s)$ converges to a period-$2$ orbit, (3)~for $3.35<\lambda<3.5$ (roughly), period doubling bifurcations occur and $f^n(s)$ converges to a period-$4$ (8, 16, and so on) orbit, (4)~for $3.5<\delta<3.58$, we see two (upper and lower) chaotic regions, (5)~for $3.58<\delta$, we see one large chaotic region except a few "windows" (thin white strips), (6)~for $\delta\approx 3.71$, we (finally) see a period three cycle.

\begin{rem}
Our argument in the last paragraph (in particular (6)) shows that if we use "Period three implies chaos" only, then we do not see the (large) chaotic region for $3.58<\delta<3.71$. 
\end{rem}

\begin{figure}[h!]
	\begin{center}
   	\includegraphics[scale=0.7]{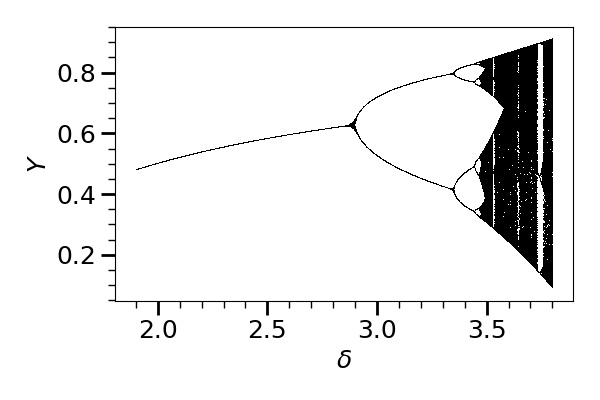}
	\end{center}
    \caption{The bifurcation diagram of $f$}\label{fig11}
\end{figure}

\begin{figure}[h!]
	\begin{center}
   	\includegraphics[scale=0.7]{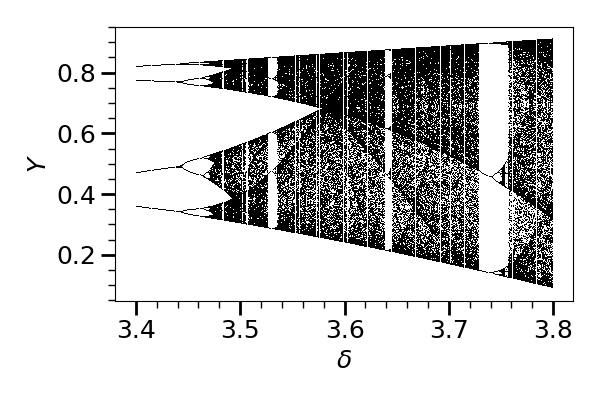}
	\end{center}
    \caption{The bifurcation diagram of $f$ in the chaotic region}\label{CloseUp}
\end{figure}

Next, we compute Lyapunov exponents of $f$ at $s$ for various $\delta$ and obtain Figure~\ref{fig12}. We see that roughly speaking the Lyapunov exponent is negative for $1.9<\delta<3.5$ (thin dots), and positive for $3.5<\lambda\leq 3.8$ (thick dots).

\begin{figure}[h!]
	\begin{center}
    	\includegraphics[scale=0.55]{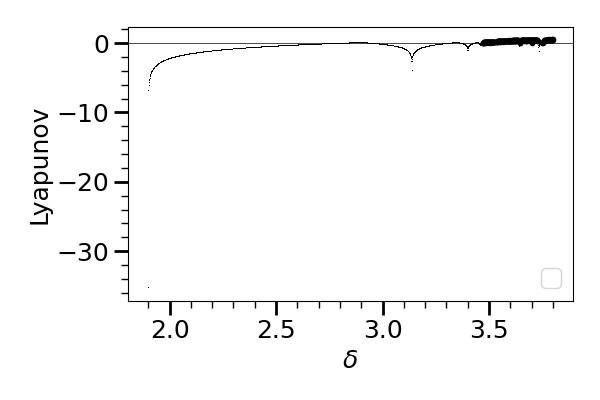}
	\end{center}
    \caption{Lyapunov exponents of $f$}\label{fig12}
\end{figure}

Now, we test (SC) for $3.45<\delta\leq 3.8$. (we do not care $\delta<3.45$ since for these $\delta$, it is clear (from the bifurcation diagram) that $f$ has an attracting cycle and it is easy to predict the future). We obtain Figure~\ref{fig13}.
\begin{figure}[h!]
	\begin{center}
    	\includegraphics[scale=0.55]{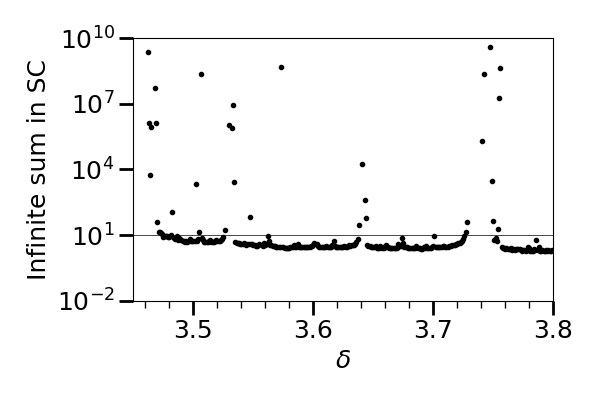}
	\end{center}
    \caption{Infinite sums in SC}\label{fig13}
\end{figure}
Our argument is based on a numerical computation using $1000$ terms to estimate the infinite sum in (SC), so we need to decide when we conclude that the infinite sum is finite. We draw the (ad hoc) line at $10$ (that is the horizontal line in Figure~\ref{fig13}). Basically, we only avoid $\delta$ where (the estimate of) the infinite sum grows exponentially. Our numerical calculation shows that our estimates for the infinite sums are finite for $3.48<\delta$ (roughly) except $\delta$ values corresponding to the few windows in Figure~\ref{CloseUp}. We hope to rigorously prove that the infinite sums here are finite in the future work. Summarising all the argument in this section, we obtain our main result in this section.
\begin{thm}\label{goodLThm}
For $3.48<\delta\leq 3.8$ except $\delta$ values corresponding to the few windows in Figure~\ref{CloseUp} (and possibly except some $\delta$ values whose total Lebesgue measure is $0$, see Proposition~\ref{Avila} below), there exists a unique acim for $f$. Moreover for these $\delta$ values, the ergodic sums of $f$ are as in Figure~\ref{fig14}.  
\end{thm}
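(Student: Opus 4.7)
The plan is to repeat the machinery of Subsection~\ref{SubDelta3.7} uniformly in $\delta$, rather than at the single value $\delta=3.7$. First I would check that the structural hypotheses of Section~\ref{SubSunimodal} hold for every $\delta\in(1.9,3.8]$: namely, for each such $\delta$, the map $f=f_\delta$ lies in $\mathfrak{G}$ (this is already covered by Theorem~\ref{main}), and it is $S$-unimodal with a non-flat critical point $s=\frac{\beta+\delta}{2\delta k}$ of order~$2$. The argument of Lemma~\ref{S-unimodalLem} goes through verbatim since $f_\delta'''\equiv 0$, $f_\delta''\equiv -2\delta k\neq 0$, and the bounds $\delta k\,|Y-s|\leq|f_\delta'(Y)|\leq 3\delta k\,|Y-s|$ are independent of which $\delta$ in the range we take (up to the constants, which are harmless since they are allowed to depend on~$\delta$).

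Next I would split the parameter interval into a ``regular'' part and a ``chaotic'' part using the bifurcation diagram (Figure~\ref{fig11}) and the Lyapunov plot (Figure~\ref{fig12}). For $\delta\lesssim 3.48$, or for $\delta$ inside one of the periodic windows visible in Figure~\ref{CloseUp}, the critical orbit is visibly attracted to a stable periodic cycle; by Proposition~\ref{noStableOrbit} this accounts for the unique stable orbit and future prediction is trivial (the time-average along the attracting cycle). For $\delta$ outside the windows in the range $3.48<\delta\leq 3.8$, the Lyapunov exponent at $s$ is numerically positive (the thick dots in Figure~\ref{fig12}), which rules out an attracting periodic orbit: by Proposition~\ref{criticalOrbit} every stable cycle would have to attract $s$ (the endpoints of $E$ are repelling for our $\beta<1$ and for the form of $f$), and a positive Lyapunov exponent at~$s$ contradicts this.

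With no attracting cycle, I would then apply Proposition~\ref{sufficientACIM} via the Nowicki--Van Strien summation condition (SC) exactly as in Lemma~\ref{SCLem}, computing the truncated series $\sum_{n=1}^{1000}|Df_\delta^n(f_\delta(s))|^{-1/2}$ for a fine grid of $\delta$ values in $(3.48,3.8]$. The outcome is Figure~\ref{fig13}: outside the windows, the truncated sums stay well below the ad hoc threshold $10$, so we take the infinite sum to be finite and conclude the existence of a unique acim $\zeta_\delta$ together with ergodicity of $f_\delta$ with respect to $\zeta_\delta$. Birkhoff's theorem (Proposition~\ref{birkhoff}), applied to the $\zeta_\delta$-integrable continuous map $f_\delta$ on the compact interval $E$, then yields the ergodic sum $\int_E f_\delta\, d\zeta_\delta$ for $\zeta_\delta$-almost every initial $Y$. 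Plotting this value against $\delta$ for the grid above produces Figure~\ref{fig14}, completing the statement.

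The main obstacle, as already flagged in Subsection~\ref{SubOurStrategy}, is the non-rigorous nature of the SC check: a finite truncation can never logically certify that the true infinite sum is finite, and the boundary between ``chaotic'' parameters and parameters inside a (possibly very narrow) periodic window is delicate, since infinitely many microscopic windows are dense in the chaotic region. This is precisely where Avila's theorem (Proposition~\ref{Avila}, invoked immediately after the statement) enters: for the quadratic-like family at hand, the set of parameters for which an acim exists has full Lebesgue measure in the ``stochastic'' regime, which is exactly the escape clause ``except some $\delta$ values whose total Lebesgue measure is $0$'' that appears in the theorem. This theoretical input is what allows us to promote the numerical evidence to the stated almost-everywhere conclusion.
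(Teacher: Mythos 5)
Your proposal follows essentially the same route as the paper: verify $S$-unimodality and non-flatness of the critical point uniformly in $\delta$ (Lemma~\ref{S-unimodalLem}), use the bifurcation diagram and Lyapunov exponents to isolate the non-window parameters in $(3.48,3.8]$, numerically test the Nowicki--Van Strien condition (SC) against the ad hoc threshold to invoke Proposition~\ref{sufficientACIM}, and then apply Birkhoff's theorem for the ergodic sums, with Proposition~\ref{Avila} covering the measure-zero exceptional set. This matches the paper's argument (including its acknowledged non-rigorous numerical step), so no further comparison is needed.
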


For $\delta$ values as in Theorem~\ref{goodLThm} (satisfying the SC), we obtain a pretty smooth relation between $\delta$ and the ergodic sums of $f$ (except a few bumps and a big drop around $\delta=3.75$ that corresponds to a large window in the bifurcation diagram) as in Figure~\ref{fig14} (using $2000$ terms to estimate the ergodic sums). Extending Theorem~\ref{goodLThm} (and Figure~\ref{fig14}) using the naive estimates of the ergodic sum (that is $\sum_{k=0}^{1999}f^k(s)$) for $\delta$ that does not satisfy (SC), we obtain 

\begin{figure}[h!]
	\begin{center}
    	\includegraphics[scale=0.55]{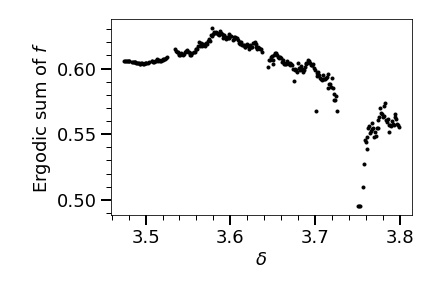}
	\end{center}
    \caption{Ergodic sums of $f$}\label{fig14}
\end{figure}

\begin{thm}\label{finalThm}
For $1.9<\delta<3.8$, the ergodic sums of $f$ are as in Figure~\ref{fig15} (possibly except some $\delta$ values whose total Lebesgue measure is $0$). 
\end{thm}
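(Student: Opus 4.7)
The plan is to split the parameter range $(1.9,3.8)$ into regular and stochastic parameters by invoking Proposition~\ref{Avila}, which (applied to the one-parameter family $f_\delta(Y)=\beta Y + \delta Y(1-kY)$, essentially a reparametrisation of the quadratic family) asserts that up to a set of $\delta$ of Lebesgue measure zero, either $f_\delta$ has a hyperbolic attracting periodic orbit (\emph{regular} parameter) or it admits an absolutely continuous invariant measure with positive Lyapunov exponent (\emph{stochastic} parameter). This dichotomy is precisely the exceptional measure-zero set permitted in the statement of Theorem~\ref{finalThm}.

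For the regular parameters — from the bifurcation diagram of Figure~\ref{fig11}, essentially $1.9<\delta<3.48$ together with the visible windows of Figure~\ref{CloseUp} — I would argue as follows. Since $f$ is $S$-unimodal by Lemma~\ref{S-unimodalLem} and the endpoints of $E=[0,1/k]$ are never attracted to a periodic orbit ($0$ is a repelling fixed point because $f'(0)=\beta+\delta>1$, while $1/k$ is mapped into the interior and thereafter behaves as the orbit of $\beta/k$), Propositions~\ref{criticalOrbit} and~\ref{noStableOrbit} force the critical point $s$ itself to be attracted to the unique stable periodic cycle. Consequently $\{f^n(s)\}$ eventually shadows that cycle and the Ces\`aro averages $\frac{1}{n}\sum_{k=0}^{n-1}f^k(s)$ converge to the arithmetic mean of its points; after a brief transient a $2000$-term truncation already recovers this limit, which is exactly the value plotted in Figure~\ref{fig15}.

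For the stochastic parameters the argument of Theorem~\ref{goodLThm} applies verbatim: when the summation condition (SC) of Proposition~\ref{sufficientACIM} can be numerically verified one obtains a unique acim $\zeta$ directly, and for the remaining stochastic $\delta$ the stochastic half of Proposition~\ref{Avila} guarantees $\zeta$ abstractly. In both subcases Proposition~\ref{ergodProp} supplies ergodicity and Birkhoff's theorem (Proposition~\ref{birkhoff}) yields the convergence of the ergodic sums to $\int_E f\, d\zeta$ for $\zeta$-almost every starting point. The truncated sums shown in Figure~\ref{fig15} are then numerical proxies for these integrals, and the whole bifurcation interval $(1.9,3.8)$ is thereby covered except for the measure-zero set produced by Proposition~\ref{Avila}.

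The main obstacle is the unavoidable appeal to Proposition~\ref{Avila}: excluding Cantor attractors, Fibonacci-type critical orbits, and infinitely renormalisable parameters — all phenomena that can occur for general $S$-unimodal families — genuinely requires Avila's theorem, with no elementary substitute available. A secondary, more mundane difficulty is to rigorously justify that a finite $2000$-term truncation faithfully represents the true ergodic average: for stochastic $\delta$ this requires a quantitative refinement of Birkhoff (e.g.~decay of correlations), which we verify only empirically, while for regular $\delta$ it is essentially exact once the transient is discarded.
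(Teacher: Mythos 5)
Your proposal is correct in the same sense that the paper's own argument is (the theorem is ultimately a numerically supported statement, and the paper concedes its argument is not fully rigorous), and it uses the same ingredients, but you organise them differently. The paper proves Theorem~\ref{goodLThm} by numerically verifying (SC) for $3.48<\delta\leq 3.8$ outside the windows, and then obtains Theorem~\ref{finalThm} simply by \emph{extending} Figure~\ref{fig14} with naive truncated sums $\sum_{k=0}^{1999}f^k(s)$ for the $\delta$ that fail (SC), quoting Proposition~\ref{Avila} only afterwards as theoretical support for the measure-zero caveat; the regular regime $\delta<3.48$ is handled by reading the attracting cycles off the bifurcation diagram (Figures~\ref{fig11} and~\ref{CloseUp}). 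You instead make Avila's regular/stochastic dichotomy the load-bearing structure: for regular $\delta$ you derive convergence of the Ces\`aro averages from Singer-type results (Propositions~\ref{criticalOrbit} and~\ref{noStableOrbit}), and for stochastic $\delta$ failing the numerical (SC) test you let Proposition~\ref{Avila} supply the acim abstractly before applying Propositions~\ref{ergodProp} and~\ref{birkhoff}. That buys a cleaner theoretical justification than the paper's ``naive estimates'' step, at the cost of leaning harder on Avila's theorem. Two small remarks: your claim that the endpoint $1/k$ is never attracted to a periodic orbit is not justified (its orbit is that of the interior point $\beta/k$, which may well converge to the stable cycle), but it is also unnecessary, since the contrapositive of Proposition~\ref{noStableOrbit} already gives that whenever a stable periodic orbit exists the critical point is attracted to one; and both you and the paper share the residual gap that Birkhoff's theorem gives convergence only for $\zeta$-almost every initial point, whereas Figure~\ref{fig15} is computed from the specific initial point $s$, so the plotted sums remain numerical proxies --- a caveat you state explicitly and the paper addresses only empirically (Figure~\ref{fig10}, at a single $\delta$).
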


\begin{figure}[h!]
	\begin{center}
    	\includegraphics[scale=0.55]{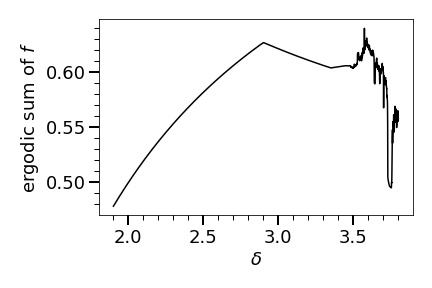}
	\end{center}
    \caption{Average GDP levels under $f$}\label{fig15}
\end{figure}

%%%%%%%%

\subsection{Comments and interpretations on Theorem~\ref{finalThm}}\label{SubCommentsInterpretations}

Here, we add a few comments and interpretations on Theorem~\ref{finalThm}: (1)~To be honest, we find it hard to give any economic interpretation for Fig.~\ref{fig15}, so we just give some mathematical interpretation only. (2)~Mathematically speaking, the changes of the ergodic sums of $f$ for $\delta<3.5$ (before the chaotic region) is clear from the bifurcation diagram (Fig.~\ref{fig11}). (3)~for $\delta>3.5$, we see a gradual decrease of the ergodic sums. To see why, we need to look closer, namely, we need to investigate the density $\xi$ for each $\delta$. We compute the estimates of distributions of $f^n(s)$ for $\delta=3.59, 3.7, 3.79$ using $10000$ terms in Figures~\ref{fig16},~\ref{fig8},~\ref{fig17} respectively.

Our computation shows that: (1)~The "shapes" of the distributions of $f^n(s)$ for various $\delta$ look similar: we see almost uniform distributions with some concentrations at both extremes, (2)~$f$ takes more extreme values as $\delta$ becomes large, however, the extension of the lower bound is much greater than that of the upper bound as shown in Figures~\ref{fig16},~\ref{fig8}, and~\ref{fig17}, (3)~The distribution gets smoother as $\delta$ becomes large. We conclude that (1) and (2) above makes the ergodic sums decrease gradually as $\delta>3.5$ becomes large. We do not know why the density curve gets smoother as $\delta$ increases. 

\begin{figure}[h!]
	\begin{center}
    	\includegraphics[scale=0.55]{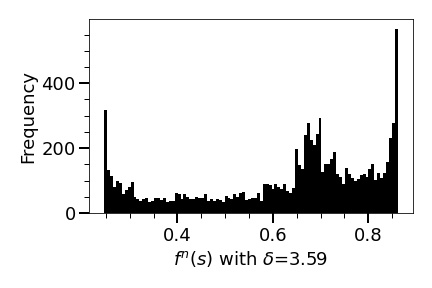}
	\end{center}
    \caption{Density of $f^n(s)$ with $\delta=3.59$}\label{fig16}
\end{figure}

\begin{figure}[h!]
	\begin{center}
    	\includegraphics[scale=0.55]{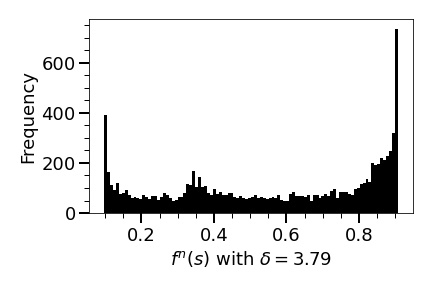}
	\end{center}
    \caption{Density of $f^n(s)$ with $\delta=3.79$}\label{fig17}
\end{figure}

%\begin{figure}[h!]
%	\begin{center}
%    	\includegraphics[scale=0.6]{density399.jpg}
%	\end{center}
%    \caption{Density of $f^n(s)$ with $\lambda=3.99$}\label{fig18}
%\end{figure}

Theorem~\ref{finalThm} (and the whole results in this paper) says that a naive estimate of the ergodic sums of $f$ (estimate of the future) using a reasonably large number of terms (say $2000$ terms) is not too bad. To end the paper, we quote a deep result of Avila (2014 fields medalist) and others~\cite[Sec.~3.1, Theorem B]{Avila-stochastic-Invent} that supports Theorem~\ref{finalThm}. 
\begin{prop}\label{Avila}
In any non-trivial real analytic family of quasiquadratic maps (that contains $S$-unimodal maps), (Lebesgue) almost any map is either regular (i.e., it has an attracting cycle) or stochastic (i.e., it has an acim). 
\end{prop}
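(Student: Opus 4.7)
My plan is to reduce the statement to a dichotomy on the one-parameter space of the family and then show that the ``exceptional'' set of parameters (neither regular nor stochastic) has Lebesgue measure zero. First I would complexify: since the family is real analytic, it extends to a complex analytic family of quasiquadratic maps in a neighborhood of the real slice, which lets me apply the machinery of polynomial-like maps, holomorphic motions, and quasiconformal deformations. The strategy is then to partition the parameter interval into a hyperbolic part and its complement $\mathcal N$, and control $\mathcal N$ combinatorially.

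The first step is to establish density of hyperbolicity on the real slice, i.e.\ the set of regular parameters is open and dense. Openness is standard (attracting cycles persist under perturbation). Density, which is the real Fatou conjecture in this setting, requires quasiconformal rigidity of non-hyperbolic maps within the family, so that the complement of the hyperbolic locus has empty interior. This is itself a deep theorem (Lyubich, Graczyk--\'Swi\c{a}tek, Kozlovski--Shen--van Strien in successively wider generality) and will be the first serious obstacle.

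Next I would stratify $\mathcal N$ by combinatorial (renormalization) type. Finitely renormalizable parameters in $\mathcal N$ fall into the Misiurewicz / Collet--Eckmann regime: after finitely many renormalizations the post-critical set stays a definite distance from the critical point, and classical real bounds together with Koebe distortion let me verify a summation condition like (SC) in Proposition~\ref{sufficientACIM} at a.e.\ such parameter, yielding an acim. For infinitely renormalizable parameters, I would invoke complex bounds and the hyperbolicity of the renormalization operator (Lyubich, McMullen, Avila--Lyubich) to conclude that this subset has Lebesgue measure zero in parameter space, because the renormalization fixed points attract exponentially and their stable manifolds are codimension at least one in any transverse family.

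The main obstacle is the measure-theoretic \emph{parameter transversality} step: controlling the derivative of the phase-parameter map, which records how the critical value moves in the dynamical plane as the parameter varies, well enough to transfer dynamical estimates into parameter estimates. This is the heart of the Avila--Moreira and Avila--Lyubich--de Melo parameter-exclusion machinery, where one shows that the critical locus moves with definite (nonzero) speed along any non-trivial analytic family. Once such transversality is in hand, a Borel--Cantelli / large-deviation argument on the parameter side converts the almost-sure dynamical Collet--Eckmann condition into Lebesgue-almost-sure validity in parameter, and combined with density of hyperbolicity gives the regular-or-stochastic dichotomy.
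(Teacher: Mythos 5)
The paper does not prove Proposition~\ref{Avila} at all: it is quoted as Theorem B of Avila--Lyubich--de Melo \cite{Avila-stochastic-Invent} and used only as outside theoretical support for the numerical work in Sections~5 and~6. So there is no in-paper argument to compare yours against; the relevant comparison is with the published proof you are implicitly reconstructing, and against that your text is a roadmap rather than a proof. Every stage you list --- density of hyperbolicity (Lyubich, Graczyk--\'Swi\c{a}tek, Kozlovski--Shen--van Strien), hyperbolicity of renormalization, measure zero of the infinitely renormalizable parameters, phase-parameter transversality and the Borel--Cantelli exclusion --- is itself a major theorem that you cite rather than establish, so the gaps in your outline are exactly the hard parts.

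More substantively, your route diverges from the actual argument at the decisive step. The proof of Theorem B in \cite{Avila-stochastic-Invent} does not rerun parameter exclusion inside an arbitrary analytic family: it takes as input Lyubich's regular-or-stochastic theorem for the quadratic family (which rests on the Martens--Nowicki criterion and the renormalization/parapuzzle machinery, not on verifying a Collet--Eckmann or (SC)-type condition as in Proposition~\ref{sufficientACIM}; the statement that almost every non-regular quadratic parameter is Collet--Eckmann is the later, stronger Avila--Moreira result and is not needed here), and then transfers it to any non-trivial real analytic family of quasiquadratic maps by showing that the space of such maps is partitioned into hybrid classes forming a codimension-one lamination whose holonomy is quasisymmetric and absolutely continuous, so that the measure-zero exceptional set in the quadratic slice pulls back to a measure-zero set of parameters. ``Non-trivial'' enters precisely to guarantee that the family is not contained in a single class and crosses the lamination transversally at almost every non-regular parameter; your assertion that the critical value moves with definite speed along any non-trivial family is not automatic and would itself have to be proved, whereas in the published approach it is replaced by this holonomy/transversality argument. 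So either cite the theorem as the paper does, or make explicit that your sketch outsources each step to a different deep paper and that the reduction from a general family to the quadratic family is the idea your outline is missing.
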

\begin{rem} A technical note: "non-triviality" is guaranteed for our set of maps $f$ (parametrised by $\delta$) since there exist two maps in this set that are not topologically conjugate. For example, take $f$ with $\delta=2.5$ (non-chaotic) and $f$ with $\delta=3.7$ (chaotic). See~\cite[Sec.~2.8, Sec.~2.13, Sec.~3.1]{Avila-stochastic-Invent} for the precise definitions of "non-trivial" and "quasiquadratic" (those are a bit too technical to state here). Also, see~\cite{Avila-unimodal-Annals}, and~\cite{Lyubich-forty-Modern} for more on this.
\end{rem}
\begin{rem}
We need "Lebesgue almost" (or "except a set of measure zero") in Theorems~\ref{goodLThm} and~\ref{finalThm}, and Proposition~\ref{Avila} since the following (anomalous) examples are known, see~\cite{Hofbauer-asymptotic-MathPhys} and~\cite{Johnson-singular-MathPhys}: for a quadratic map $T_\gamma(x)=\gamma x(1-x)$ (parametrised by $\gamma$), there exists $\gamma$ such that $T_\gamma$ does not have an attracting periodic orbit and shows a chaotic behaviour, but does not have an acim. We expect that for our $f$, we obtain examples of the same properties (although we have not checked yet). The point is that we do not care such anomalous cases since the $\gamma$ values corresponding to such examples are of Lebesgue measure zero and our approach in this (and the last) section is probabilistic.  
\end{rem}

\section{Ergodic properties: an iteratively expansive case}
To finish the paper, we consider our second model ("piecewise linear model") in this section. The line of the argument below is similar to that in the last two section (so we omit some details) but the analysis here is much easier since we can apply Proposition~\ref{LasotaIterative} (a classical result of Lasota and Yorke~\cite[Thm.~3]{LasotaYorke-expansive-Trans} for the existence of an acim for an iteratively expansive map).  

First, we have that $h$ has two slopes, namely, $\beta-\frac{\mu k}{1-k Y_0}$ for $0\leq Y \leq Y_0$ and $\beta$ for $Y_0 < Y \leq \frac{1}{k}$. We write $\beta_1:=\beta-\frac{\mu k}{1-k Y_0}$ to ease the notation. Here, we assume $h \in \tilde{G}$. Then, by Theorem~\ref{main2}, $h$ has an odd period cycle (hence chaotic) if and only if $\frac{(1+\beta)}{1-k Y_0}{k}<\mu$. An easy computation shows that the last inequality is equivalent to $\beta_1<-1$. By the argument in the last two sections, if the critical orbit is not chaotic we can (easily) predict the future GDP level. Therefore, we focus on the (chaotic) case where $\beta_1<-1$. Since we know that $0<\beta<1$, to make $h$ iteratively expansive, it is enough to show that: (1)~$f^n(Y)$ does not stay in $[Y_0, 1/k]$ "for a long time" for any $Y\in [0,1/k]$, (2)~$|\beta_1|$ is large compared to $\beta$. (Expansions win against contractions.) Now, we show
\begin{lem}\label{cannotStayLong}
Let $Y\in [Y_0,\frac{1}{k}]$. If $\beta^n < Y_0 k$ for some $n\in \mathbb{N}$,  then $h^{i}(Y)\in [0, Y_0]$ for some $i\leq n$.
\end{lem}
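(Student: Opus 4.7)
The plan is to exploit the very simple structure of $h$ on its right branch: for $Y \in [Y_0, 1/k]$ the map acts as $h(Y) = \beta Y$, so as long as an orbit stays in the right branch it contracts by a factor of $\beta$ at every step. The standing assumption $0 < \beta < 1$ then forces the orbit to eventually fall below $Y_0$, and the explicit threshold $\beta^n < Y_0 k$ is calibrated so that this drop happens within $n$ iterations even for the worst possible starting point $Y = 1/k$.

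The argument I have in mind is a short dichotomy. Either there exists some $j \in \{0, 1, \ldots, n-1\}$ with $h^j(Y) \leq Y_0$, in which case $i = j$ already witnesses $h^i(Y) \in [0, Y_0]$ with $i \leq n-1 < n$; or else $h^j(Y) > Y_0$ for every $j = 0, 1, \ldots, n-1$, in which case a one-line induction based on the second clause of \eqref{DayShaferDyn} yields $h^j(Y) = \beta^j Y$ for all $j \leq n$. Combining $Y \leq 1/k$ with the hypothesis $\beta^n < Y_0 k$ then gives
\[
h^n(Y) \;=\; \beta^n Y \;\leq\; \beta^n/k \;<\; Y_0,
\]
so $h^n(Y) \in [0, Y_0]$ and $i = n$ works.

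There is essentially no hard step here — the lemma is a geometric-decay observation on the right branch, dressed up with a trivial case split. The only small points worth a sentence of care are (i) that the induction is allowed to invoke the right branch of the piecewise-defined $h$ at each step, which is guaranteed precisely by the running supposition $h^j(Y) > Y_0$, and (ii) that the boundary case $Y = Y_0$ is absorbed by the first arm of the dichotomy at $j = 0$, since then $h^0(Y) = Y_0 \in [0, Y_0]$ already.
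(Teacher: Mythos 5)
Your proof is correct and is essentially the paper's argument: both rest on the observation that while the orbit stays in $[Y_0,1/k]$ the map acts as multiplication by $\beta$, so $h^n(Y)=\beta^n Y\leq \beta^n/k<Y_0$ by the hypothesis $\beta^n<Y_0k$. The paper merely packages this as a proof by contradiction (assume $h^i(Y)\in[Y_0,1/k]$ for all $i\leq n$), whereas you phrase it as a direct dichotomy; the content is the same.
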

\begin{proof}
Let $Y_0 \leq Y\leq 1/k$ and $\beta^n < Y_0 k$ for some $n\in \mathbb{N}$. Now, suppose that $h^i(Y)\in [Y_0, 1/k]$ for all $i\leq n$. 
Then, we have $h^n(Y)=\beta^n Y\in [Y_0, 1/k]$. Since $Y\leq 1/k$, this forces $Y_0\leq \frac{\beta^n}{k}$. This is a contradiction.
\end{proof}
Let $i^*$ be the minimum $i$ with $h^{i}(1/k)\in [0, Y_0]$ in Lemma~\ref{cannotStayLong}. Then we obtain
\begin{prop}\label{iterExpProp}
The function $h$ is iteratively expansive if $|\beta_1|\beta^{i^*-1}>1$. 
\end{prop}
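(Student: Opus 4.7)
The plan is to take $n = i^*$ and verify that $|(h^{i^*})'(Y)| > 1$ for $\lambda$-almost every $Y \in [0, 1/k]$. Because $h$ is piecewise linear with slopes $\beta_1$ on $[0, Y_0]$ and $\beta$ on $(Y_0, 1/k]$, the chain rule gives
\begin{equation*}
|(h^{i^*})'(Y)| = |\beta_1|^{a(Y)} \, \beta^{\,i^* - a(Y)},
\end{equation*}
where $a(Y)$ denotes the number of indices $j \in \{0,1,\ldots,i^*-1\}$ for which $h^j(Y) \in [0, Y_0]$. Since $|\beta_1| > 1 > \beta > 0$, this expression is monotone increasing in $a(Y)$, so its minimum over $a(Y) \geq 1$ equals $|\beta_1|\beta^{i^*-1}$, which exceeds $1$ by hypothesis. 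It therefore suffices to establish that $a(Y) \geq 1$ almost everywhere.

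The verification of $a(Y) \geq 1$ splits into two cases. If $Y \in [0, Y_0]$, the index $j = 0$ already contributes, so $a(Y) \geq 1$ trivially. If $Y \in (Y_0, 1/k]$, then by the very definition of $i^*$ we have $\beta^{i^*} \leq Y_0 k$, so Lemma~\ref{cannotStayLong} applied with $n = i^*$ produces some $i \leq i^*$ with $h^i(Y) \in [0, Y_0]$. On a full-measure subset of $(Y_0, 1/k]$ this first visit occurs at an index $i \leq i^* - 1$, so $a(Y) \geq 1$ and the chain-rule product bound
\begin{equation*}
|(h^{i^*})'(Y)| \geq |\beta_1|\,\beta^{i^* - 1} > 1
\end{equation*}
follows immediately from the monotonicity observation above.

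The main obstacle I anticipate is the edge case in which the first entry to $[0, Y_0]$ happens exactly at $i = i^*$, which corresponds to $Y \in (Y_0/\beta^{i^* - 1}, 1/k]$; here $a(Y) = 0$ for the window used by $(h^{i^*})'(Y)$ and the naive estimate fails. To deal with it I would either pass to $n = i^* + 1$ on this sub-interval and track one further iterate past the re-entry, or, more cleanly, sharpen Lemma~\ref{cannotStayLong}: any orbit re-entering $(Y_0, 1/k]$ from the expanding piece does so at a value bounded above by $h(0) = \mu k Y_0/(1 - k Y_0)$, which is strictly less than $1/k$ except at the extremal parameter $\mu = (1 - k Y_0)/(k^2 Y_0)$. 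Away from that boundary value, the subsequent block in the contracting piece has length at most $i^* - 1$, so the uniform choice $n = i^*$ is already enough, and only the Lebesgue-null orbit of $Y = 1/k$ itself has to be excluded.
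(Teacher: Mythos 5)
Your core computation is the same one the paper uses: combine Lemma~\ref{cannotStayLong} with the chain rule, writing $|(h^{i^*})'(Y)|=|\beta_1|^{a(Y)}\beta^{i^*-a(Y)}$ and arguing that every window of length $i^*$ contains at least one visit to $[0,Y_0]$. The edge case you flag is, however, not a negligible technicality, and your proof as written does not close it. By minimality of $i^*$ we have $\beta^{i^*-1}>kY_0$, so the set of points whose first visit to $[0,Y_0]$ occurs exactly at step $i^*$ is the whole interval $\bigl(Y_0\beta^{-(i^*-1)},\,1/k\bigr]$, which has positive Lebesgue measure; on it $a(Y)=0$ and $|(h^{i^*})'(Y)|=\beta^{i^*}<1$. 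So the sentence ``on a full-measure subset of $(Y_0,1/k]$ this first visit occurs at an index $i\le i^*-1$'' is false, and ``only the Lebesgue-null orbit of $Y=1/k$ has to be excluded'' is likewise false. Worse, the configuration recurs: any $x\in[0,Y_0]$ with $h(x)\in\bigl(Y_0\beta^{-(i^*-1)},1/k\bigr]$ produces a block of one $\beta_1$-step followed by $i^*$ $\beta$-steps, whose derivative product is $|\beta_1|\beta^{i^*}$, and the hypothesis $|\beta_1|\beta^{i^*-1}>1$ does not force this to exceed $1$.

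Neither of your repairs works. Passing to $n=i^*+1$ on a sub-interval is not admissible (iterative expansiveness requires one fixed iterate expanding almost everywhere, and patching different $n$ on different sets does not yield that), and in any case it only produces the bound $|\beta_1|\beta^{i^*}$, not $>1$. The ``sharpened'' Lemma~\ref{cannotStayLong} is simply wrong: re-entry values into $(Y_0,1/k]$ are indeed bounded by $h(0)=\mu kY_0/(1-kY_0)$, but the subsequent $\beta$-block has length at most $i^*-1$ only if $h(0)\le Y_0\beta^{-(i^*-1)}$, a much stronger requirement than $h(0)<1/k$. The paper's own example $(\beta,k,Y_0,\mu)=(0.6,1.1,0.2,3)$ has $h(0)\approx 0.846<1/k\approx 0.909$ while $Y_0\beta^{-2}\approx 0.556$, so re-entering orbits do spend the full $i^*=3$ steps on the $\beta$-branch. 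In fact the hypothesis together with the standing assumptions of this section ($h\in\tilde{\mathfrak{G}}$, $\beta_1<-1$) cannot suffice by itself: for $(\beta,k,Y_0,\mu)=(0.5,1,0.6,0.65)$ one has $\beta_1=-1.125<-1$, $i^*=1$, $|\beta_1|\beta^{i^*-1}=1.125>1$, yet $h$ has the attracting two-cycle $\{0.312,0.624\}$ with multiplier $\beta\beta_1=-0.5625$, so no iterate of $h$ expands on a set of full measure. Any complete argument must therefore control how often an orbit can follow the slow pattern (one $\beta_1$-step followed by $i^*$ $\beta$-steps), e.g.\ by showing the interval $\bigl(Y_0\beta^{-(i^*-1)},1/k\bigr]$ cannot be re-entered on consecutive excursions for the parameters at hand; this needs input beyond $|\beta_1|\beta^{i^*-1}>1$. (To be fair, the paper's own one-line proof relies on the same counting and is silent on exactly this point, so the gap you noticed is real --- but your proposed fixes do not repair it.)
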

\begin{proof}
Recall that a function $g$ is iteratively expansive if it is piecewise $C^2$ and $|(g^m(x))'|>1$ for some positive integer $m\geq 1$ for $\lambda$-almost all $x$. First, it is clear that $h$ is piecewise $C^2$. Second, since $\beta_1<-1$ and $0<\beta<1$, Lemma~\ref{cannotStayLong} gives the desired result. (Starting from any $Y\in [Y_0, 1/k]$, $Y$ must go back to $[0, Y_0]$ in at most $i^*$ steps.)
\end{proof}

Now, recall~\cite[Thm.~3]{LasotaYorke-expansive-Trans} and~\cite[Cor.~8.4]{Day-dynamics-book}:
\begin{prop}\label{LasotaIterative}
Let $g$ be an iteratively expansive map from a compact interval to itself. Then $g$ has an acim $\zeta$ such that $g$ is ergodic with respect to $\zeta$. 
\end{prop}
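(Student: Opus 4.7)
The plan is to reduce the iteratively expansive case to the purely expansive case by passing to the iterate $g^n$, and then invoke the classical Lasota--Yorke machinery on the Perron--Frobenius transfer operator. By hypothesis there is an integer $n\geq 1$ with $|(g^n)'(x)|>1$ for $\lambda$-almost every $x$, and since $g$ is piecewise $C^2$ so is $g^n$; thus $g^n$ sits in the strictly expansive setting of the original Lasota--Yorke theorem.

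The analytic core is the following. Consider the transfer (Perron--Frobenius) operator of $g^n$,
\[
(Lh)(y) \;=\; \sum_{x\in (g^n)^{-1}(y)} \frac{h(x)}{|(g^n)'(x)|},
\]
acting on the Banach space $BV(I)$ of functions of bounded variation on $I$. Using $|(g^n)'|>1$ together with a bounded-distortion estimate on each monotone branch of $g^n$, one establishes a Lasota--Yorke inequality
\[
\|Lh\|_{BV} \;\leq\; \rho\,\|h\|_{BV} + C\,\|h\|_{L^1}
\]
with $\rho<1$. By Ionescu-Tulcea--Marinescu (quasi-compactness of $L$ on $BV$), the operator $L$ admits a non-negative BV fixed point $\xi^{*}$ with $\int \xi^{*}\,d\lambda=1$, so that $\nu := \xi^{*}\lambda$ is an acim for $g^n$.

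To pass from $g^n$ back to $g$, symmetrize via
\[
\zeta \;:=\; \frac{1}{n}\sum_{i=0}^{n-1}(g^i)_{*}\nu.
\]
A short telescoping computation gives $g_{*}\zeta=\zeta$, and each push-forward $(g^i)_{*}\nu$ remains absolutely continuous because $g$ is piecewise $C^2$ with non-vanishing derivative off a finite set; hence $\zeta$ is an acim for $g$. For ergodicity, apply the ergodic decomposition to $\zeta$ with respect to $g$: every component is itself absolutely continuous and $g$-invariant, so selecting any one component delivers the claimed ergodic acim.

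The main obstacle, as with every realization of Lasota--Yorke, is the variation estimate yielding $\rho<1$: the number of monotone branches of $g^n$ grows with $n$, so the expansion constant and the error term must be controlled uniformly against this branch count, using both $|(g^n)'|>1$ and the piecewise $C^2$ regularity to bound distortion. Once the Lasota--Yorke inequality is in hand, the remaining steps---quasi-compactness, symmetrization, and ergodic decomposition---are essentially mechanical.
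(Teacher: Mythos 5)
You should know that the paper does not actually prove this proposition: it is invoked as a classical result, cited as \cite[Thm.~3]{LasotaYorke-expansive-Trans} and \cite[Cor.~8.4]{Day-dynamics-book}. What you have written is essentially a reconstruction of that classical proof, and the existence half is sound: pass to the expansive iterate $g^n$, run the Perron--Frobenius operator on $BV$ with a Lasota--Yorke inequality to get an acim $\nu$ for $g^n$, and average the pushforwards $(g^i)_{*}\nu$, $0\le i\le n-1$, to get a $g$-invariant acim $\zeta$ (absolute continuity of the pushforwards is fine because $|(g^n)'|>1$ a.e.\ forces $g'\neq 0$ a.e., so $g$ pulls null sets back to null sets on each monotone branch). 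The technical caveats you flag are the right ones, with one addition: the classical inequality gives $\rho=2/s$ with $s$ the (uniform, not just a.e.) expansion constant of the iterate, so when $1<s\le 2$ one must pass to a yet higher iterate of the already-expansive map before $\rho<1$; this is standard but should be said.

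The genuine gap is the ergodicity step. The assertion that ``every component of the ergodic decomposition of $\zeta$ is itself absolutely continuous'' is false as a general principle: for the identity map, Lebesgue measure is an invariant acim whose ergodic components are Dirac masses. Absolute continuity of the ergodic pieces is exactly the point that has to be earned, and it does not come from the abstract ergodic decomposition theorem; it comes from the quasi-compactness you already established. The standard repair is spectral: the fixed space of the transfer operator in $BV$ is finite-dimensional, and one shows (Li--Yorke 1978, or Hofbauer--Keller) that it is spanned by finitely many non-negative densities with pairwise disjoint supports, each the density of an \emph{ergodic} acim, with every acim of $g$ --- in particular your $\zeta$ --- a convex combination of these finitely many measures. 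Selecting one of them yields the ergodic acim asserted in the proposition. So your construction is the right one, but the last paragraph needs this finite-multiplicity argument in place of the bare appeal to ergodic decomposition.
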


In the following, we consider the numerical example in Section~\ref{OPCPiecewise} (Remark~\ref{iterativeExample}) to illustrate our method. So, we fix $(\beta, k, Y_0)=(0.6,1.1.0.2)$. We have seen that: (1) If $0.42<\mu \leq 3.22$, then $h\in \tilde{G}$, (2) There exists an odd period cycle if and only if $1.93<\mu$. Now, using the critical orbit, we obtain the bifurcation diagram (Figure~\ref{bifurcationDS}).  
\begin{figure}[h!]
	\begin{center}
   	\includegraphics[scale=0.7]{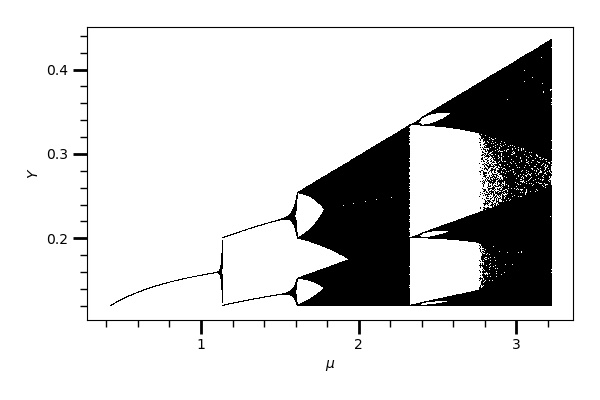}
	\end{center}
    \caption{The bifurcation diagram of $h$}\label{bifurcationDS}
\end{figure}
From the diagram, we see a similar pattern of transitions from attracting cycles of various periods to chaos as $\mu$ becomes large as in the last section. In particular, we see a large  chaotic region for $1.93 < \mu$ (as we expect from Theorem~\ref{main2}). 

Now, we apply Proposition~\ref{iterExpProp}. Note that $h(1/k)=\beta/k \approx 0.545>0.2$, $h^2(1/k)=\beta^2 /k = 0.327>0.2$, $h^3(1/k)= \beta^3/k\approx 0.196 <0.2$. So $i^*=3$, and a simple calculation shows that $|\beta_1|\beta^{2}>1$ is equivalent to 
$\mu > 2.395$ (roughly). Thus, by Proposition~\ref{iterExpProp}, if $\mu > 2.395$, $h$ is iteratively expansive. Then, using Proposition~\ref{LasotaIterative}, $h$ has an acim $\zeta$ and $h$ is ergodic with respect to $\zeta$ for $\mu>2.395$. We conclude that we can use Birkhoff's ergodic theorem to predict the future GDPs for $\mu>2.395$. Note that from the bifurcation diagram, we see that a period three cycle appears at $\mu\approx2.3$. 

Summarising the argument so far (and using the bifurcation diagram), we (roughly) obtain: (1) For $0.42<\mu<1.6$ and for $2.395<\mu\leq 3.22$, we can predict the future GDPs. (2) For $1.6<\mu< 2.395$, we see a chaotic behaviour. In the following, we show that for $1.6<\mu<2.395$, $h$ is iteratively expansive using some numerical calculations. It is known that $h^n$ tends to be (more) expansive as $n$ becomes large if $h$ is iteratively expansive. Here, we take $n=5000$ and obtain Figures~\ref{expansive16},~\ref{expansive161}, and~\ref{expansive18} for $\mu=1.6$, $1.61$, and $1.8$ respectively. We also obtain Figure~\ref{expansive161zoom}, that is a close-up view of Figure~\ref{expansive161}.

\begin{figure}[h!]
	\begin{center}
   	\includegraphics[scale=0.55]{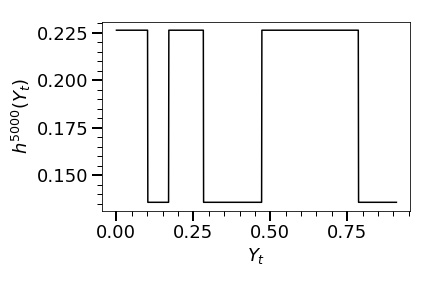}
	\end{center}
    \caption{The graph of $h^{5000}(Y)$ for $\mu=1.6$}\label{expansive16}
\end{figure}

\begin{figure}[h!]
	\begin{center}
   	\includegraphics[scale=0.55]{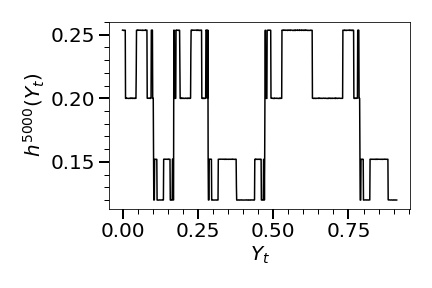}
	\end{center}
    \caption{The graph of $h^{5000}(Y)$ for $\mu=1.61$}\label{expansive161}
\end{figure}

\begin{figure}[h!]
	\begin{center}
   	\includegraphics[scale=0.55]{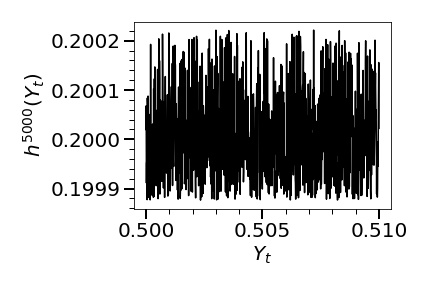}
	\end{center}
    \caption{The (close-up) graph of $h^{5000}(Y)$ for $\mu=1.61$}\label{expansive161zoom}
\end{figure}

\begin{figure}[h!]
	\begin{center}
   	\includegraphics[scale=0.55]{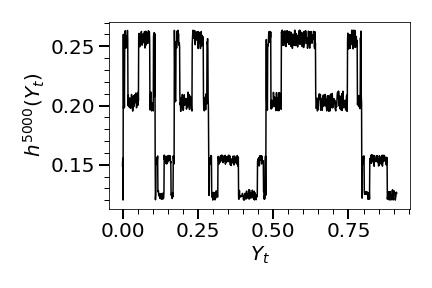}
	\end{center}
    \caption{The graph of $h^{5000}(Y)$ for $\mu=1.8$}\label{expansive18}
\end{figure}

From these figures, we see that: (1) For $\mu=1.6$, $h^{5000}$ is not expansive (its graph has a flat part, see Figure~\ref{expansive16}), (2) If we increase $\mu$ slightly, say, take $\mu=1.61$, then $h^{5000}$ looks non-expansive in Figure~\ref{expansive161}. However, if we look closer, we find that the graph has wiggles everywhere (see Figure~\ref{expansive161zoom}). So, we conclude that $h^{5000}$ is expansive for $\mu=1.61$, (3) If we take $\mu$ larger, the graph of $h^{5000}$ becomes more wiggly, hence expansive as seen in Figure~\ref{expansive18}. Overall, we conclude that for $1.6<\mu<2.395$, $h$ is iteratively expansive. Thus, by the same argument, we can guess the future GDP levels for $1.6<\mu<2.395$.

Now, using the first $2000$ terms of $f^n(Y_0)$ to estimate the ergodic sum for each $\mu$, we obtain
\begin{thm}\label{iterativeThm}
For $0.42<\mu\leq 3.22$, the ergodic sums of $h$ are as in Figure~\ref{AverageDS}. (Possibly except some $\mu$ values whose total Lebesgue measure is $0$.)
\end{thm}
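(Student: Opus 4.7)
The plan is to partition the parameter range $0.42 < \mu \leq 3.22$ into three subregions according to the qualitative behavior of the critical orbit $\{h^n(Y_0)\}$ visible in the bifurcation diagram (Figure~\ref{bifurcationDS}), and then handle each subregion by the appropriate tool. For the non-chaotic region $0.42 < \mu \lesssim 1.6$, the critical orbit is attracted to a periodic cycle (of period $2^k$ for some $k$), so by Proposition~\ref{criticalOrbit}-style reasoning (or rather, by direct observation that $h^n(Y_0)$ converges to a finite union of points), the ergodic sum equals the arithmetic mean of the attracting cycle and can be computed from a finite number of iterates. For the upper region $2.395 < \mu \leq 3.22$, I would invoke the rigorous chain Proposition~\ref{iterExpProp} $\Rightarrow$ Proposition~\ref{LasotaIterative} $\Rightarrow$ Proposition~\ref{birkhoff} to conclude that $h$ has an acim $\zeta$, is ergodic with respect to $\zeta$, and the ergodic sums $\frac{1}{n}\sum_{k=0}^{n-1} h^k(Y)$ converge to $\int h\, d\zeta$ for $\zeta$-a.e.\ (and hence $\lambda$-a.e.) initial $Y$.

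The heart of the argument is the intermediate window $1.6 < \mu < 2.395$, where the hypothesis $|\beta_1|\beta^{i^*-1} > 1$ of Proposition~\ref{iterExpProp} fails but the critical orbit is chaotic. Here I would argue that $h$ is still iteratively expansive, just at some higher iterate $N > i^*$: intuitively, although a single bounce through $[Y_0, 1/k]$ under $h$ contracts by $\beta$, over many iterates the trajectory spends enough time in $[0, Y_0]$ (where the slope $\beta_1$ expands) that the accumulated derivative eventually exceeds $1$ almost everywhere. Rather than prove this analytically (which would require tracking the itinerary statistics), I would support the claim numerically by plotting $h^{5000}$ for representative $\mu$ values (Figures~\ref{expansive16}--\ref{expansive18}) and checking that the graph is everywhere wiggly (has slope of absolute value $>1$ on every small subinterval). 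Once iterative expansivity is accepted for this window, Propositions~\ref{LasotaIterative} and~\ref{birkhoff} again deliver the ergodic sum.

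Having established that for each $\mu$ in $(0.42, 3.22]$ (outside possibly a small set) there exists an acim $\zeta_\mu$ and Birkhoff averaging holds, the concrete values plotted in Figure~\ref{AverageDS} would be obtained by running the iteration $\frac{1}{2000}\sum_{k=0}^{1999} h^k(Y_0)$ for a fine grid of $\mu$ values. The justification that $Y_0$ is a valid initial condition (i.e.\ lies in the full-measure set on which Birkhoff's average converges) follows either because $Y_0$ is the turning point and Proposition~\ref{criticalOrbit} ensures the critical orbit represents the generic dynamics, or (in the attracting-cycle region) because $Y_0$ lies in the basin of the attractor.

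The main obstacle, as in the proof of Theorem~\ref{finalThm}, is the exceptional set of Lebesgue measure zero. Parameters analogous to those constructed by Hofbauer--Johnson for the quadratic family may exist for our piecewise linear family $h$, namely $\mu$ values at which $h$ is chaotic but possesses no acim. I would include these in the ``possibly except'' clause of the theorem statement. The rigorous gap I would flag as open is (i) a fully analytic proof that iterative expansivity holds throughout $1.6 < \mu < 2.395$, and (ii) an estimate of the measure of the exceptional $\mu$-set; both remain tractable but lie beyond the scope of this paper.
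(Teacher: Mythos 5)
Your proposal follows essentially the same route as the paper: the same three-way split of the parameter range (attracting cycles for $0.42<\mu\lesssim 1.6$, the rigorous chain Proposition~\ref{iterExpProp} $\Rightarrow$ Proposition~\ref{LasotaIterative} $\Rightarrow$ Proposition~\ref{birkhoff} for $\mu>2.395$, and a numerical check of expansiveness of $h^{5000}$ via the ``wiggliness'' of its graph for the intermediate window $1.6<\mu<2.395$), followed by estimating each ergodic sum with $2000$ iterates of the orbit of $Y_0$ and allowing a Lebesgue-null exceptional set of $\mu$ values. The only cosmetic difference is your appeal to Proposition~\ref{criticalOrbit}-style reasoning (which strictly applies to $S$-unimodal maps, not the piecewise linear $h$), but since you immediately replace it by direct observation of the attracting cycle, the argument matches the paper's.
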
 

\begin{figure}[h!]
	\begin{center}
   	\includegraphics[scale=0.55]{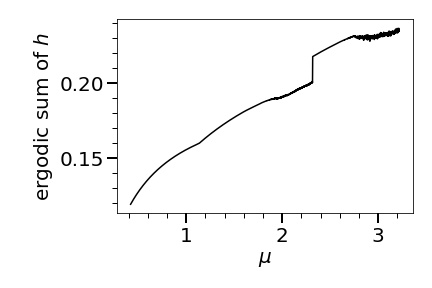}
	\end{center}
    \caption{Average GDP levels under $h$}\label{AverageDS}
\end{figure}
\section*{Acknowledgements}
This research was supported by a JSPS grant-in-aid for early-career scientists (22K13904) and an Alexander von Humboldt Japan-Germany joint research fellowship. 
\bibliography{econbib}

\end{document}